\newtheorem{theorem}{Theorem}
\newtheorem{lemma}{Lemma}
\newcommand{\IR}{\mathbb{R}}
\newcommand{\IS}{\mathbb{S}}
\newcommand{\skel}{\mathord{\it skel}}
\newcommand{\Ann}{\mathord{\it Ann}}
\newcommand{\Shell}{\mathord{\it Shell}}
\newcommand{\Dil}{\mathord{\it Dil}}
\newcommand{\qed}{\rule{0.5em}{1.5ex}}
\newcommand{\fqed}{{\hfill~\qed}}
\newenvironment{proof}{{\noindent \bf Proof.}}
                      {{\hfill \fqed} \vspace{1em}}
\title{On the Stretch Factor of Convex Polyhedra whose Vertices are 
      (Almost) on a Sphere}
\author{
Prosenjit Bose\thanks{School of Computer Science, Carleton University, 
    Ottawa, Canada. These authors were supported by the 
    Natural Sciences and Engineering Research Council of Canada. 
    D.H.\ was supported by an Ontario Graduate Scholarship.}
\and 
Paz Carmi\thanks{Department of Computer Science, Ben-Gurion University 
   of the Negev, Israel.} 
\and 
Mirela Damian\thanks{Department of Computer Science, Villanova 
   University, Villanova, PA 19403, USA. Supported by NSF grant 
   CCF-1218814.}
\and 
Jean-Lou De Carufel\thanks{School of Electrical Engineering and 
    Computer Science, University of Ottawa, Canada.} 
\and  
Darryl Hill\footnotemark[1] 
\and 
Anil Maheshwari\footnotemark[1] 
\and
Yuyang Liu\footnotemark[1] 
\and 
Michiel Smid\footnotemark[1] 
} 
\date{\today}
\begin{document} 

\maketitle 

\begin{abstract} 
Let $P$ be a convex simplicial polyhedron in $\IR^3$. The skeleton of 
$P$ is the graph whose vertices and edges are the vertices and edges 
of $P$, respectively. We prove that, if these vertices are on a 
sphere, the skeleton is a $(0.999 \cdot \pi)$-spanner. If the 
vertices are very close to a sphere, then the skeleton is not 
necessarily a spanner. For the case when the boundary of $P$ is between 
two concentric spheres of radii $r$ and $R$, where $R>r>0$, and the 
angles in all faces are at least $\theta$, we prove that the skeleton 
is a $t$-spanner, where $t$ depends only on $R/r$ and $\theta$.  
One of the ingredients in the proof is a tight upper bound on 
the geometric dilation of a convex cycle that is contained in an 
annulus. 
\end{abstract}

\section{Introduction}
Let $S$ be a finite set of points in Euclidean space and let $G$ be a 
graph with vertex set $S$. We denote the Euclidean distance between any 
two points $p$ and $q$ by $|pq|$. Let the length of any edge $pq$ in $G$ 
be equal to $|pq|$, and define the length of a path in $G$ to be the sum 
of the lengths of the edges on this path. For any two vertices $p$ and 
$q$ in $G$, we denote by $|pq|_G$ the length of a shortest path in $G$ 
between $p$ and $q$. For a real number $t \geq 1$, we say that $G$ is a 
\emph{$t$-spanner} of $S$, if $|pq|_G \leq t |pq|$ for all vertices $p$ 
and $q$. The \emph{stretch factor} of $G$ is the smallest value of $t$ 
such that $G$ is a Euclidean $t$-spanner of $S$. See~\cite{ns-gsn-07} 
for an overview of results on Euclidean spanners.

It is well-known that the stretch factor of the Delaunay triangulation
in $\IR^2$ is bounded from above by a constant. The first proof of this
fact is due to Dobkin \emph{et al.}~\cite{dfs-dgaag-90}, who obtained
an upper bound of $(1+\sqrt{5})\pi/2 \approx 5.08$. The currently best
known upper bound, due to Xia~\cite{x-sfdtl-13}, is $1.998$.

Let $P$ be a convex simplicial polyhedron in $\IR^3$, i.e., all faces
of $P$ are triangles. The \emph{skeleton} of $P$, denoted by $\skel(P)$, 
is the graph whose vertex and edge sets are equal to the vertex and edge 
sets of $P$. 

Since there is a close connection between Delaunay triangulations in 
$\IR^2$ and convex hulls in $\IR^3$, it is natural to ask if the skeleton 
of a convex simplicial polyhedron in $\IR^3$ has a bounded stretch 
factor. By taking a long and skinny convex polyhedron, however, this is 
clearly not the case. 

In 1987, Raghavan suggested, in a private communication to
Dobkin \emph{et al.}~\cite{dfs-dgaag-90}, that the skeleton of a convex 
simplicial polyhedron, all of whose vertices are on a sphere, has 
bounded stretch factor. Consider such a polyhedron $P$. By a 
translation and scaling, we may assume that the vertex set $S$ of $P$ 
is on the unit-sphere 
\[ \IS^2 = \{ (x,y,z) \in \IR^3 : x^2 + y^2 + z^2 = 1 \} .  
\] 
Observe that this does not change the stretch factor of $P$'s skeleton.  
It is well-known that the convex hull of $S$ (i.e., the polyhedron $P$) 
has the same combinatorial structure as the spherical Delaunay 
triangulation of $S$; this was first  observed by 
Brown~\cite{b-gtfga-80}. Based on this, 
Bose \emph{et al.}~\cite{bps-chpss-14} showed that the proof of   
Dobkin \emph{et al.}~\cite{dfs-dgaag-90} can be modified to prove that 
the skeleton of $P$ is a $t$-spanner of its vertex set $S$, where 
$t= \frac{3 \pi}{2} (1+\pi/2) \approx 12.115$.  

In Section~\ref{seconsphere}, we improve the upper bound on the stretch 
factor to $0.999 \cdot \pi \approx 3.138$. Our proof considers any two 
vertices $p$ and $q$ of $P$ and the plane $H_{pq}$ through $p$, $q$, 
and the origin. The great arc on $\IS^2$ connecting $p$ and $q$ is 
contained in $H_{pq}$. The path on the convex polygon 
$Q_{pq} = P \cap H_{pq}$ that is on the same side of $pq$ as this 
great arc passes through a sequence of triangular faces of $P$. 
An edge-unfolding of these faces results in a sequence of triangles in 
a plane, whose circumdisks form a \emph{chain of disks}, as defined 
by Xia~\cite{x-sfdtl-13}. The results of Xia then imply the upper 
bound of $0.999 \cdot \pi$ on the stretch factor of the skeleton of $P$. 

A natural question is whether a similar result holds for a convex 
simplicial polyhedron whose vertices are ``almost'' on a sphere. 
In Section~\ref{secalmost}, we show that this is not the case: We give 
an example of a set of points that are very close to a sphere, such that 
the skeleton of their convex hull has an unbounded stretch factor. 

In Section~\ref{secACSH}, we consider convex simplicial polyhedra $P$ 
whose boundaries are between two concentric spheres of radii $r$ and 
$R$, where $R>r>0$, that contain the common center of these spheres, 
and in which the angles in all faces are at least $\theta$. 
We may assume that the two spheres are centered at the origin. We 
present an improvement of a result by 
Karavelas and Guibas~\cite{kg-skgsa-01}, i.e., we show that for any two 
vertices $p$ and $q$, their shortest-path distance in the skeleton of 
$P$ is at most $(1 + 1/\sin(\theta/2))/2$ times their shortest-path 
distance along the surface of $P$. The latter shortest-path distance is 
at most the shortest-path distance between $p$ and $q$ along the 
boundary of the convex polygon $Q_{pq}$ which is obtained by 
intersecting $P$ with the plane through $p$, $q$, and the origin. This 
convex polygon contains the origin and its boundary is contained 
between the two circles of radii $r$ and $R$ that are centered at the 
origin. Gr{\"u}ne~\cite[Lemma~2.40]{g-gdhd-06} has 
shown that the stretch factor of any such polygon is at most 
\[ \frac{\pi R/r}{2 - (\pi/2) (R/r-1)} ,  
\] 
provided that $R/r < 1 + 4/\pi$. In Section~\ref{secannulus}, we improve 
this upper bound to 
\[ \sqrt{(R/r)^2-1} + (R/r) \arcsin(r/R) , 
\]
which is valid, and tight, for all $R>r>0$. As a result, the stretch 
factor of the skeleton of $P$ is at most 
\[ \frac{1 + 1/\sin(\theta/2)}{2}   
   \left( \sqrt{(R/r)^2-1} + (R/r) \arcsin(r/R) \right) .
\]

\section{Convex Polyhedra whose Vertices are on a Sphere}  
\label{seconsphere} 
In this section, we prove an upper bound on the stretch factor of the 
skeleton of a convex simplicial polyhedron whose vertices are on a 
sphere. As we will see in Section~\ref{subsecSFCP}, our 
upper bound follows from Xia's upper bound in~\cite{x-sfdtl-13} on the 
stretch factor of chains of disks in $\IR^2$. We start by reviewing 
such chains. 

\subsection{Chains of Disks} \label{secCoD}
Let $\mathcal{D} = (D_1,D_2,\ldots,D_k)$ be a sequence of disks in 
$\IR^2$, where $k \geq 2$. For each $i$ with $2 \leq i \leq k$, define
\[ C_i^{i-1} = D_{i-1} \cap \partial D_i ,
\]
i.e., $C_i^{i-1}$ is that part of the boundary of $D_i$ that is contained 
in $D_{i-1}$. Similarly, for each $i$ with $1 \leq i < k$, define 
\[ C_i^{i+1} = D_{i+1} \cap \partial D_i .
\] 
The sequence $\mathcal{D}$ of disks is called a \emph{chain of disks}, 
if  
\begin{enumerate} 
\item for each $i$ with $1 \leq i < k$, the circles $\partial D_i$ and 
      $\partial D_{i+1}$ intersect in exactly one or two points, and  
\item for each $i$ with $2 \leq i < k$, the circular arcs $C_i^{i-1}$ 
      and $C_i^{i+1}$ have at most one point in common. 
\end{enumerate} 
See Figure~\ref{figCoD} for an example. 

\begin{figure}
\begin{center}
\includegraphics[scale=0.7]{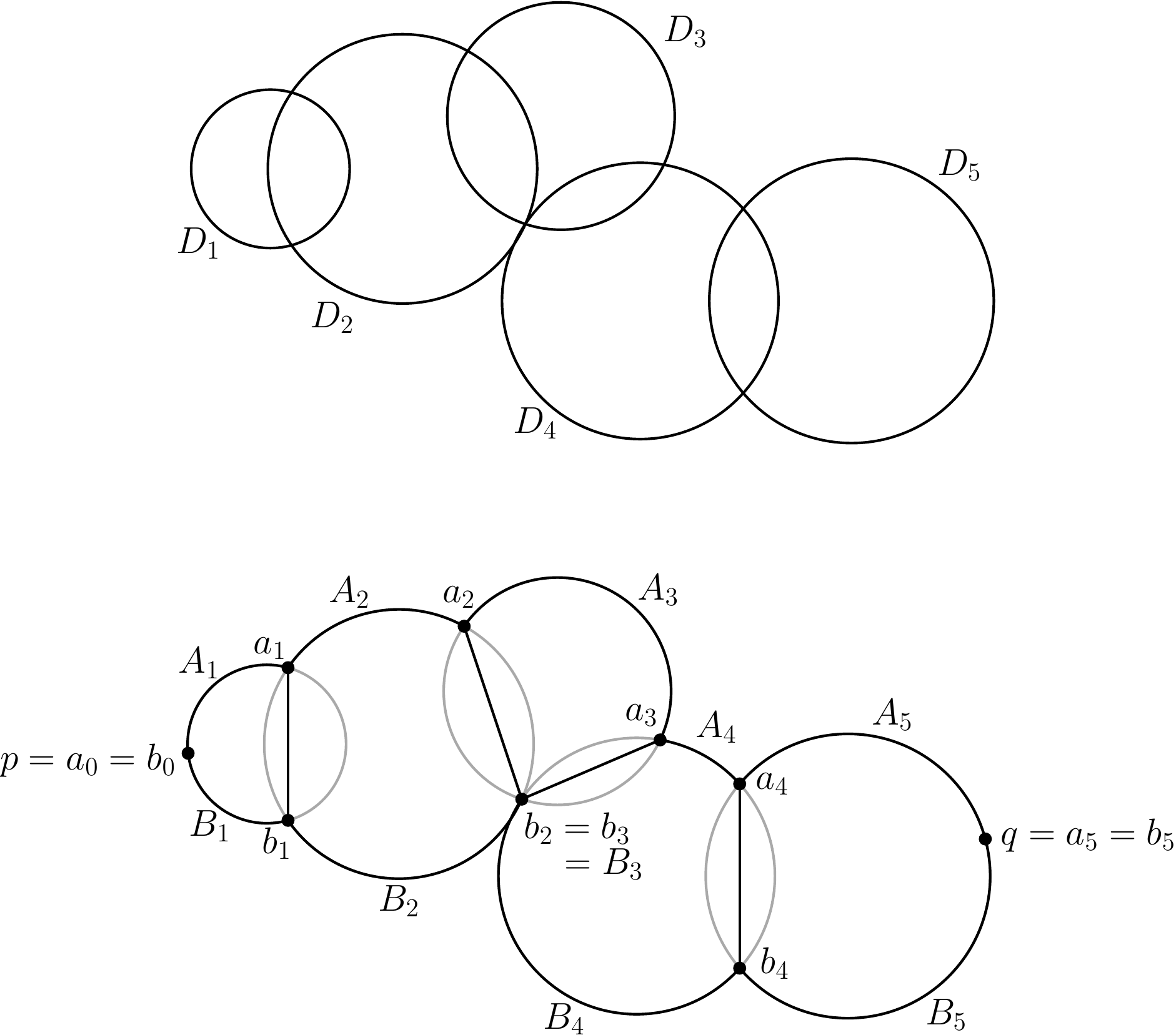}
\end{center}
\caption{The top figure shows a chain 
    $\mathcal{D} = (D_1,D_2,\ldots,D_5)$ of disks. The bottom figure 
    shows the graph $G(\mathcal{D},p,q)$; the edges of this graph are 
    black. The edge $B_3$ has length zero; it consists of just the 
    point $b_2$ (which is equal to $b_3$).}
\label{figCoD}
\end{figure}

Let $p$ and $q$ be two distinct points in the plane such that
\begin{enumerate} 
\item $p$ is on $\partial D_1$ and not in the interior of $D_2$, and  
\item $q$ is on $\partial D_k$ and not in the interior of $D_{k-1}$.  
\end{enumerate} 
For each $i$ with $1 \leq i < k$, let $a_i$ and $b_i$ be the intersection 
points of the circles $\partial D_i$ and $\partial D_{i+1}$, where 
$a_i = b_i$ if these two circles are tangent. We label these intersection 
points in such a way that $a_i$ is on or to the left of the directed 
line from the center of $D_i$ to the center of $D_{i+1}$, and $b_i$ is on 
or to the right of this line. Define $a_0 = p$, $b_0 = p$, $a_k = q$, and 
$b_k = q$. For each $i$ with $1 \leq i \leq k$, let $A_i$ be the circular 
arc on $\partial D_i$ connecting the points $a_{i-1}$ and $a_i$ that 
is completely on the same side of $\Pi$ as $a_{i-1}$ and $a_i$, and 
let $B_i$ be the circular arc on $\partial D_i$ connecting the points 
$b_{i-1}$ and $b_i$ that is on the same side of $\Pi$ as $b_{i-1}$ and 
$b_i$. 

Consider the graph $G(\mathcal{D},p,q)$ with vertex set 
$\{p,a_1,a_2,\ldots,a_{k-1},b_1,b_2,\ldots,b_{k-1},q\}$ and edge set 
consisting of 
\begin{itemize}
\item the circular arcs $A_1,A_2,\ldots,A_k$, 
\item the circular arcs $B_1,B_2,\ldots,B_k$, and 
\item the line segments $a_1 b_1$, $a_2 b_2$, \ldots , $a_{k-1} b_{k-1}$.
\end{itemize} 
Figure~\ref{figCoD} shows an example. 
 
For each $i$ with $1 \leq i \leq k$, the lengths of the edges $A_i$ and 
$B_i$ are equal to the lengths $|A_i|$ and $|B_i|$ of these arcs,
respectively. For each $i$ with $1 \leq i < k$, the length of the edge 
$a_i b_i$ is equal to $|a_i b_i|$. The length of a shortest path in 
$G(\mathcal{D},p,q)$ is denoted by $|pq|_{G(\mathcal{D},p,q)}$. 

\begin{theorem}[Xia~\cite{x-sfdtl-13}]   \label{thmxia} 
Let $L$ be the length of any polygonal path that starts at $p$, ends at 
$q$, and intersects the line segments 
$a_1 b_1$, $a_2 b_2$, \ldots , $a_{k-1} b_{k-1}$ in this order. Then, 
\[ |pq|_{G(\mathcal{D},p,q)} \leq 1.998 \cdot L . 
\]  
\end{theorem}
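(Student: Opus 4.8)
Since the statement is Xia's theorem, one legitimate option is to invoke~\cite{x-sfdtl-13} as a black box; here I instead describe how I would reprove it. The plan is to argue by induction on the number $k$ of disks. For the base case $k=2$ there is a single chord $a_1b_1$, crossed once by $L$, and the graph consists of the arc pair $A_1,A_2$, the arc pair $B_1,B_2$, and the segment $a_1b_1$; the required inequality is then a ``single‑lens'' estimate that can be checked directly by a trigonometric computation and lies well below $1.998$ (already the crude observation that the shorter arc of a circle has length at most $\pi/2$ times its chord goes a long way here).

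For the inductive step I would \emph{not} simply peel off $D_1$ and replace $p$ by an endpoint of $a_1b_1$: that naive reduction fails, because the cost $|A_1|$ (or $|B_1|$) of moving from $p$ onto the shortened chain is a fixed quantity, while the matching saving in the length of $L$ can be made arbitrarily small by placing the crossing of $L$ with $a_1b_1$ close to $p$. Instead I would carry a strengthened inductive invariant: for each $i$, \emph{both} chord endpoints $a_i$ and $b_i$ are reachable from $p$ inside $G(\mathcal{D},p,q)$ along paths of length at most $1.998$ times the length of the initial portion of $L$ up to its crossing with $a_ib_i$, together with a potential term recording how far the constructed path currently sits from that crossing point. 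Applied with $i=k$, where the degenerate chord $a_kb_k$ is the single point $q$ and the initial portion of $L$ is all of $L$, this yields the theorem. The step from chord $i-1$ to chord $i$ then exploits the two ``rails'' $A_1,\ldots,A_k$ and $B_1,\ldots,B_k$ of $G$, joined by the chords: the continuation moves along $a_{i-1}b_{i-1}$, then along $A_i$ or $B_i$, possibly switching rails, and one always takes the cheaper option. The geometric fact that keeps the step local is that, by the definition of a chain of disks, both $a_{i-1}b_{i-1}$ and $a_ib_i$ are chords of $D_i$, so the portion of $L$ between these two crossings stays in a region controlled by $D_i$, and its length can be compared with $|A_i|$, $|B_i|$, and $|a_ib_i|$.

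The hard part — the reason this theorem is far from routine — is the constant $1.998$ itself. Carried out crudely, the comparison above yields only the classical constants of roughly $5.08$ or $2.42$; reaching $1.998$ forces one to isolate the genuine worst‑case local configuration (essentially a one‑parameter family of pairs of circles, together with the positions of the current crossing point and of the previous path's endpoint), to express the worst‑case accumulated ratio as an explicit function of those parameters, and to bound its supremum through a delicate case analysis backed by a numerical estimate. One must also verify that these local worst cases compose along the whole chain without further loss — which is precisely what the strengthened invariant and its potential term are designed to guarantee. I expect essentially all of the difficulty to live in this optimization‑and‑gluing step; the inductive scaffolding described above is the easy part, and for the full details of the hard part I would defer to Xia~\cite{x-sfdtl-13}.
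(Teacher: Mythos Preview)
The paper does not prove Theorem~\ref{thmxia}; it is quoted as a result of Xia~\cite{x-sfdtl-13} and used as a black box, which is precisely the first option you name in your opening sentence. There is therefore no proof in the paper to compare your sketch against. Your outline of an inductive scheme carrying a potential term is a plausible high-level shape for such an argument, and you correctly flag both that naive peeling of $D_1$ fails and that obtaining the constant $1.998$ is where essentially all the work lies --- but since you yourself defer to~\cite{x-sfdtl-13} for that hard step, your proposal and the paper's treatment effectively coincide: both invoke Xia.
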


\subsection{Bounding the Stretch Factor}  \label{subsecSFCP} 
Let $P$ be a convex simplicial polyhedron in $\IR^3$ and assume that 
all vertices of $P$ are on a sphere. By a translation and scaling, we 
may assume that this sphere is the unit-sphere $\IS^2$ (without changing 
the stretch factor of $P$'s skeleton). We assume that 
(i) no four vertices of $P$ are co-planar and (ii) the plane through 
any three vertices of $P$ does not contain the origin. 

Fix two distinct vertices $p$ and $q$ of $P$. We will prove that 
$|pq|_{\skel(P)}$, i.e., the length of a shortest path in the skeleton 
$\skel(P)$ of $P$, is at most $0.999 \cdot \pi \cdot |pq|$. If $pq$ is 
an edge of $\skel(P)$, then this claim obviously holds. We assume from 
now on that $pq$ is not an edge of $\skel(P)$. 

Our proof will use the following notation (refer to 
Figure~\ref{figdefs}): 

\begin{figure}
\begin{center}
\includegraphics[scale=0.6]{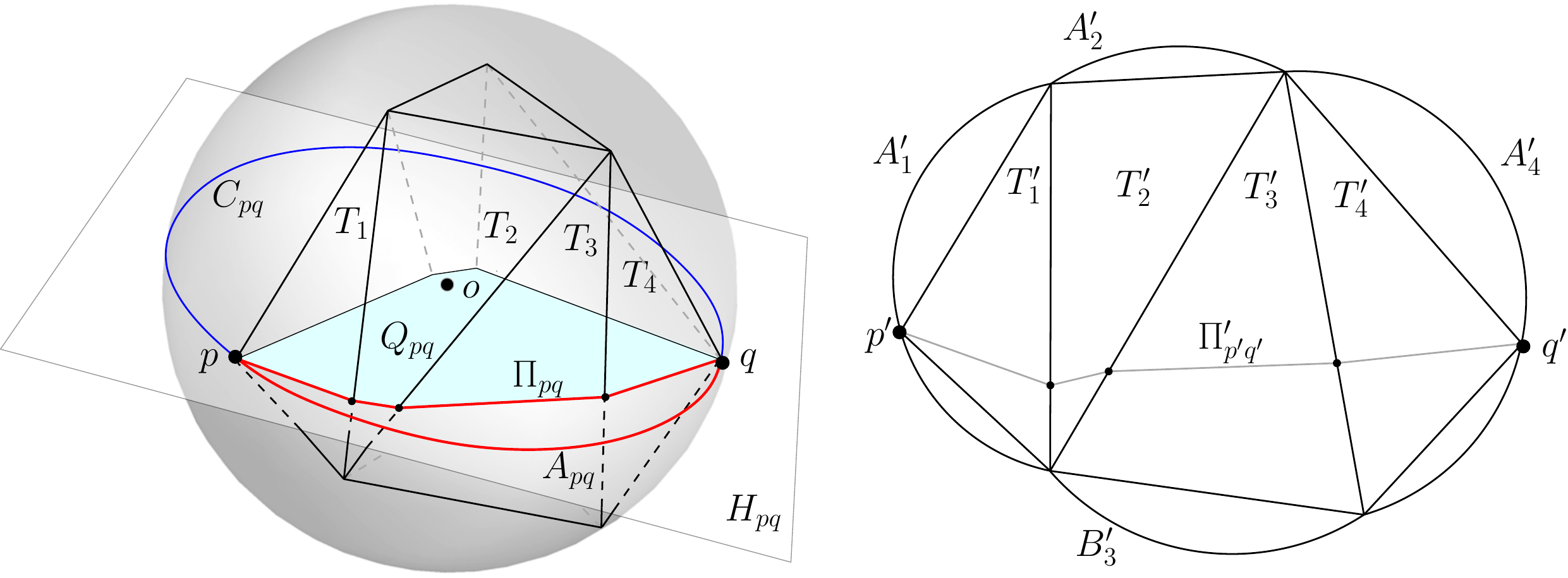}
\end{center}
\caption{Illustrating the notation in Section~\ref{subsecSFCP}.}  
\label{figdefs}
\end{figure}

\begin{itemize} 
\item $H_{pq}$: the plane through $p$, $q$, and the origin (i.e., the 
      center of $\IS^2$). 
\item $C_{pq}$: the circle $\IS^2 \cap H_{pq}$. 
\item $A_{pq}$: the shorter arc of $C_{pq}$ connecting $p$ and $q$. 
\item $Q_{pq}$: the convex polygon $P \cap H_{pq}$. 
\item $\Pi_{pq}$: the path along $Q_{pq}$ from $p$ to $q$ that is on 
      the same side of the line segment $pq$ as the arc $A_{pq}$; 
      observe that $\Pi_{pq}$ is a path between $p$ and $q$ along the 
      surface of $P$. 
\item $T_1,T_2,\ldots,T_k$: the sequence of faces of $P$ that the path 
      $\Pi_{pq}$ passes through. Observe that $k \geq 2$.
\end{itemize} 
Let $T'_1,T'_2,\ldots,T'_k$ be the sequence of triangles obtained from 
an edge-unfolding of the triangles $T_1,T_2,\ldots,T_k$. Thus, 
\begin{itemize}
\item all triangles $T'_1,T'_2,\ldots,T'_k$ are contained in one plane, 
\item for each $i$ with $1 \leq i \leq k$, the triangles $T_i$ and $T'_i$ 
      are congruent, and 
\item for each $i$ with $1 \leq i < k$, the triangles $T'_i$ and 
      $T'_{i+1}$ share an edge, which is the ``same'' edge that is shared 
      by $T_i$ and $T_{i+1}$, and the interiors of $T'_i$ and $T'_{i+1}$ 
      are disjoint.  
\end{itemize} 
For each $i$ with $1 \leq i \leq k$, let $D'_i$ be the circumdisk of the 
triangle $T'_i$. Let $\mathcal{D'} = (D'_1,D'_2,\ldots,D'_k)$ and let 
$p'$ and $q'$ be the vertices of $T'_1$ and $T'_k$ corresponding to $p$ 
and $q$, respectively. We will prove the following lemma in 
Section~\ref{seclemCoD}.  

\begin{lemma}  \label{lemCoD}  
The following properties hold: 
\begin{enumerate} 
\item The sequence $\mathcal{D'}$ is a chain of disks. 
\item $p'$ is on $\partial D'_1$ and not in the interior of $D'_2$. 
\item $q'$ is on $\partial D'_k$ and not in the interior of $D'_{k-1}$.  
\end{enumerate} 
\end{lemma}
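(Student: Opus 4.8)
To prove Lemma~\ref{lemCoD} I would isolate one statement about a pair of adjacent faces and reduce everything to it. Call the following the \emph{key claim}: if $T$ and $T'$ are adjacent faces of $P$ sharing an edge $e$, and $T'$ is unfolded about $e$ onto the plane of $T$, then in the resulting coplanar pair the apex of each triangle (its vertex not on $e$) lies strictly outside the circumdisk of the other. Granting this, Lemma~\ref{lemCoD} follows. The first chain property needs only assumption~(i): $\partial D'_i$ and $\partial D'_{i+1}$ both pass through the two distinct endpoints of the edge shared by $T'_i$ and $T'_{i+1}$, and these circles are distinct (otherwise $T_i$ and $T_{i+1}$ would be inscribed in one circle of $\IS^2$, making four vertices of $P$ coplanar), so they meet in exactly those two points. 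For the second property, fix $i$ with $2 \le i < k$ and let $u,v,w$ be the vertices of $T'_i$ with $\{u,v\}$ and $\{v,w\}$ the edges shared with $T'_{i-1}$ and $T'_{i+1}$; the key claim applied to the pairs $(T'_{i-1},T'_i)$ and $(T'_i,T'_{i+1})$ gives $w \notin D'_{i-1}$ and $u \notin D'_{i+1}$, so $C_i^{i-1}$ is the arc of $\partial D'_i$ joining $u$ and $v$ that misses $w$, while $C_i^{i+1}$ is the arc joining $v$ and $w$ that misses $u$; these meet only in $v$. For the last two properties, $p$ is a vertex of $T_1$ and must be the one opposite the edge shared with $T_2$: otherwise $H_{pq}$ would either pass through a vertex of $P$ distinct from $p,q$ --- impossible by~(ii), since the plane through three non-collinear points of $\IS^2$ cannot contain the origin --- or contain the segment $pq$, contradicting that $pq$ is not an edge of $\skel(P)$. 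Hence $p'$ is the vertex of $T'_1$ opposite the edge shared with $T'_2$, so the key claim gives $p' \notin D'_2$, and symmetrically $q' \notin D'_{k-1}$; that $p' \in \partial D'_1$ and $q' \in \partial D'_k$ is immediate. (One should also check the routine point that, when the triangles $T'_i$ are coplanar in $\mathcal{D}'$ rather than obtained from a single pairwise unfolding, the relevant ``apex outside circumdisk'' relations are unchanged: writing $T'_m = \iota_m(T_m)$ with $\iota_{m+1} = \iota_m \circ R_m$ and $R_m$ the rotation about the line through $e_m$ flattening $T_{m+1}$ onto the plane of $T_m$, the isometry $\iota_m$ (or $\iota_{m-1}$) carries each of these configurations to one arising from a pairwise unfolding.)

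For the key claim I would compute in the common plane, placing the midpoint of $e$ at the origin, $e$ on the $Y$-axis with endpoints $(0,\pm\lambda)$ (so $2\lambda = |e|$), and the apex $a$ of $T$ with positive $X$-coordinate. The circumcenters of $T$ and of $T'$ both lie on the $X$-axis, at $\delta\eta_T$ and $\epsilon\eta_{T'}$, where $\eta_S = \sqrt{\rho_S^2 - \lambda^2}$ is the distance from the circumcenter of the triangle $S$ to $e$, $\rho_S$ its circumradius, and $\delta,\epsilon \in \{+1,-1\}$ record whether the apex angle of $T$, resp.\ of $T'$, is non-obtuse or obtuse (with the sign convention placing the circumcenter of $T'$ on the $a$-side exactly when $T'$'s apex angle is obtuse). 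Using that $a$ lies on the circumcircle of $T$ to cancel $|a|^2$, a one-line computation gives that the squared distance from $a$ to the circumcenter of $T'$ equals $\rho_{T'}^2 + 2a_X(\delta\eta_T - \epsilon\eta_{T'})$; since $a_X > 0$, the key claim (for this apex; the other is symmetric) is equivalent to $\delta\eta_T \ge \epsilon\eta_{T'}$, and inspecting the four sign combinations reduces this to: \textbf{(a)} the apex angles of $T$ and $T'$ are never both obtuse, and \textbf{(b)} if the apex angle of $T$ is obtuse then $\rho_T < \rho_{T'}$, and symmetrically.

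Proving (a) and (b) is where convexity is used, and I expect it to be the main obstacle. I would set the origin at the center $O$ of $\IS^2$ and let $M$ be the midpoint of $e$; a short inner-product computation shows that a point $v \in \IS^2$ other than the endpoints of $e$ subtends an obtuse, right, or acute angle at $e$ according as $v \cdot \widehat{OM}$ is greater than, equal to, or less than $|OM|$. Moreover, the plane of any face of $P$ supports $P$ and, by~(ii), does not contain $O$; since no vertex of $P$ lies in the spherical cap this plane cuts off on the side away from $O$, every vertex of $P$ lies on the closed $O$-side of each face plane, strictly so for vertices off the plane (by~(i)). Writing out in coordinates the two facts ``the apex of $T'$ lies on the $O$-side of the plane of $T$'' and ``the apex of $T$ lies on the $O$-side of the plane of $T'$'' and running a short sign analysis should yield both (a) --- supposing both apex angles obtuse is contradictory --- and (b), with the general-position assumptions eliminating the borderline cases (for instance, one apex angle obtuse while the other is exactly a right angle turns out to be impossible). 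The remainder --- the definition of a chain of disks, the reduction above, and the circumcenter computation --- is routine.
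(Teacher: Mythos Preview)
Your reduction of Lemma~\ref{lemCoD} to the key claim is correct, and the coordinate computation reducing the key claim to conditions (a) and (b) is a clean alternative to the paper's geometric argument. The gap is in your justification of (a) and (b): you assert that ``every vertex of $P$ lies on the closed $O$-side of each face plane,'' but this is false unless the origin lies in $P$, which is nowhere assumed. For a concrete counterexample, take a tetrahedron inscribed in $\IS^2$ with all four vertices near the north pole: set $u=(\sin\alpha,0,\cos\alpha)$, $v=(-\sin\alpha,0,\cos\alpha)$, $q=(0,\sin\beta,\cos\beta)$, $w=(0,-\sin\gamma,\cos\gamma)$ with $0<\beta,\gamma<\alpha$. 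By your own criterion ($x\cdot\widehat{OM}>|OM|$ with $M$ the midpoint of $uv$), both apex angles $\angle uqv$ and $\angle uwv$ are obtuse, so (a) fails; in your circumcenter computation this gives $\delta=-1$, $\epsilon=+1$, hence $\delta\eta_T-\epsilon\eta_{T'}<0$, so the apex of $T$ lies strictly \emph{inside} the circumdisk of $T'$ after unfolding. The key claim, as stated for arbitrary adjacent faces of $P$, is therefore false.

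What rescues the argument is that you only need the key claim for consecutive faces $T_i,T_{i+1}$ in the specific sequence crossed by $\Pi_{pq}$, and for \emph{those} faces the $O$-side condition does hold. This is exactly the content of the paper's Lemma~\ref{lembelow}: because $\Pi_{pq}$ was chosen on the same side of the chord $pq$ as the shorter arc $A_{pq}$, the origin lies on the opposite side of $pq$ in $H_{pq}$, and hence on the same side as $Q_{pq}$ of each line through an edge of $\Pi_{pq}$; lifting to $\IR^3$ puts $O$ on the $P$-side of the plane through each $T_i$. Once you insert this step and restrict the key claim accordingly, your sign analysis for (a) and (b) goes through and the rest of your proposal is sound. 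The paper's own proof of the analogous statement (Lemma~\ref{lemlocallyD}) uses the same $O$-side input but finishes with a direct geometric argument---projecting onto the plane orthogonal to the shared edge and tracking the rotation of $w$ relative to $\IS^2$---rather than your circumradius comparison.
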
 

Consider the graph $G(\mathcal{D'},p',q')$ that is defined by 
$\mathcal{D'}$ and the two points $p'$ and $q'$; see 
Section~\ref{secCoD}. We first observe that $|pq|_{\skel(P)}$ is at most 
the shortest-path distance between $p$ and $q$ in the graph consisting 
of all vertices and edges of the faces $T_1,T_2,\ldots,T_k$. The latter 
shortest-path distance is equal to the shortest-path distance between 
$p'$ and $q'$ in the graph consisting of all vertices and edges of the 
triangles $T'_1,T'_2,\ldots,T'_k$. Since the latter shortest-path 
distance is at most $|p'q'|_{G(\mathcal{D'},p',q')}$, it follows that 
\[ |pq|_{\skel(P)} \leq |p'q'|_{G(\mathcal{D'},p',q')} . 
\] 
Let $\Pi'_{p'q'}$ be the path through $T'_1,T'_2,\ldots,T'_k$ 
corresponding to the path $\Pi_{pq}$. By Lemma~\ref{lemCoD} and 
Theorem~\ref{thmxia}, we have 
\[ |p'q'|_{G(\mathcal{D'},p',q')} \leq 1.998 \cdot |\Pi'_{p'q'}| . 
\]
Since $|\Pi'_{p'q'}| = |\Pi_{pq}|$, it follows that 
\[ |pq|_{\skel(P)} \leq 1.998 \cdot |\Pi_{pq}| . 
\] 

It remains to bound $|\Pi_{pq}|$ in terms of the Euclidean distance 
$|pq|$. Consider again the plane $H_{pq}$ through $p$, $q$, and the 
origin, the circle $C_{pq} = \IS^2 \cap H_{pq}$, the shorter arc 
$A_{pq}$ of $C_{pq}$ connecting $p$ and $q$, and the convex polygon 
$Q_{pq} = P \cap H_{pq}$. Observe that both $p$ and $q$ are on $C_{pq}$, 
and both these points are vertices of $Q_{pq}$. Moreover, $Q_{pq}$ is 
contained in the disk with boundary $C_{pq}$. It follows that 
\[ |\Pi_{pq}| \leq |A_{pq}| . 
\] 

Let $\alpha$ be the angle between the two vectors from the 
origin (which is the center of $C_{pq}$) to $p$ and $q$; see the figure 
below. 

\begin{center}
\includegraphics[scale=0.60]{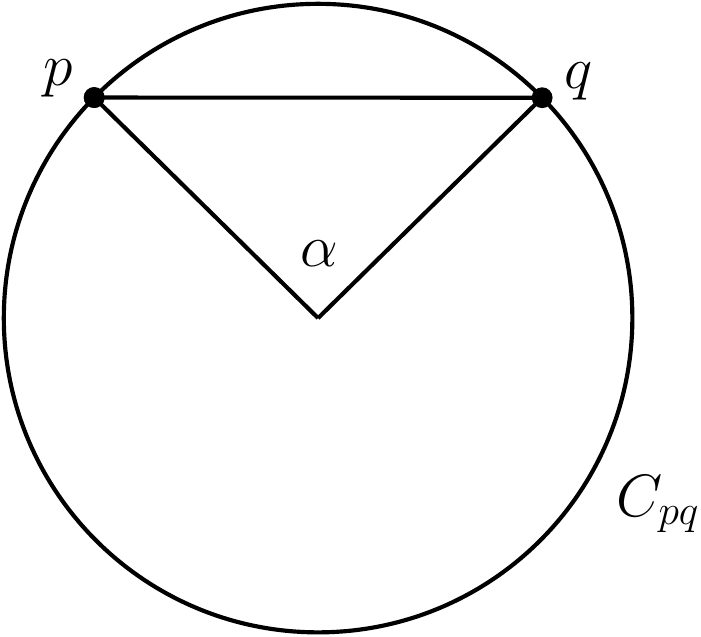}
\end{center}

Since $C_{pq}$ has radius $1$, we have $|A_{pq}| = \alpha$ and 
$|pq| = 2 \sin (\alpha/2)$. Therefore, 
\[ |A_{pq}| = \frac{\alpha/2}{\sin (\alpha/2)} \cdot |pq| .
\]
The function $g(x) = x/\sin x$ is increasing for $0 \leq x \leq \pi/2$ 
(because its derivative is positive for $0 < x \leq \pi/2$), implying 
that
\[ |A_{pq}| \leq g(\pi/2) \cdot |pq| = (\pi/2) \cdot |pq| .
\]
By combining the above inequalities, we obtain 
\[ |pq|_{\skel(P)} \leq 1.998 \cdot \pi/2 \cdot |pq| . 
\]
Thus, assuming Lemma~\ref{lemCoD} holds, we have proved the following 
result. 

\begin{theorem}   \label{thm999}  
Let $P$ be a convex simplicial polyhedron in $\IR^3$, all of whose 
vertices are on a sphere. Assume that no four vertices of $P$ are 
co-planar and the plane through any three vertices of $P$ does not 
contain the center of the sphere. Then the skeleton of $P$ is a 
$t$-spanner of the vertex set of $P$, where 
\[ t = 0.999 \cdot \pi .
\] 
\end{theorem}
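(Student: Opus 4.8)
The plan is to reduce Theorem~\ref{thm999} to Xia's bound on chains of disks (Theorem~\ref{thmxia}) together with two elementary estimates, using Lemma~\ref{lemCoD} as the one genuinely geometric ingredient. Fix two distinct vertices $p$ and $q$ of $P$. If $pq$ is an edge of $\skel(P)$ the spanner inequality is trivial, so assume it is not. I would first form the plane $H_{pq}$ through $p$, $q$, and the center $O$ of $\IS^2$, set $C_{pq} = \IS^2 \cap H_{pq}$ and $Q_{pq} = P \cap H_{pq}$, and follow the boundary path $\Pi_{pq}$ of the convex polygon $Q_{pq}$ that lies on the same side of the segment $pq$ as the short great arc $A_{pq}$. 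This path crosses a sequence $T_1,\dots,T_k$ of triangular faces of $P$ with $k \ge 2$; edge-unfolding these faces into a common plane produces congruent triangles $T'_1,\dots,T'_k$, and I claim their circumdisks $D'_1,\dots,D'_k$ form a chain of disks with $p'$ on $\partial D'_1$ and outside $D'_2$, and $q'$ on $\partial D'_k$ and outside $D'_{k-1}$. This is exactly the content of Lemma~\ref{lemCoD}.

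Granting Lemma~\ref{lemCoD}, the graph $G(\mathcal{D'},p',q')$ is well defined. Since every edge of each $T'_i$ is a congruent copy of an edge of the face $T_i$ of $P$ and the unfolding is an isometry on each triangle, $|pq|_{\skel(P)}$ is at most the length of a shortest path between $p$ and $q$ through the edges of $T_1,\dots,T_k$, which equals the length of a shortest path between $p'$ and $q'$ through the edges of $T'_1,\dots,T'_k$, and the latter is at most $|p'q'|_{G(\mathcal{D'},p',q')}$. The unfolded image $\Pi'_{p'q'}$ of $\Pi_{pq}$ is a polygonal path from $p'$ to $q'$ crossing the chords $a_1b_1,\dots,a_{k-1}b_{k-1}$ in order, so Theorem~\ref{thmxia} gives $|p'q'|_{G(\mathcal{D'},p',q')} \le 1.998 \cdot |\Pi'_{p'q'}|$; since the unfolding preserves edge lengths, $|\Pi'_{p'q'}| = |\Pi_{pq}|$, hence $|pq|_{\skel(P)} \le 1.998 \cdot |\Pi_{pq}|$.

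It then remains to bound $|\Pi_{pq}|$ by a multiple of $|pq|$. Because $Q_{pq}$ is convex and lies inside the unit disk bounded by $C_{pq}$, the convex region bounded by $\Pi_{pq}$ and $pq$ is contained in the one bounded by $A_{pq}$ and $pq$, so comparing perimeters of nested convex sets yields $|\Pi_{pq}| \le |A_{pq}|$. Writing $\alpha = \angle pOq$, we have $0 < \alpha \le \pi$, $|A_{pq}| = \alpha$, and $|pq| = 2\sin(\alpha/2)$, so $|A_{pq}| = g(\alpha/2)\,|pq|$ with $g(x) = x/\sin x$; since $g$ is increasing for $0 < x \le \pi/2$ and $\alpha/2 \le \pi/2$, we get $|A_{pq}| \le g(\pi/2)\,|pq| = (\pi/2)\,|pq|$. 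Chaining the three inequalities gives $|pq|_{\skel(P)} \le 1.998 \cdot (\pi/2) \cdot |pq| = 0.999\,\pi\,|pq|$, the asserted bound; the genericity hypotheses are used only so that $H_{pq}$ meets $P$ in a nondegenerate polygon and the faces $T_i$ and the chain are defined without accidental coincidences.

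The one step that is not routine is Lemma~\ref{lemCoD}: one must verify that edge-unfolding the triangle strip cut out by $H_{pq}$ yields circumdisks satisfying the two chain-of-disks axioms (consecutive circles meet in one or two points; for an intermediate disk the two relevant boundary arcs share at most a point) together with the endpoint conditions on $p'$ and $q'$. I expect this to rest on the spherical empty-circumcircle property of the faces $T_i$ — each circumcircle is a small circle of $\IS^2$ whose cap is free of other vertices of $P$ — combined with the facts that unfolding preserves circumradii and the mutual position of consecutive circumdisks, and that the convexity of the cross-sectional polygon $Q_{pq}$ prevents the unfolded strip from folding back on itself and violating the second axiom. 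Carrying out that configuration analysis carefully is where the real work lies, and I would budget most of the effort there.
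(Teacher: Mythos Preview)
Your proposal is correct and follows essentially the same route as the paper: fix $p,q$, cut $P$ by the plane $H_{pq}$, take the short-side path $\Pi_{pq}$ through the face strip $T_1,\dots,T_k$, edge-unfold to obtain a chain of circumdisks (Lemma~\ref{lemCoD}), apply Xia's Theorem~\ref{thmxia} to the unfolded path, and then bound $|\Pi_{pq}|\le|A_{pq}|\le(\pi/2)|pq|$ via the arc-versus-chord inequality on the great circle. Your closing remarks about what Lemma~\ref{lemCoD} requires are in the right spirit; in the paper it is reduced to a local ``unfolded vertex lies outside the previous circumdisk'' statement (Lemma~\ref{lemlocallyD}), proved by analyzing the rotation of $w$ about the shared edge relative to a cross-section of $\IS^2$, which is a bit more hands-on than the empty-circumcircle heuristic you sketch.
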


\subsection{Proof of Lemma~\ref{lemCoD}}  \label{seclemCoD}  
Lemma~\ref{lemCoD} will follow from Lemma~\ref{lemlocallyD} below.  
The proof of the latter lemma uses an additional result: 

\begin{lemma}   \label{lembelow}
Let $i$ be an integer with $1 \leq i \leq k$. The polyhedron $P$ and the 
origin are in the same closed halfspace that is bounded by the plane 
through the face $T_i$ of $P$.  
\end{lemma}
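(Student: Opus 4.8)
The plan is to reduce the statement to a two-dimensional fact about the convex polygon $Q_{pq}=P\cap H_{pq}$ inside the disk bounded by $C_{pq}$, and then to argue by contradiction. Fix $i$ and let $h_i$ be the plane of the face $T_i$. By convexity of $P$, the plane $h_i$ supports $P$, so $P$ lies in one of the two closed halfspaces bounded by $h_i$, say $h_i^{+}$; we must show that the origin lies in $h_i^{+}$ as well. Since $\Pi_{pq}$ passes through $T_i$, the plane $H_{pq}$ meets $T_i$, and the non-degeneracy assumptions guarantee that $s_i:=T_i\cap H_{pq}$ is a segment of positive length: since $H_{pq}$ passes through the origin, assumption~(ii) — together with assumption~(i) and the fact that $pq$ is not an edge of $\skel(P)$ — prevents $H_{pq}$ from containing any third vertex of $P$ or any edge of $P$ other than $pq$. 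Thus $s_i$ is exactly the edge of $Q_{pq}$ contributed by $T_i$. Because the origin is not on $h_i$ (by~(ii)), $h_i\ne H_{pq}$, so $\ell_i:=h_i\cap H_{pq}$ is a line; it contains $s_i$ and hence is a supporting line of $Q_{pq}$ inside $H_{pq}$, with $Q_{pq}\subseteq\ell_i^{+}:=h_i^{+}\cap H_{pq}$. Since the origin lies in $H_{pq}$, it lies in $h_i^{+}$ if and only if it lies in $\ell_i^{+}$, so it suffices to prove the latter. Finally, $P$ is the convex hull of points on the unit sphere, hence lies in the closed unit ball, so $Q_{pq}\subseteq\overline B$, the closed unit disk of $H_{pq}$, with $p,q\in C_{pq}=\partial\overline B$.

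Next, set up coordinates in $H_{pq}$ with the origin at $(0,0)$ and the midpoint of $pq$ on the positive $x$-axis. Writing $\beta:=\tfrac{1}{2}\angle pOq\in(0,\tfrac{\pi}{2})$ (we may assume $p$ and $q$ are not antipodal), we have $p=(\cos\beta,\sin\beta)$ and $q=(\cos\beta,-\sin\beta)$, and the circular segment of $\overline B$ cut off by the chord $pq$ on the same side as the short arc $A_{pq}$ is exactly $M:=\{(x,y)\in\overline B : x\ge\cos\beta\}$; in particular the origin does not lie in $M$. The crucial point is that $\Pi_{pq}\subseteq M$, hence $s_i\subseteq M$: therefore both endpoints of $s_i$ have first coordinate $\ge\cos\beta$ and — being in $\overline B$ — second coordinate of absolute value $\le\sin\beta$.

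Now assume, for contradiction, that the origin lies strictly on the side of $\ell_i$ opposite to $Q_{pq}$. Since $p,q\in Q_{pq}\subseteq\ell_i^{+}$, we distinguish two cases. If neither $p$ nor $q$ lies on $\ell_i$, then $\ell_i$ strictly separates the origin from $p$, so it crosses the open segment $Op$ at a point whose first coordinate lies in $(0,\cos\beta)$; together with an endpoint of $s_i$ having first coordinate $\ge\cos\beta$, this forces $\ell_i$ to meet the vertical line $x=\cos\beta$ at a single point $C=(\cos\beta,c)$, and since the second coordinate varies linearly along $\ell_i$ between two reference values of absolute value $\le\sin\beta$ straddling $x=\cos\beta$, we get $|c|\le\sin\beta$, so $C$ lies on the chord $pq$. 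As $p,q\notin\ell_i$, the point $C$ is interior to $pq$, whence $p$ and $q$ lie strictly on opposite sides of $\ell_i$, contradicting $p,q\in\ell_i^{+}$. In the remaining case, say $p\in\ell_i$; since $\ell_i$ supports $Q_{pq}$ and contains the positive-length edge $s_i$, we have $\ell_i\cap Q_{pq}=s_i$, so $p$ is an endpoint of $s_i$, say $s_i=[p,v]$ with $v\ne p$ and $v\in M$. Moreover $q\notin\ell_i$ (otherwise $\ell_i$ would be the line $pq$ and $s_i$ would lie on the chord $pq$, which the non-degeneracy assumptions exclude), so $\ell_i$ strictly separates the origin from $q$ and hence crosses the open segment $Oq$ at a point $(x_2,y_2)$ with $0<x_2<\cos\beta$ and $-\sin\beta<y_2<0$. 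Then $\ell_i$, being the line through $p=(\cos\beta,\sin\beta)$ and $(x_2,y_2)$, has positive slope; but $v\in\ell_i$ with first coordinate $\ge\cos\beta$ then forces its second coordinate to be $\ge\sin\beta$, which together with the bound $\le\sin\beta$ yields $v=(\cos\beta,\sin\beta)=p$, a contradiction. In both cases we reach a contradiction, so the origin lies in $\ell_i^{+}$, equivalently in $h_i^{+}$, as claimed.

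I expect the main obstacle to be exactly the two boundary cases in which the supporting line $\ell_i$ passes through $p$ or through $q$, together with the bookkeeping needed to justify the non-degeneracy facts used above (that $s_i$ is a genuine positive-length edge of $Q_{pq}$, that $pq$ is not a diameter, and that $s_i$ does not lie along the line $pq$), all of which follow from assumptions~(i) and~(ii) and from $pq$ not being an edge of $\skel(P)$. The geometric heart of the matter — that an edge confined to the thin circular segment $M$ cannot have a supporting line separating the center of $C_{pq}$ from both $p$ and $q$ — is intuitively clear, but making it rigorous seems to require precisely this case analysis.
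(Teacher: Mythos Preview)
Your proof is correct and follows essentially the same approach as the paper: both reduce to the two-dimensional picture in $H_{pq}$, observe that the edge $e_i=s_i$ of $Q_{pq}$ lies on the short-arc side of the chord $pq$ while the origin lies on the opposite side, and conclude that the supporting line of $Q_{pq}$ through $e_i$ has the origin on the $Q_{pq}$ side. The paper compresses this into a single sentence; your coordinate computation and case analysis simply makes that sentence rigorous.
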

\begin{proof} 
Let $e_i$ be the edge of the convex polygon $Q_{pq}$ that spans the 
face $T_i$. Since the path $\Pi_{pq}$ (which contains $e_i$ as an edge) 
is on the same side of the line segment $pq$ as the arc $A_{pq}$, and 
since the origin is on the other side of this line segment, the polygon 
$Q_{pq}$ and the origin are in the same closed halfplane (in $H_{pq}$) 
that is bounded by the line through $e_i$. This implies the claim. 
\end{proof} 

\begin{lemma}   \label{lemlocallyD} 
Let $i$ be an integer with $1 \leq i < k$ and let $w$ be the vertex of 
$T_{i+1}$ that is not a vertex of $T_i$. Consider the vertex $w'$ of the 
unfolded triangle $T'_{i+1}$ that corresponds to $w$. Then $w'$ is not in 
the circumdisk $D'_i$ of the unfolded triangle $T'_i$.  
\end{lemma}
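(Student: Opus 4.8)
The plan is to reduce the assertion to a single inequality, namely $|w'O|>1$, where $O$ denotes the center of $\IS^2$. First I unfold $T_{i+1}$ across the shared edge $e_i = T_i\cap T_{i+1}$ into the plane $\pi_i$ of $T_i$; by applying a suitable isometry of $\IR^3$ we may assume that $T'_i = T_i$, so that $D'_i$ is the closed disk $D$ bounded by the circle $\gamma_i := \IS^2\cap\pi_i$ (which is the circumcircle of $T_i$, because the three vertices of $T_i$ lie on $\IS^2$ and in $\pi_i$), and $w'$ is the image of $w$ under this unfolding. Let $c_i$ be the center of $\gamma_i$, i.e.\ the foot of the perpendicular from $O$ to $\pi_i$, and let $\rho_i$ be its radius, so $|c_iO|^2 = 1-\rho_i^2$. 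For any $x\in\pi_i$ the segment $c_iO$ is perpendicular to $\pi_i$, whence $|xO|^2 = |xc_i|^2 + |c_iO|^2$ and therefore $|xO|^2 - 1 = |xc_i|^2 - \rho_i^2$. Thus $x$ lies outside $D$ exactly when $|xO|>1$. Since $w\in\IS^2$ gives $|wO|=1$, it suffices to prove the strict inequality $|w'O|>|wO|$; that is, the unfolding pushes $w$ strictly farther from $O$.

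For this I use coordinates adapted to the edge: let $\ell_i$ be the line through $e_i$, taken as the $x$-axis, and let $\pi_i$ be the plane $z=0$, oriented (after a reflection if necessary) so that $O$ lies in the halfspace $z>0$. The unfolding is then the rotation $R$ about $\ell_i$ that carries the plane $\pi_{i+1}$ of $T_{i+1}$ onto $\pi_i$ and places $T'_{i+1}$ on the side of $\ell_i$ opposite $T_i$ (so $T'_i$ and $T'_{i+1}$ have disjoint interiors); I keep $O$ fixed and track $w' = R(w)$. The geometric input is Lemma~\ref{lembelow}, applied to $T_i$ and --- since $i<k$ --- also to $T_{i+1}$: the center $O$ lies in the closed halfspace bounded by $\pi_i$ that contains $P$, and likewise for $\pi_{i+1}$. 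Since $w$ is a vertex of $P$ that is not coplanar with $T_i$ (general position), $w$ lies strictly on the same side of $\pi_i$ as $O$, hence at $z>0$; and $O$ lies strictly on the $P$-side of $\pi_{i+1}$. Describe directions perpendicular to $\ell_i$ by a cylindrical angle $\phi$, with $\phi=0$ along the half-plane of $T_i$ inside $\pi_i$ and $\phi=\alpha$ along the half-plane of $T_{i+1}$ inside $\pi_{i+1}$. Then: $w$ being at $z>0$ forces $\alpha\in(0,\pi)$; convexity of $P$ (its dihedral angle at $e_i$ is less than $\pi$), together with $\alpha<\pi$, identifies the interior wedge at $e_i$ as $\{\,0<\phi<\alpha\,\}$; and the two halfspace conditions on $O$ combine to give $0<\phi_O<\alpha$, where $\phi_O$ is the angular coordinate of $O$. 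Writing $w=(x_w,\,t_w\cos\alpha,\,t_w\sin\alpha)$ with $t_w>0$, the rotation $R$ takes the $\phi=\alpha$ direction to the $\phi=\pi$ direction, so $w'=(x_w,\,-t_w,\,0)$.

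It then remains to compute. With $O=(o_1,o_2,o_3)$ one finds
\[ |w'O|^2 - |wO|^2 \;=\; 2\,t_w\bigl(o_2(1+\cos\alpha) + o_3\sin\alpha\bigr). \]
Substituting $o_2=\rho_O\cos\phi_O$ and $o_3=\rho_O\sin\phi_O$, where $\rho_O>0$ is the distance from $O$ to $\ell_i$ (positive by general position), and simplifying trigonometrically, the right-hand side becomes $4\,\rho_O\,t_w\cos(\alpha/2)\cos(\phi_O-\alpha/2)$, which is strictly positive because $\alpha/2\in(0,\pi/2)$ and $\phi_O-\alpha/2\in(-\alpha/2,\alpha/2)\subseteq(-\pi/2,\pi/2)$. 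Hence $|w'O|>|wO|=1$, so $w'\notin D=D'_i$, as claimed.

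I expect the crux to be the reduction carried out in the first step --- the observation that, because the circumcircle of the face lies on the sphere, a point of the face's plane is outside the circumdisk precisely when it is outside the sphere. After that the work is mostly bookkeeping, with the one delicate point being the second step: fixing the orientation of the unfolding rotation and translating the two halfspace statements of Lemma~\ref{lembelow} into the angular inequalities $0<\phi_O<\alpha<\pi$. (As a sanity check, the final computation merely records the intuition that flattening a piece of $\partial P$, which bows toward the interior point $O$, can only move its points away from $O$.)
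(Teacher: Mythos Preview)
Your proof is correct and follows essentially the same approach as the paper: both reduce to showing that $|w'O|>1$ (equivalently, that $w'$ lies outside $\IS^2$), and both use Lemma~\ref{lembelow} applied to $T_i$ and to $T_{i+1}$ to place $O$ strictly inside the dihedral wedge at the shared edge. The only difference is in execution: the paper projects onto a plane orthogonal to the edge $uv$ and argues geometrically about the circles $C$ and $C_w$, whereas you set up cylindrical coordinates about the edge and verify $|w'O|^2-|wO|^2>0$ by a direct trigonometric computation; your version makes the use of both instances of Lemma~\ref{lembelow} and of the general-position hypotheses more explicit.
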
 
\begin{proof}  
Let $T_i = \triangle uvq$ and $T_{i+1} = \triangle uvw$; thus, $uv$ is 
the edge shared by the faces $T_i$ and $T_{i+1}$ of $P$. Assume without 
loss of generality that $uv$ is parallel to the $z$-axis. Let $C$ be 
the cross-section of $\IS^2$ that passes through $w$ and is orthogonal 
to $uv$. Let $u'$, $v'$, $q'$, and $o'$ be the orthogonal projections 
of $u$, $v$, $q$, and $o$ onto the plane supporting $C$, respectively; 
refer to Figure~\ref{fig:lem3}.

\begin{figure}
\centering
\includegraphics[width=0.7\linewidth]{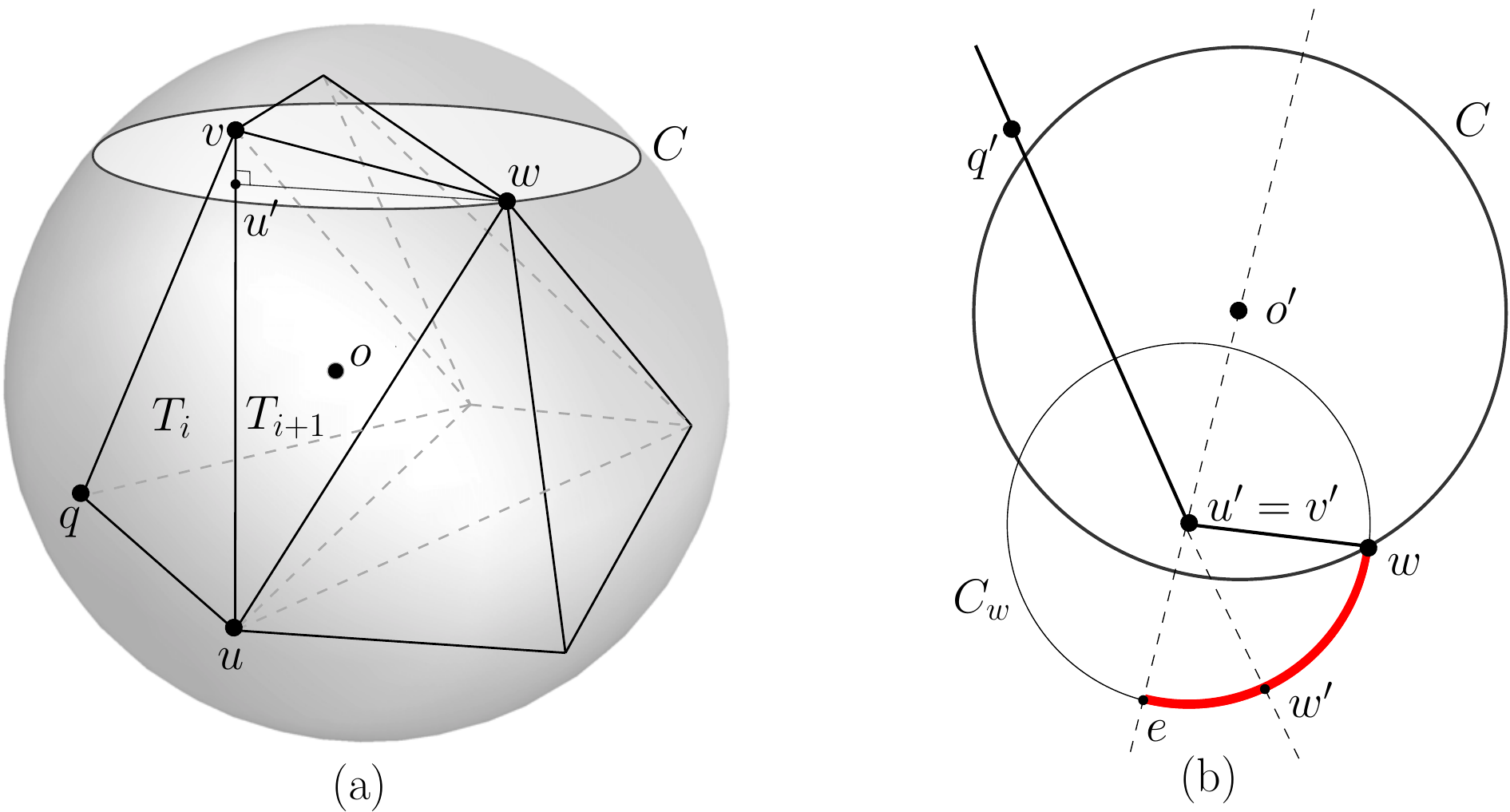}
\caption{Cross-section through $w$ orthogonal to $uv$ (a) side view 
of $C$ (b) top view of $C$. }
\label{fig:lem3}
\end{figure}

Let $C_w$ be the circle with center $u'$ and radius $|u'w|$ that is 
coplanar with $C$. Since $T'_{i+1}$ is obtained by rotating $T_{i+1}$ 
about the line through $u$ and $v$, it must be that $w'$ lies on 
$\partial C_w$. Next we show that $w'$ is exterior to $C$.

Let $e$ be the intersection point between the line supporting $o'u'$ 
and $C_w$ that is farthest away from $o'$. By Lemma~\ref{lembelow}, 
$o'$ lies interior to the convex angle $\angle{wu'q'}$ and, therefore, 
$e$ lies exterior to $C$. Moreover, the circular arc of $\partial C_w$ 
with endpoints $w$ and $e$ that extends from $w$ away from $o'$ 
(marked as a thick curve in Figure~\ref{fig:lem3}b), lies exterior of 
$C$. This circular arc is precisely the locus of $w'$. It follows that 
$w'$ is exterior to $C$, which implies that $w'$ is exterior to 
$\IS^2$. Now observe that $D'_i$ is congruent with the disk $D_i$ 
obtained by intersecting $\IS^2$ with the plane supporting $T_i$. 
Since $w'$ is coplanar with $D_i$ and exterior to $D_i$, we have 
that $w'$ is exterior to $D'_i$. This concludes the proof.
\end{proof} 

It is easy to see that Lemma~\ref{lemlocallyD} implies that the sequence 
$\mathcal{D'}$ is a chain of disks, i.e., this sequence satisfies the 
two properties given in Section~\ref{secCoD}. Moreover, it follows from 
Lemma~\ref{lemlocallyD} that $p'$ is on $\partial D'_1$ and not in the 
interior of $D'_2$ and $q'$ is on $\partial D'_k$ and not in the interior 
of $D'_{k-1}$. Thus, we have completed the proof of Lemma~\ref{lemCoD}.

\subsection{Convex Polyhedra whose Vertices are Almost on a Sphere}  
\label{secalmost} 

In this section, we give an example of a convex simplicial polyhedron 
whose vertices are ``almost'' on a sphere and whose skeleton has 
unbounded stretch factor. For simplicity of notation, we consider the 
sphere 
\[ \IS_3^2 = \{ (x,y,z) \in \IR^3 : x^2 + y^2 + z^2 = 3 \} .   
\] 
Let $k$ be a large integer and let $S_k$ be the subset of $\IR^3$ 
consisting of the following $12$ points:
\begin{itemize}
\item The $8$ vertices of the cube $[-1,1]^3$,
\item $p = (-1/k,0,a)$, where $a=\sqrt{3-1/k^2}$,
\item $q = (1/k,0,a)$,
\item $r = (0,-b,c)$, where $b=1/k^2$ and $c=a-1/k^3$, and 
\item $s = (0,b,c)$.
\end{itemize}
The $8$ vertices of the cube and the points $p$ and $q$ are on the
sphere $\IS_3^2$. Since  
\begin{eqnarray*} 
 b^2 + c^2 & = & 1/k^4 + a^2 - 2a/k^3 + 1/k^6 \\ 
    & < & 1/k^4 + a^2 + 1/k^6 \\ 
    & < & a^2 + 2/k^6 \\ 
 & = & 3 - 1/k^2 + 2/k^4 \\ 
 & < & 3 ,  
\end{eqnarray*}
the points $r$ and $s$ are in the interior of, but very close to, this 
sphere.

Let $P_k$ be the convex hull of the point set $S_k$. Below, we will 
show that, for sufficiently large values of $k$, 
(i) $(p,q,r)$ and $(p,q,s)$ are faces of the polyhedron $P_k$ and 
(ii) $rs$ is not an edge of $P_k$. Thus, the figure below shows (part of) 
the top view (in the negative $z$-direction) of $P_k$.

\begin{center}
   \includegraphics[scale=0.5]{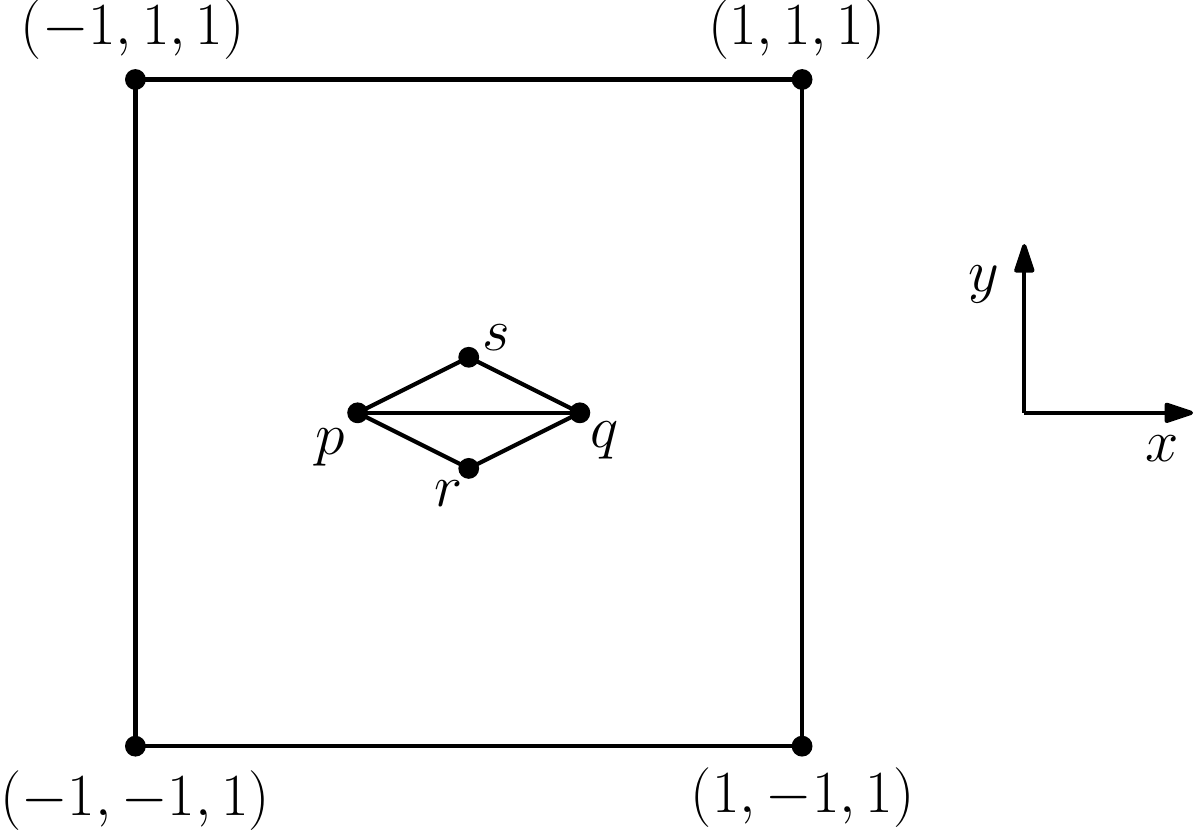}
\end{center}

The shortest path between $r$ and $s$ in the skeleton of $P_k$ has 
length
\[ |rp|+|ps| = 2|rp| , 
\]
which is at least twice the distance between $r$ and $p$ in the
$x$-direction, which is $2/k$. Since $|rs|=2b=2/k^2$, it follows that
the stretch factor of the skeleton of $P_k$ is at least
\[ \frac{2/k}{2/k^2} = k . 
\]
Thus, by letting $k$ go to infinity, the stretch factor of $\skel(P_k)$ 
is unbounded.

It remains to prove that $(p,q,r)$ and $(p,q,s)$ are faces of $P_k$ and  
$rs$ is not an edge of $P_k$. The plane through $p$, $q$, and $s$ has 
equation
\[ z = a - \frac{a-c}{b} \cdot y . 
\]
To prove that $(p,q,s)$ is a face of $P_k$, it suffices to show that the 
points $r$ and $(1,1,1)$ are below this plane. The point $r$ is below 
this plane if and only if 
\[ a - \frac{a-c}{b} \cdot (-b) > c , 
\]
which is equivalent to $a>c$, which obviously holds. The point $(1,1,1)$ 
is below this plane if and only if 
\[ a - \frac{a-c}{b} > 1 ,
\]
which is equivalent to
\begin{equation}  \label{eqtoprove} 
 ab -a + c > b . 
\end{equation}
Using the fact that, for sufficiently large values of $k$, $a > 3/2$,
we have
\begin{eqnarray*} 
  ab-a+c & = & a/k^2 - 1/k^3 \\ 
         & > & 3/(2k^2) - 1/k^3 \\ 
         & > & 1/k^2 \\ 
         & = & b ,  
\end{eqnarray*}
proving the inequality in (\ref{eqtoprove}). Thus, $(p,q,s)$ is a face 
of $P_k$. By a symmetric argument, $(p,q,r)$ is a face of $P_k$ as well. 

We finally show that $(r,s)$ is not an edge of $P_k$. For sufficiently 
large values of $k$, both points $r$ and $s$ are above (with respect to 
the $z$-direction) the points $(-1,1,1)$ and $(1,1,1)$.
Observe that both $r$ and $s$ are below $p$ and $q$.
It follows that for any plane through $r$ and $s$, (i) $(-1,1,1)$ and
$q$ are on opposite sides or (ii) $(1,1,1)$ and $p$ are on opposite sides.
Therefore, $(r,s)$ is not an edge of $P_k$.

Let $\alpha_k$ be the smallest angle in the face $(p,q,s)$ of the 
polyhedron $P_k$, i.e., $\alpha_k = \angle(qps)$. Then 
\[ (q-p) \cdot (s-p) = |pq| \, |ps| \cos \alpha_k , 
\] 
where $\cdot$ denotes the dot-product. A straightforward calculation 
shows that 
\[ \cos \alpha_k = 1 - \Theta(1/k^2) , 
\]
implying that $\alpha_k$ is proportional to $1/k$. Thus, as $k$ tends to 
infinity, the smallest angle in any face of $P_k$ tends to zero.  

Observe that the polyhedron $P_k$ does not satisfy the assumptions in 
Theorem~\ref{thm999}. By a sufficiently small perturbation of the points 
of $S_k$, however, we obtain a polyhedron that does satisfy these 
assumptions and whose skeleton has unbounded stretch factor. We 
conclude that Theorem~\ref{thm999} does not hold for all convex 
simplicial polyhedra whose vertices are very close to a sphere.

\section{Convex Cycles in an Annulus}  \label{secannulus} 
Let $r$ and $R$ be real numbers with $R>r>0$. Define $\Ann_{r,R}$ to be 
the \emph{annulus} consisting of all points in $\IR^2$ that are on or 
between the two circles of radii $r$ and $R$ that are centered at the 
origin. Thus,  
\[ \Ann_{r,R} = \{ (x,y) \in \IR^2 : r \leq x^2 + y^2 \leq R^2 \} . 
\]  
We will refer to the circles of radii $r$ and $R$ that are centered at 
the origin as the \emph{inner circle} and the \emph{outer circle} of the 
annulus, respectively. 

In this section, we consider convex polygons $Q$ that contain the origin 
and whose boundary is in $\Ann_{r,R}$. The skeleton $\skel(Q)$ of such 
a polygon is the graph whose vertex and edge sets are the vertex 
and edge sets of $Q$, respectively. 

Throughout this section, we will use the function $f$ defined by 
\[ f(x) = \sqrt{x^2-1} + x \cdot \arcsin(1/x)   
\]
for $x \geq 1$.  

\begin{lemma}    \label{lemfincr}
The function $f$ is increasing for $x \geq 1$. 
\end{lemma}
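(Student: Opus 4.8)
The plan is to show that $f'(x)\ge 0$ for all $x>1$ (with $f$ continuous at $x=1$), which gives monotonicity on $[1,\infty)$. First I would differentiate term by term. For the first term, $\frac{d}{dx}\sqrt{x^2-1} = \frac{x}{\sqrt{x^2-1}}$. For the second term, writing $g(x) = x\arcsin(1/x)$, the product rule gives $g'(x) = \arcsin(1/x) + x\cdot\frac{d}{dx}\arcsin(1/x)$, and since $\frac{d}{dx}\arcsin(1/x) = \frac{1}{\sqrt{1-1/x^2}}\cdot(-1/x^2) = -\frac{1}{x^2\sqrt{1-1/x^2}} = -\frac{1}{x\sqrt{x^2-1}}$, we get $g'(x) = \arcsin(1/x) - \frac{1}{\sqrt{x^2-1}}$. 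Adding the two pieces, the $\pm\frac{1}{\sqrt{x^2-1}}$ terms combine into $\frac{x-1}{\sqrt{x^2-1}} = \sqrt{\frac{x-1}{x+1}}$, so
\[
 f'(x) = \sqrt{\frac{x-1}{x+1}} + \arcsin(1/x) .
\]

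Once this clean expression is in hand, the conclusion is immediate: for $x\ge 1$ we have $\frac{x-1}{x+1}\ge 0$, so the first summand is a nonnegative real number, and $\arcsin(1/x)>0$ since $0 < 1/x \le 1$. Hence $f'(x) > 0$ for all $x > 1$, so $f$ is strictly increasing on $(1,\infty)$; together with continuity of $f$ at $x=1$ (both $\sqrt{x^2-1}$ and $x\arcsin(1/x)$ are continuous there, the latter equal to $\arcsin 1 = \pi/2$), this shows $f$ is increasing on the whole interval $x\ge 1$.

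The only place requiring a little care is the differentiation of $\arcsin(1/x)$ and the simplification $\sqrt{1-1/x^2} = \frac{\sqrt{x^2-1}}{x}$ (valid since $x>0$), and the cancellation that produces the tidy form of $f'$; I expect this to be the ``main obstacle'' only in the sense of bookkeeping, not genuine difficulty. I would present the derivative computation, note the sign of each term, and conclude. No deeper idea is needed, since the lemma is an elementary calculus fact isolated for later use (it will be invoked to compare $f(R/r)$ across different values of $R/r$ when bounding the dilation of a convex cycle in the annulus).
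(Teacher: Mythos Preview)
Your proof is correct and is essentially identical to the paper's: both compute $f'(x)=\frac{x-1}{\sqrt{x^2-1}}+\arcsin(1/x)$ (your form $\sqrt{\tfrac{x-1}{x+1}}$ is the same expression) and observe that this is positive for $x>1$. Your version is slightly more detailed in spelling out the differentiation and the continuity at $x=1$, but the approach is the same.
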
 
\begin{proof}
The derivative of $f$ is given by 
\[ f'(x) = \frac{x-1}{\sqrt{x^2-1}} + \arcsin(1/x) . 
\] 
It is clear that $f'(x)>0$ for $x>1$. 
\end{proof}  

We will prove the following result:  
 
\begin{theorem}    \label{thmpolygon}  
Let $r$ and $R$ be real numbers with $R>r>0$ and let $Q$ be a convex 
polygon that contains the origin in its interior and whose boundary is 
contained in the annulus $\Ann_{r,R}$. Then the skeleton of $Q$ is an 
$f(R/r)$-spanner of the vertex set of $Q$. 
\end{theorem}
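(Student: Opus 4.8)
The plan is to prove the stronger statement that the geometric dilation of the convex cycle $\partial Q$ is at most $f(R/r)$: for every pair of points $p,q\in\partial Q$ the shorter of the two arcs of $\partial Q$ joining them has length at most $f(R/r)\,|pq|$. This implies the theorem, since $\skel(Q)$ is precisely the cycle $\partial Q$, so that for vertices $p,q$ of $Q$ the distance $|pq|_{\skel(Q)}$ is exactly the length of that shorter arc. By scaling — which affects neither the dilation nor the hypotheses — I would assume $r=1$ and set $\rho=R/r$. I would first record the elementary facts that, since $Q$ is convex with $\partial Q\subseteq\Ann_{1,\rho}$, we have $Q\subseteq\bar B(o,\rho)$ and hence $\mathrm{perim}(Q)\le 2\pi\rho$ by monotonicity of perimeter under inclusion of convex regions, and that $B(o,1)\subseteq Q$ because $\partial Q$ avoids the open unit disk. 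Fixing $p,q$ and writing $\Pi$ for the shorter arc, $\ell=|\Pi|$, $D=|pq|$, the goal is $\ell\le f(\rho)\,D$; the crude bound $\ell\le\frac12\,\mathrm{perim}(Q)\le\pi\rho$ already disposes of all pairs for which $D$ is not too small relative to $\rho$ (using $f(\rho)\ge f(1)=\pi/2$), so the content lies in the regime of small $D$.

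The heart of the proof is a reduction of the extremal case to one canonical shape. Among all admissible polygons (convex, with $o$ in the interior, with boundary in the annulus) carrying a pair $p,q$ realizing a given value of $\ell/D$, I would show by a finite sequence of deformations — each preserving admissibility, preserving that $\Pi$ is the shorter of the two arcs, and not decreasing $\ell/D$ — that one may take $\Pi$ to be symmetric about the perpendicular bisector of $pq$ and to consist, read from $p$ to $q$, of a segment tangent to the inner circle, then an arc of the outer circle, then the mirror segment, with $p$ and $q$ lying on the inner circle diametrically opposite one another (allowing degenerations in which the outer arc or the segments shrink away; for $\rho=1$ this recovers the plain circle). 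The moves that achieve this push any portion of $\Pi$ lying strictly between the two circles outward toward the outer circle, straighten the portions that must remain interior, and slide the endpoints onto the inner circle, the point being that the whole cycle stays convex and inside the annulus throughout. For the canonical shape, $\Pi$ is controlled by a single parameter — say the angular extent $2\psi$ of the outer-circle arc — and the inequality $\ell\le f(\rho)\,D$ becomes an explicit one-variable inequality; writing $\sqrt{\rho^2-1}=\rho\cos\phi$ and $\arcsin(1/\rho)=\phi$, one finds that it holds, with equality exactly when $\psi=\phi$, in which case the tangent segments have length $\sqrt{\rho^2-1}$, the outer arc has angular extent $2\phi$, and $D=2$ — the full curve being a ``rectangle'' inscribed between the circles and capped by two opposite arcs of the outer circle. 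This is the tight example. Lemma~\ref{lemfincr} (monotonicity of $f$) is what allows the comparison of values at different radii that arises when carrying out this optimization.

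I expect the genuine obstacle to be making the reduction to the canonical shape airtight, because the deformations must honour several constraints at once: enlarging $\Pi$ locally can make it longer than the complementary arc, which changes which of the two arcs is ``the shorter one,'' and any deformation of $\Pi$ must still be extendable to a convex closed curve lying entirely in the annulus (convexity of the whole $Q$, not merely of $\Pi$, is what rules out thin ``spikes'' and is therefore essential). A serviceable route around this is to first distil from the condition ``$\Pi$ is the shorter arc'' a local, deformation-stable consequence — for instance a bound on the total turning of $\Pi$ (morally, turning at most $\pi$: if $\Pi$ wound too far around $o$, the complementary arc would be confined to the small angular sector left free around $o$ and would itself turn out to be the shorter arc, a contradiction) — and then to perform the deformations within the class of convex arcs obeying that turning bound, where the constraints no longer interfere. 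The remaining tasks, namely analysing the degenerate endpoints of the one-variable optimization and checking that equality forces precisely the capped-rectangle configuration, are routine but must be done with care.
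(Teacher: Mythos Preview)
Your overall strategy---reduce Theorem~\ref{thmpolygon} to a bound on the geometric dilation of the boundary cycle, then argue that the extremal configuration is the ``stadium'' $C^*$ built from two tangent chords and two outer-circle caps---matches the paper's. But you are missing the paper's key technical device, and without it your deformation argument has exactly the gap you yourself flag.

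The paper invokes the halving-pair formula of Ebbers-Baumann, Gr\"une, and Klein (Lemma~\ref{lemEB}): for a convex cycle $C$, the dilation is attained at a \emph{halving} pair $p,q$ (the two arcs have equal length), and $\Dil(C)=|C|/(2h)$ where $h$ is the minimum halving distance. This single observation dissolves the obstacle you spend your last paragraph worrying about: once $p,q$ halve the perimeter, there is no ``shorter arc'' to keep track of, and you are free to enlarge the cycle (increasing $|C|$, hence the numerator) without ever violating a which-arc-is-shorter constraint. Your proposed workaround---extracting a turning bound of at most $\pi$ on the shorter arc and deforming within that class---is plausible heuristically but is both harder to make rigorous and unnecessary.

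With halving pairs in hand, the paper's reduction is quite different from your outward-push-and-symmetrize scheme. It takes supporting lines $L_p,L_q$ of $C$ at the halving pair, replaces $C$ by the longest convex cycle in $\Ann_{r,R}$ trapped between $L_p$ and $L_q$, then applies a homothety (shrink and translate) so that both line segments become tangent to the inner circle. The resulting shape is either ``type~1'' (two disjoint tangent chords, handled by Lemma~\ref{lemtwo}) or ``type~2'' (two tangent segments meeting at a point, handled via Lemma~\ref{lemfour} after a further translation); each of these lemmas is in turn proved by comparison with the one-chord cycle of Lemma~\ref{lemthree} or directly with $C^*$. At no stage is a continuous family of deformations analyzed; every step is a single replacement of one cycle by a larger one containing the current pair $p,q$ (or by a homothet, which preserves ratios exactly). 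Your approach, by contrast, asks for a monotone deformation preserving several constraints simultaneously, and as you note, convexity of the \emph{whole} closed curve---not just of $\Pi$---must be maintained, which your ``push $\Pi$ outward'' move does not obviously do.

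In short: the plan is right, the tight example is right, but you should replace the deformation argument by the halving-pair formula and a discrete chain of comparison cycles.
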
 

Theorem~\ref{thmpolygon} refers to the stretch factor of $\skel(Q)$, 
which is the maximum value of $|pq|_{\skel(Q)} / |pq|$ over all
pairs of distinct vertices $p$ and $q$ of $Q$. It turns out that the 
proof becomes simpler if we also consider points that are in the interior 
of edges. This gives rise to the notion of geometric dilation, which 
we recall in the following subsection.

\subsection{Geometric Dilation of Convex Cycles} 
Let $C$ be a convex cycle in $\IR^2$. We observe that $C$ is 
rectifiable, i.e., its length, denoted by $|C|$, is well-defined; 
see, for example, Section~1.5 in Toponogov~\cite{t-dgcs-06}. For any two 
distinct points $p$ and $q$ on $C$, there are two paths along $C$ that 
connect $p$ and $q$. We denote the length of the shorter of these two 
paths by $|pq|_C$. The \emph{geometric dilation} of $C$ is defined as 
\[ \Dil(C) = \max_{p,q \in C, p \neq q} \frac{|pq|_C}{|pq|} .
\]  
Ebbers-Baumann \emph{et al.}~\cite{egk-gdcpc-07} have proved that, for 
a convex cycle $C$, $\Dil(C)$ is well-defined. That is, the maximum 
in the definition of $\Dil(C)$ exists.   

Let $p$ and $q$ be two points on $C$. We say that these two points form a 
\emph{halving pair} if the two paths along $C$ between $p$ and $q$ have 
the same length. 

\begin{lemma}[Ebbers-Baumann \emph{et al.}~\cite{egk-gdcpc-07}] 
\label{lemEB}  
Let $C$ be a convex cycle in $\IR^2$, and let $h$ be the minimum 
Euclidean distance between the points of any halving pair. Then the 
geometric dilation of $C$ is attained by a halving pair with Euclidean 
distance $h$ and   
\[ \Dil(C) = \frac{|C|/2}{h} . 
\] 
\end{lemma}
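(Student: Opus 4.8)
Let $L=|C|$ and fix an arc-length parametrization $c$ of $C$ of period $L$; the halving pairs are exactly the pairs $\{c(s),c(s+L/2)\}$. Since $s\mapsto|c(s)-c(s+L/2)|$ is continuous and $L/2$-periodic, the number $h$ is a minimum, and $h>0$ because a convex cycle is a simple closed curve. For the minimizing halving pair $p,q$ we have $|pq|_C=L/2$ and $|pq|=h$, so $\Dil(C)\ge(L/2)/h$; that is one half of the statement. For the other half, for $\ell\in(0,L/2]$ let $D(\ell)=\min\{\,|xy|:x,y\in C,\ |xy|_C=\ell\,\}$, a minimum attained by compactness. Grouping the pairs in the definition of $\Dil(C)$ according to the value $\ell=|xy|_C\in(0,L/2]$, and using the already-cited fact that $\Dil(C)$ is a genuine maximum, attained at some $p_0\ne q_0$ (hence at $\ell_0=|p_0q_0|_C>0$), one obtains $\Dil(C)=\max_{\ell\in(0,L/2]}\ell/D(\ell)$. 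Since $D(L/2)=h$, the lemma is thus reduced to showing that this maximum is attained at $\ell=L/2$ — and once that is known, it is realized there precisely by the halving pair of length $h$, which also yields the ``attained by a halving pair'' assertion.

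\textbf{The crux.} It suffices to show that $D(\ell)/\ell$ is non-increasing on $(0,L/2]$: then $D(\ell)/\ell\ge D(L/2)/(L/2)=2h/L$ for all $\ell\le L/2$, so $\ell/D(\ell)\le(L/2)/h$, as required. This monotonicity is the only substantive step, and it is where convexity of $C$ enters. The plan is to take a chord realizing $D(\ell)$, with endpoints $a,b$ and arc $\gamma\subseteq C$ of length $\ell$ between them, and to exploit that minimality of $|ab|$ among arc-length-$\ell$ chords forces, on differentiating $|c(s)-c(s+\ell)|^2$ in $s$, the tangent lines of $C$ at $a$ and at $b$ to be mirror images of each other in the line $ab$. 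This rigidity of $\gamma$ relative to its chord, together with convexity of $\gamma$, should let one compare $\gamma$ with a suitable halving arc of $C$, bound some halving chord in terms of $|ab|=D(\ell)$, and thereby bring the minimum $h$ into play so as to conclude $D(\ell)\ge\frac{2h}{L}\ell$. The extremal case, giving equality at $\ell=L/2$, is the one in which the minimizing arc-length-$\ell$ chord already is a minimizing halving chord.

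\textbf{Main obstacle.} The difficulty is concentrated entirely in the monotonicity of $D(\ell)/\ell$, and the tempting shortcuts all fail. One cannot pass to sub-arcs of the minimizing halving arc: on a convex curve a shorter arc is ``straighter'' — its chord-to-arc ratio is larger — so such sub-arcs have chords that are too long, and it is essential to use the positional freedom in the definition of $D(\ell)$, hence the mirror-angle condition. One cannot reduce to concavity of the single-source chord functions $\ell\mapsto|a-c(s_a+\ell)|$, because these need not be concave (on a long thin rectangle such a function is eventually convex), so the estimate must be carried out on the lower envelope $D$ itself. And the naive bound ``$h\le$ (a halving chord through an endpoint of $\gamma$)'' is far too weak, since that particular halving chord may be as long as the diameter of $C$; one must exhibit a genuinely short halving chord, which is where the fact that $h$ is the minimum over \emph{all} halving pairs is really used. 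Aside from this, there are routine points to settle: that $D$ is well-defined and (lower semi)continuous, so that the reformulation is legitimate; and that $C$ may have corners — at most countably many — so the first-order condition and the convexity estimates above should be phrased in terms of one-sided tangents, or first established for smooth convex curves and then obtained for general $C$ by approximation.
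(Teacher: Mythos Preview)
The paper does not prove this lemma at all: it is stated as a result of Ebbers--Baumann, Gr\"une, and Klein and simply cited. So there is no ``paper's own proof'' to compare your attempt against.

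As for your proposal on its own merits: the setup and the reduction are clean and correct. The easy inequality $\Dil(C)\ge (L/2)/h$ is immediate, and rewriting $\Dil(C)=\max_{\ell\in(0,L/2]}\ell/D(\ell)$ with $D(\ell)=\min\{|xy|:|xy|_C=\ell\}$ is a legitimate reformulation. Your observation that it then suffices to show $D(\ell)/\ell$ is non-increasing on $(0,L/2]$ is also correct, and your counterexamples ruling out the two tempting shortcuts (sub-arcs of a minimizing halving arc; concavity of single-source chord functions) are well chosen.

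However, the proposal is not a proof: the monotonicity of $D(\ell)/\ell$ is precisely the content of the lemma, and you do not establish it. What you have under ``The crux'' is a plan (the mirror-angle first-order condition at a minimizing chord, followed by an unspecified comparison with a halving chord), and your ``Main obstacle'' section is, in effect, an acknowledgement that you have not carried this plan through. The first-order condition you derive is correct and is indeed the starting point of the argument in the cited paper, but the remaining step --- producing from a minimizing $\ell$-chord a halving chord that is short enough to yield $D(\ell)\ge (2h/L)\,\ell$ --- is the whole difficulty, and nothing in your write-up supplies it. If you want to complete the argument along these lines, consult the original source; the idea there is rather different from a direct monotonicity computation and uses a global comparison (via central symmetrization / the halving-pair transformation of the curve) rather than a local analysis at the minimizing chord.
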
 

\subsection{Convex Cycles in an Annulus}  \label{secCC} 
In this section, we consider convex cycles $C$ that contain the origin 
in their interior and that are contained in the annulus $\Ann_{r,R}$. 
We will prove that the geometric dilation of such a cycle is at most 
$f(R/r)$, where $f$ is the function defined in the beginning of 
Section~\ref{secannulus}. Clearly, this result will imply  
Theorem~\ref{thmpolygon}.  

We start by giving an example of a convex cycle whose geometric dilation 
is equal to $f(R/r)$. Let $C^*$ be the convex cycle that consists of 
the two vertical tangents at the inner circle of $\Ann_{r,R}$ that have 
their endpoints at the outer circle, and the two arcs on the outer 
circle that connect these tangents; see the figure below. 

\begin{center}
   \includegraphics[scale=0.6]{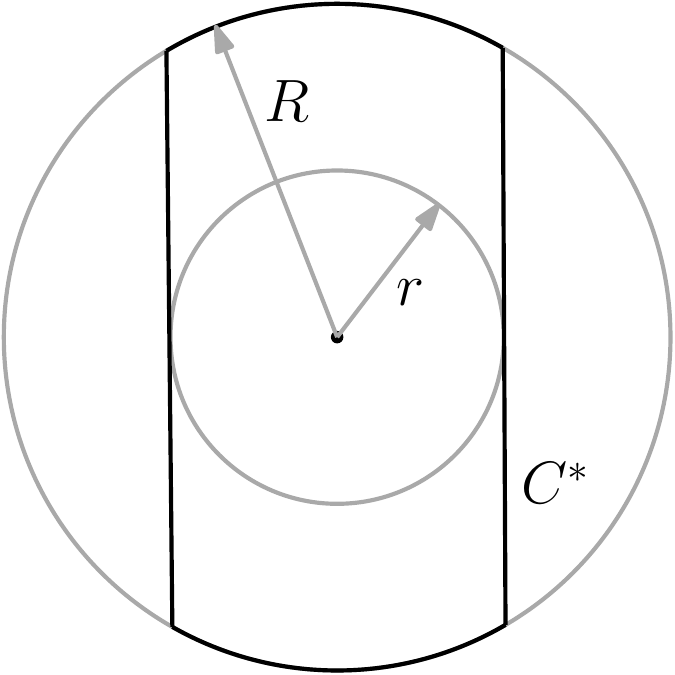}
\end{center}

A simple calculation shows that the length of $C^*$ satisfies 
\[ | C^* | = 4 \sqrt{R^2-r^2} + 4R \cdot \arcsin(r/R) = 
           4r \cdot f(R/r) . 
\] 

\begin{lemma}  \label{lemCstar} 
The geometric dilation of $C^*$ satisfies 
\[ \Dil \left( C^* \right) = f(R/r) . 
\] 
\end{lemma}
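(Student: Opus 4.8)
The plan is to apply Lemma~\ref{lemEB} to the specific cycle $C^*$, so the entire computation reduces to (i) identifying the halving pairs of $C^*$ and the minimum Euclidean distance $h$ between such a pair, and (ii) plugging $h$ and $|C^*|$ into the formula $\Dil(C^*) = (|C^*|/2)/h$. Since $|C^*| = 4r\cdot f(R/r)$ is already recorded in the excerpt, it suffices to show that $h = 2r$; then $\Dil(C^*) = (2r\cdot f(R/r))/(2r) = f(R/r)$, as claimed.

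First I would set up coordinates so that the two vertical tangent segments of $C^*$ lie on the lines $x = -r$ and $x = r$, with the outer circle of radius $R$ centered at the origin. By the left–right and up–down symmetry of $C^*$, the two points $(-r,0)$ and $(r,0)$ form a halving pair (each of the two boundary paths between them has length $|C^*|/2$), and their Euclidean distance is exactly $2r$. So $h \le 2r$. The substance of the proof is the reverse inequality: every halving pair of $C^*$ has Euclidean distance at least $2r$.

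To prove $h \ge 2r$, I would parametrize a halving pair. Let $p$ be a point on $C^*$ and let $q$ be the unique point such that $p$ and $q$ split $C^*$ into two equal-length arcs; by Lemma~\ref{lemEB}'s setup, such a partner exists and moves continuously with $p$. Using the symmetry of $C^*$, I would argue that the chord $pq$ of a halving pair always separates $C^*$ into two congruent halves, so in fact $q$ is the image of $p$ under the point reflection through the origin (the center of symmetry of $C^*$). Hence $|pq| = 2|p|$, i.e., twice the distance from $p$ to the origin. Since the boundary of $C^*$ lies in the annulus $\Ann_{r,R}$ — every point of the two tangent segments is at distance $\ge r$ from the origin (the tangent lines $x=\pm r$ are at distance $r$), and every point of the two outer arcs is at distance $R \ge r$ — we get $|p| \ge r$ for all $p$ on $C^*$, so $|pq| \ge 2r$ for every halving pair. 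Combining with $h \le 2r$ gives $h = 2r$, and the lemma follows from Lemma~\ref{lemEB}.

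The main obstacle is the claim that halving pairs of $C^*$ are exactly antipodal pairs with respect to the center of symmetry. For a general convex cycle this can fail, but $C^*$ has a point symmetry (central symmetry about the origin), and for centrally symmetric convex cycles the halving partner of $p$ is forced to be its central reflection: the arc from $p$ to $-p$ going one way is congruent, via the central reflection, to the arc going the other way, hence the two arcs have equal length. One must also check there is no \emph{other} halving partner of $p$ — this follows because, as one traverses $C^*$ from $p$, the difference (length of arc traversed) $-$ (length of the complementary arc) is strictly monotone, so it vanishes at exactly one point, which must then be $-p$. Once this structural fact is in hand, the distance bound $|p|\ge r$ is immediate from $C^*\subset \Ann_{r,R}$, and no further calculation beyond the already-stated value of $|C^*|$ is needed.
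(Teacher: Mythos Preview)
Your proof is correct and follows essentially the same approach as the paper: both arguments use the central symmetry of $C^*$ to identify halving pairs with antipodal pairs $(p,-p)$, bound $|pq|\ge 2r$ from the fact that every point of $C^*$ lies at distance at least $r$ from the origin, and conclude via Lemma~\ref{lemEB} together with $|C^*|=4r\,f(R/r)$. The only cosmetic difference is that the paper phrases the bound $|pq|\ge 2r$ as ``the inner circle lies between the two lines through $p$ and $q$ perpendicular to $pq$'' and then exhibits the pair $(\pm r,0)$ to get the matching lower bound on $\Dil(C^*)$, whereas you compute $h=2r$ directly and invoke the equality form of Lemma~\ref{lemEB}.
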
 
\begin{proof} 
Consider any halving pair $p,q$ of $C^*$. Since $C^*$ is centrally 
symmetric with respect to the origin, we have $q=-p$. The inner circle
of $\Ann_{r,R}$ is between the two lines through $p$ and $q$ that are 
orthogonal to the line segment $pq$. Therefore, $|pq| \geq 2r$. 
Thus, by Lemma~\ref{lemEB}, 
\[ \Dil \left( C^* \right) \leq \frac{|C^*|}{4r} = f(R/r) .  
\] 
If we take for $p$ and $q$ the leftmost and rightmost points of the inner 
circle, then $|pq|_{C^*} / |pq| = f(R/r)$. Therefore, we have 
$\Dil \left( C^* \right) = f(R/r)$. 
\end{proof}  

In the following lemmas, we consider special types of convex cycles in 
$\Ann_{r,R}$. For each such type, we prove an upper bound of $f(R/r)$ 
on their geometric dilation. In Theorem~\ref{thmcycle}, we will consider 
the general case and reduce the problem of bounding the geometric 
dilation to one of the special types.   

\begin{lemma}   \label{lemoutercircle} 
Let $C$ be a convex cycle in $\Ann_{r,R}$ that contains the origin in 
its interior, and let $p$ and $q$ be two distinct points on $C$ 
such that $\Dil(C) = |pq|_C / |pq|$. If both $p$ and $q$ are on the 
outer circle of $\Ann_{r,R}$, then $\Dil(C) \leq f(R/r)$. 
\end{lemma}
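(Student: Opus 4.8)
The plan is to prove the stronger bound $\Dil(C)\le\pi/2$, which suffices: $f$ is increasing by Lemma~\ref{lemfincr} and $f(1)=\pi/2$, so $f(R/r)\ge\pi/2$ whenever $R>r>0$. Write $s=|pq|$. Since $p$ and $q$ lie on the outer circle (of radius $R$), $s\le 2R$. Let $\ell$ be the line through $p$ and $q$; it cuts the closed disk of radius $R$ centered at the origin into two circular segments, and we let $Z$ be the one contained in the closed halfplane of $\ell$ that does not contain the origin (if $pq$ is a diameter, take either segment). Then $Z$ is a convex region whose boundary is $\overline{pq}$ together with the shorter of the two arcs of the outer circle between $p$ and $q$, and that arc has length $2R\arcsin(s/(2R))$.

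The crux is to show that one of the two arcs of $C$ joining $p$ and $q$ lies inside $Z$. Since $C$ is convex, either $\overline{pq}\subseteq C$ --- in which case one of the two paths along $C$ from $p$ to $q$ is $\overline{pq}$ itself, so $|pq|_C\le|pq|$ and $\Dil(C)=|pq|_C/|pq|\le 1<\pi/2$, and we are done --- or $\ell$ meets $C$ only at $p$ and $q$, so $C$ splits into two arcs, one in each of the two closed halfplanes of $\ell$ (meeting only at $p$ and $q$ on $\ell$). Let $\gamma$ be the arc on the side of $Z$. Because $C\subseteq\Ann_{r,R}$ lies in the closed disk of radius $R$ and $\gamma$ lies in the closed halfplane containing $Z$, we get $\gamma\subseteq Z$. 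Now $\gamma\cup\overline{pq}$ bounds a convex region contained in the convex region $Z$, so by monotonicity of perimeter under inclusion of convex bodies --- the ``a convex arc nested inside another convex arc is shorter'' principle already used in Section~\ref{subsecSFCP} --- we get $|\gamma|+|pq|\le 2R\arcsin(s/(2R))+|pq|$, i.e.\ $|\gamma|\le 2R\arcsin(s/(2R))$.

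Since $|pq|_C$ is the length of the shorter of the two $p$--$q$ arcs of $C$, it is at most $|\gamma|$, so
\[ \Dil(C)=\frac{|pq|_C}{|pq|}\le\frac{2R\arcsin(s/(2R))}{s}=\frac{\arcsin u}{u},\qquad u:=\frac{s}{2R}\in(0,1]. \]
Finally, $u\mapsto\arcsin(u)/u$ is increasing on $(0,1]$ --- its derivative has the sign of $\tan(\arcsin u)-\arcsin u\ge 0$ --- so it is at most its value at $u=1$, namely $\pi/2$. Hence $\Dil(C)\le\pi/2\le f(R/r)$.

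The proof is short; the points that need care are (i) the inclusion $\gamma\subseteq Z$, which uses both that $C$ is contained in the disk of radius $R$ and the elementary fact that a line meets a convex cycle either in at most two points or along a chord of it; (ii) dispatching that degenerate ``chord'' case separately; and (iii) the routine monotonicity of $\arcsin(u)/u$. It is worth noting that the inner radius $r$ plays no essential role here --- the bound $\pi/2$ is already attained, e.g.\ when $C$ is the circle of radius $R$ --- and the statement's weaker bound $f(R/r)$ simply absorbs it.
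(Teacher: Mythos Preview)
Your proof is correct and follows essentially the same route as the paper's. The paper compresses the whole argument into the single line $\Dil(C)=|pq|_C/|pq|\le|pq|_{C'}/|pq|\le\Dil(C')=\pi/2=f(1)\le f(R/r)$, where $C'$ is the outer circle; your proof unpacks the inequality $|pq|_C\le|pq|_{C'}$ via the explicit circular-segment inclusion and perimeter monotonicity, and unpacks $|pq|_{C'}/|pq|\le\pi/2$ via the monotonicity of $\arcsin(u)/u$, but the underlying idea is identical.
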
 
\begin{proof} 
Let $C'$ denote the outer circle of $\Ann_{r,R}$. Then   
\[ \Dil(C) = \frac{|pq|_C}{|pq|} \leq \frac{|pq|_{C'}}{|pq|} \leq  
   \Dil \left( C' \right) = \pi/2 = f(1) . 
\] 
Since, by Lemma~\ref{lemfincr}, $f(1) \leq f(R/r)$, it follows that 
$\Dil(C) \leq f(R/r)$. 
\end{proof} 

\begin{lemma}   \label{lemthree}  
Consider a line segment $L$ that is tangent to the inner circle of 
$\Ann_{r,R}$ and has both endpoints on the outer circle. Let $C$ be the 
convex cycle that consists of $L$ and the longer arc on the outer circle 
that connects the endpoints of $L$. Then 
\[ \Dil(C) \leq f(R/r) .
\]  
\end{lemma}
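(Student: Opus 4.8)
The plan is to apply Lemma~\ref{lemEB}. The cycle $C$ is convex: it bounds the convex region obtained from the closed disk of radius $R$ by removing the cap cut off by the tangent line $L$, and that region contains the origin in its interior. Hence $\Dil(C) = (|C|/2)/h$, where $h$ is the smallest Euclidean distance attained by a halving pair of $C$, and this value is attained by some halving pair. It therefore suffices to prove $|pq| \ge |C|/(2 f(R/r))$ for every halving pair $p,q$ of $C$. First I would place coordinates so that $L$ is the vertical chord $x = r$ of the outer circle, tangent to the inner circle at $(r,0)$, with endpoints $A = (r,y_0)$ and $B = (r,-y_0)$, where $y_0 = \sqrt{R^2-r^2}$. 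Computing the two pieces of $C$ --- the chord $L$, of length $2\sqrt{R^2-r^2}$, and the longer arc of the outer circle, which subtends the angle $\pi + 2\arcsin(r/R)$ --- gives
\[
  |C| = 2\sqrt{R^2-r^2} + \pi R + 2 R \arcsin(r/R) ,
\]
so that $|C|/2 = r\, f(R/r) + (\pi/2) R$, using the definition of $f$ together with the identity $r\, f(R/r) = \sqrt{R^2-r^2} + R \arcsin(r/R)$.

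Next I would classify the halving pairs. Since
\[
  |C|/2 = \sqrt{R^2-r^2} + (\pi/2) R + R\arcsin(r/R) > 2\sqrt{R^2-r^2} = |L|
\]
(the middle inequality holds because $(\pi/2) R > R \ge \sqrt{R^2-r^2}$), no halving pair can have both of its points on the chord $L$: one of the two subpaths of $C$ joining them would run inside $L$ and be strictly too short. Hence every halving pair is of one of two types: (i) both points on the outer circle, or (ii) \emph{mixed}, with one point on $L$ and the other on the outer arc. In case (i), if the halving pair $p,q$ realizing $h$ (and hence the dilation) lies on the outer circle of $\Ann_{r,R}$, then since $\Dil(C) = |pq|_C/|pq|$ for this pair, Lemma~\ref{lemoutercircle} gives $\Dil(C) \le f(R/r)$ at once.

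The substance of the proof is the mixed case. Writing $p = (r,y)$ for the point on $L$, so $y \in [-y_0, y_0]$, the key observation is that the halving condition determines the partner $q$ uniquely: $q$ must lie on the outer arc at arc-distance $(\pi/2 + \arcsin(r/R)) R + y$ from $A$, and a short angle computation identifies it as $q(y) = (-R\cos(y/R), -R\sin(y/R))$ (so, as $p$ slides along $L$, its partner sweeps out exactly half of the outer circle). Expanding the squared distance one gets
\[
  |p(y) - q(y)|^2 = r^2 + R^2 + y^2 + 2 r R \cos(y/R) + 2 R y \sin(y/R) ,
\]
which is even in $y$; so I may assume $y \ge 0$, whence $y/R \in [0, y_0/R] \subseteq [0,1)$ and every trigonometric term above is nonnegative. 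To see this quantity is at least $(r+R)^2$ it then suffices to check
\[
  y^2 + 2 R y \sin(y/R) \ge 2 r R \, (1 - \cos(y/R)) ,
\]
and here the left-hand side is at least $y^2$, while by $1 - \cos\theta \le \theta^2/2$ the right-hand side is at most $2 r R \cdot (y/R)^2 / 2 = (r/R) y^2 < y^2$. Therefore $|pq| \ge r + R$ for every mixed halving pair, with equality at the diametral pair $(r,0)$, $(-R,0)$. It remains to note that
\[
  (r + R)\, f(R/r) = r\, f(R/r) + R\, f(R/r) \ge r\, f(R/r) + R \cdot (\pi/2) = |C|/2 ,
\]
the inequality being exactly $f(R/r) \ge f(1) = \pi/2$, valid by Lemma~\ref{lemfincr}. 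Hence $|pq| \ge r + R \ge |C|/(2 f(R/r))$ in the mixed case too. Combining the two cases, $h \ge |C|/(2 f(R/r))$, and therefore $\Dil(C) = (|C|/2)/h \le f(R/r)$.

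I expect the main obstacle to be the mixed case: recognizing that the halving partner of a point of $L$ sweeps out exactly half of the outer circle, and then locating the minimum of $|p(y) - q(y)|$. It is a little surprising that this minimum is the ``diametral'' value $r + R$, attained in the interior of $L$ rather than at an endpoint; this rests on the monotonicity of $|p(y) - q(y)|$ in $|y|$, which is precisely what the elementary estimate above provides. Everything else is routine: bookkeeping with arc lengths, the invocation of Lemma~\ref{lemoutercircle} for the outer-circle case, and the comparison $f(R/r) \ge f(1) = \pi/2$.
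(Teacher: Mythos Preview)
Your proof is correct and follows essentially the same route as the paper: apply Lemma~\ref{lemEB}, dispose of the case where both points of the extremal halving pair lie on the outer circle via Lemma~\ref{lemoutercircle}, and in the mixed case show $|pq|\ge R+r$, then finish with $|C|/(2(R+r))\le f(R/r)$ via the monotonicity of $f$. The only real difference is how the lower bound $|pq|\ge R+r$ is obtained in the mixed case: the paper differentiates $g(a)=|pq|^2$ and checks $g'(a)=2a+2(R-r)\sin(a/R)+2a\cos(a/R)>0$, whereas you use the estimate $1-\cos\theta\le\theta^2/2$ to compare the two sides directly; both arguments yield the same minimum $R+r$ at the midpoint of $L$.
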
 
\begin{proof} 
We may assume without loss of generality that $L$ is horizontal and 
touches the lowest point of the inner circle; see the figure below. 

\begin{center}
   \includegraphics[scale=0.6]{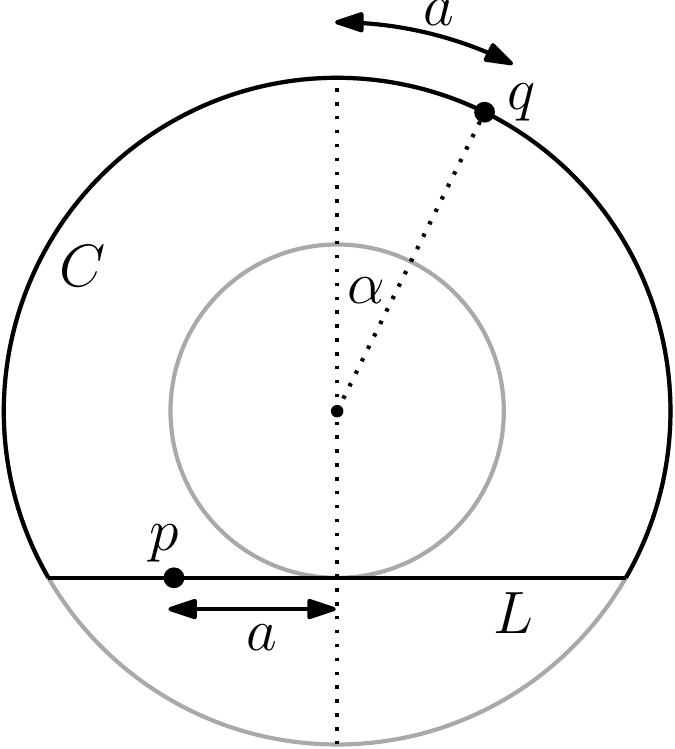}
\end{center}

Let $p$ and $q$ form a halving pair of $C$ that attains the geometric 
dilation of $C$. Observe that at least one of $p$ and $q$ is on the 
outer circle of $\Ann_{r,R}$. If both $p$ and $q$ are on the outer 
circle, then $\Dil(C) \leq f(R/r)$ by Lemma~\ref{lemoutercircle}. 
Otherwise, we may assume without loss of generality that 
(i) $p$ is on $L$ and on or to the left of the $y$-axis, and 
(ii) $q$ is on the outer circle, on or to the right of the $y$-axis 
and above the $x$-axis. 

We first prove that $|pq| \geq R+r$. Let $p$ have coordinates $p=(-a,-r)$ 
for some real number $a$ with $0 \leq a \leq \sqrt{R^2-r^2}$, and 
let $\alpha$ be the angle between the $y$-axis and the vector from the 
origin to $q$; see the figure above. Since $p$ and $q$ form a halving 
pair, the clockwise arc from the highest point on the outer circle to 
the point $q$ has length $a$. Therefore, $\alpha = a/R$ and, thus, 
the coordinates of the point $q$ are 
$q = (R \cdot \sin(a/R),R \cdot \cos(a/R))$.   
If we define the function $g$ by 
\[ g(a) = (a+R \cdot \sin(a/R))^2 + (r+R \cdot \cos(a/R))^2  
\]
for $0 \leq a \leq \sqrt{R^2-r^2}$, then $|pq|^2 = g(a)$. 
The derivative of $g$ satisfies 
\[ g'(a) = 2a + 2(R-r) \cdot \sin(a/R) + 2a \cdot \cos(a/R) , 
\]
which is positive for $0 < a \leq \sqrt{R^2-r^2}$. Therefore, the 
function $g$ is increasing and 
\[ |pq|^2 = g(a) \geq g(0) = (R+r)^2 , 
\]
implying that $|pq| \geq R+r$. 

We conclude that 
\[ \Dil(C) = \frac{|pq|_C}{|pq|} = \frac{|C|/2}{|pq|} \leq 
           \frac{|C|}{2(R+r)} . 
\]  
To complete the proof, it suffices to show that 
\begin{equation}  \label{eq666} 
         \frac{|C|}{2(R+r)} \leq f(R/r) . 
\end{equation}  
We observe that the length of $C$ satisfies 
\[ |C| = 2 \sqrt{R^2-r^2} + 2R \cdot \arcsin (r/R) + \pi R . 
\] 
Recall that 
\[ f(x) = \sqrt{x^2-1} + x \cdot \arcsin(1/x) .   
\]
Thus, (\ref{eq666}) becomes 
\[ \frac{\sqrt{R^2-r^2} + R \cdot \arcsin (r/R) + \pi R/2}{R+r} 
    \leq 
   \frac{\sqrt{R^2-r^2} + R \cdot \arcsin (r/R)}{r} .  
\]
The latter inequality is equivalent to 
\[ \pi/2 \leq \frac{\sqrt{R^2-r^2} + R \cdot \arcsin (r/R)}{r} , 
\] 
i.e., 
\[ f(1) \leq f(R/r) . 
\] 
Since the latter inequality follows from Lemma~\ref{lemfincr}, 
we have shown that (\ref{eq666}) holds. 
\end{proof}  

\begin{lemma}   \label{lemfour}  
Let $b$ be a point in $\Ann_{r,R}$ that is on the negative $y$-axis. 
Let $a$ and $c$ be two points on the outer circle of $\Ann_{r,R}$ 
such that (i) both $a$ and $c$ have the same $y$-coordinate and are
below the $x$-axis and (ii) both line segments $ab$ and $bc$ are 
tangent to the inner circle of $\Ann_{r,R}$. 
Let $C$ be the convex cycle that consists of the two line segments $ab$
and $bc$, and the longer arc on the outer circle that connects $a$ 
and $c$. Then 
\[ \Dil(C) \leq f(R/r) .
\]  
\end{lemma}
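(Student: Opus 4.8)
The plan is to mimic the structure of the proof of Lemma~\ref{lemthree}: use Lemma~\ref{lemEB} to reduce the problem to a bound of the form $|C|/2 \leq f(R/r) \cdot |pq|$ for a halving pair $p,q$ attaining the dilation, and then argue that the relevant halving pair must be ``spread out'' enough that $|pq|$ is bounded below by a suitable quantity. First I would compute $|C|$ explicitly: if $b = (0,-s)$ with $r \leq s \leq R$ and the two tangent segments $ab, bc$ each touch the inner circle, then each tangent has length $\sqrt{s^2 - r^2}$, the half-angle each tangent makes with the $y$-axis is $\arcsin(r/s)$, so the points $a$ and $c$ subtend a known angle at the origin, and the longer outer arc has length $2R$ times ($\pi$ minus that half-subtended-angle). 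This gives $|C|$ as an explicit (monotone-looking) function of $s$ and the ratio $R/r$.

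Next I would analyze the halving pair $p,q$ attaining $\Dil(C)$. By Lemma~\ref{lemEB} at least one of $p,q$ lies on the longer outer arc; if both do, Lemma~\ref{lemoutercircle} finishes immediately. So assume $q$ is on the outer arc and $p$ is on one of the tangent segments, say $ab$, with the halving condition forcing the position of $q$ to be determined by the arclength from $a$ to $q$ equalling the arclength along $C$ from $p$ around through $b$ and $c$ to $q$ (the ``other side''). As in Lemma~\ref{lemthree}, I would parametrize $p$ by its distance from $a$ (equivalently from $b$) along the tangent, derive the forced position of $q$ on the outer circle from the halving equation, write $|pq|^2$ as a function of that parameter, and show its derivative is positive (or do a direct geometric argument: the inner circle, or some explicit chord related to it, separates $p$ from $q$), yielding a lower bound $|pq| \geq \mu$ for an explicit $\mu = \mu(r,R,s)$. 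The case $p$ on $bc$ is symmetric. There is also the degenerate subcase where $p$ coincides with $b$ (the two tangent segments meet there), which should be covered by taking the worst endpoint of the parametrization.

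Finally I would combine: $\Dil(C) = (|C|/2)/|pq| \leq (|C|/2)/\mu$, and reduce the desired inequality $(|C|/2)/\mu \leq f(R/r)$ to a one-variable inequality in $s$ (for fixed $R/r$), or eliminate $s$ by monotonicity. The cleanest route is probably to show the whole bound is monotone in $s$ so that the extreme case is $s = r$ (i.e.\ $b$ on the inner circle), where the two tangent segments are collinear and $C$ degenerates to exactly the configuration of Lemma~\ref{lemthree} — or $s = R$, where $b$ is on the outer circle and $C$ is an inscribed triangle, handled by Lemma~\ref{lemoutercircle}-type reasoning. Then the result follows from Lemma~\ref{lemthree} and Lemma~\ref{lemfincr} exactly as before.

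I expect the main obstacle to be the halving-pair case analysis: verifying that $|pq|$ (as a function of where $p$ sits on the tangent segment, with $q$ slaved to it by the halving constraint) is genuinely minimized at an endpoint, since the forced motion of $q$ along the outer arc as $p$ moves is somewhat delicate. The derivative computation, as in Lemma~\ref{lemthree}, may not have an obviously-signed derivative, and I may instead need a direct separation argument — e.g.\ showing that the chord of the inner circle parallel to $pq$, or the tangent line at the relevant point, lies strictly between $p$ and $q$ — to get the lower bound on $|pq|$ cleanly. The reduction to $s \in \{r, R\}$ via monotonicity is the secondary risk, since $|C|$ and $\mu$ both depend on $s$ and the quotient's monotonicity is not self-evident.
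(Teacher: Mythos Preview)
Your plan is workable in spirit but takes a much harder road than the paper, and the two risks you flag are exactly the places where it is likely to bog down. The paper sidesteps both of them with a one-line reduction that you are missing.

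Here is the paper's idea. Take a halving pair $p,q$ attaining $\Dil(C)$. If both lie on the outer circle, Lemma~\ref{lemoutercircle} finishes. Otherwise, say $q\in bc$ and $p$ is on the outer arc. Now simply \emph{extend the segment $bc$ to the full chord $L$} of the outer circle (still tangent to the inner circle), and let $C'$ be $L$ together with the longer outer arc joining its endpoints. Then $C'$ is precisely a cycle of the type in Lemma~\ref{lemthree}. Both $p$ and $q$ lie on $C'$, and an easy triangle-inequality check gives $|pq|_C \le |pq|_{C'}$: the path along $C$ from $p$ through $a$ and $b$ to $q$ is no longer than the corresponding path along $C'$ from $p$ through the far endpoint of $L$ to $q$, while the other path (through $c$) is literally the same on both cycles. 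Hence
\[
\Dil(C)=\frac{|pq|_C}{|pq|}\le \frac{|pq|_{C'}}{|pq|}\le \Dil(C')\le f(R/r)
\]
by Lemma~\ref{lemthree}. No computation of $|C|$, no parametrised derivative of $|pq|^2$, and no monotonicity in $s$ is needed.

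By contrast, your plan requires you to (i) control the sign of the derivative of $|pq|^2$ along a halving-constrained motion on a shape that now has a corner at $b$, and (ii) prove that the resulting bound $(|C|/2)/\mu(s)$ is monotone in $s$. Neither is obvious: the corner at $b$ makes the halving-pair parametrisation genuinely messier than in Lemma~\ref{lemthree}, and your endpoint $s=R$ does not reduce to a case already handled (the cycle there is not ``an inscribed triangle'' but two chords plus an arc, and Lemma~\ref{lemoutercircle} does not directly apply). The key insight you are missing is that you never have to analyse $C$ on its own terms: extending one tangent segment to a full chord \emph{already} reduces you to Lemma~\ref{lemthree}.
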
 
\begin{proof}  
The figure below illustrates the situation. 

\begin{center}
   \includegraphics[scale=0.6]{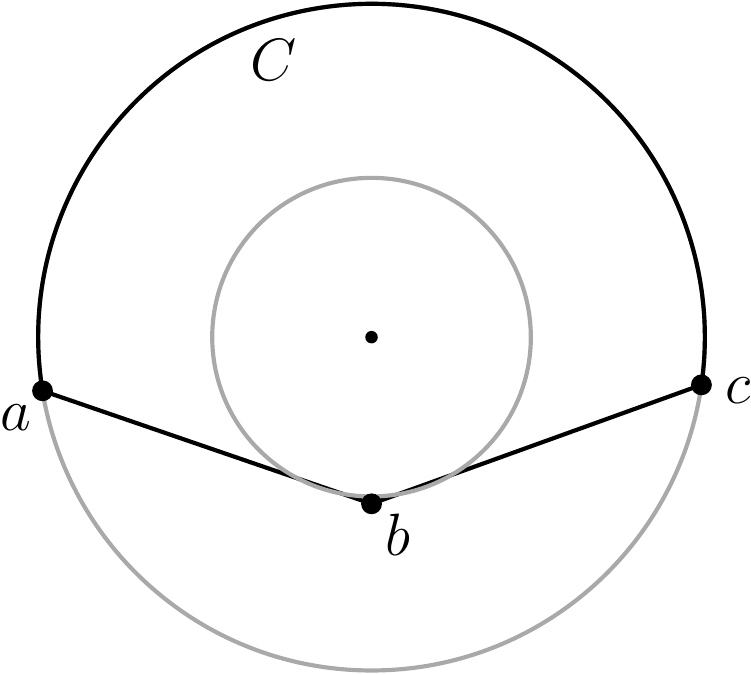}
\end{center}

Let $p$ and $q$ form a halving pair of $C$ that attains the geometric 
dilation of $C$. Observe that at least one of $p$ and $q$ is on the 
outer circle of $\Ann_{r,R}$. If both $p$ and $q$ are on the outer 
circle of $\Ann_{r,R}$, then $\Dil(C) \leq f(R/r)$ by 
Lemma~\ref{lemoutercircle}. Otherwise, we may assume without loss of 
generality that $p$ is on the outer circle of $\Ann_{r,R}$ and $q$ is 
on the line segment $bc$, as in the figure below. 

\begin{center}
   \includegraphics[scale=0.6]{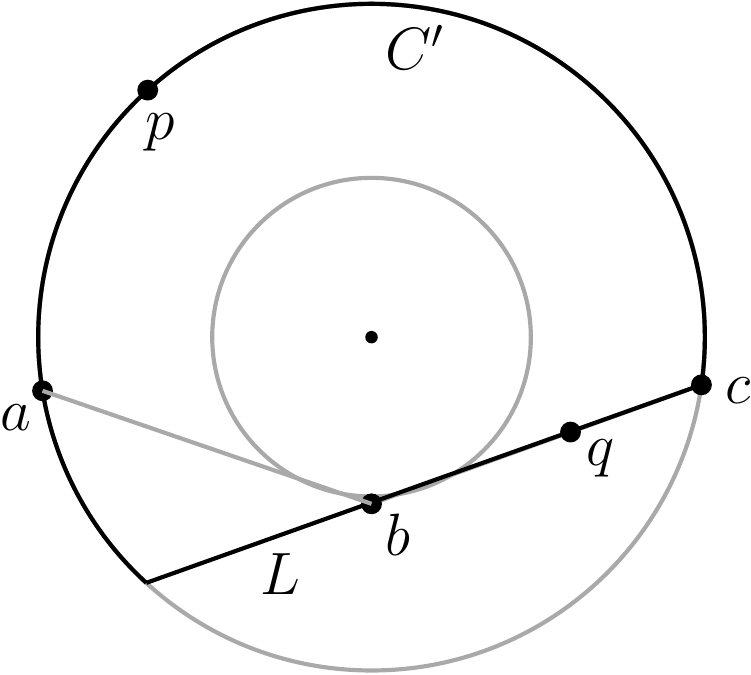}
\end{center}

Let $L$ be the maximal line segment in $\Ann_{r,R}$ that contains the 
segment $bc$. Let $C'$ be the convex cycle consisting of $L$ and the 
longer arc on the outer circle connecting the two endpoints of $L$. Then 
\[ \Dil(C) = \frac{|pq|_C}{|pq|} \leq \frac{|pq|_{C'}}{|pq|} \leq  
   \Dil \left( C' \right) . 
\] 
By Lemma~\ref{lemthree}, we have $\Dil \left( C' \right) \leq f(R/r)$. 
It follows that $\Dil(C) \leq f(R/r)$. 
\end{proof}  

\begin{lemma}   \label{lemtwo}  
Consider two non-crossing line segments $L_1$ and $L_2$ that are tangent 
to the inner circle of $\Ann_{r,R}$ and have their endpoints on the outer 
circle. Let $C$ be the convex cycle that consists of $L_1$, $L_2$, 
and the two arcs on the outer circle that connect $L_1$ and $L_2$; 
one of these two arcs may consist of a single point. Then 
\[ \Dil(C) \leq f(R/r) .
\]  
\end{lemma}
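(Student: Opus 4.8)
The plan is to follow the pattern of Lemmas~\ref{lemthree} and~\ref{lemfour}. Fix a halving pair $p,q$ of $C$ that attains $\Dil(C)$, which exists by Lemma~\ref{lemEB}; then $\Dil(C)=|pq|_C/|pq|=(|C|/2)/|pq|$, and it suffices to bound $|pq|$ from below according to where $p$ and $q$ lie. First I would record one fact: the two straight edges $L_1,L_2$ of $C$ are chords of the outer circle lying at distance $r$ from the origin, so each has length $2\sqrt{R^2-r^2}$ and cuts off an arc of the outer circle of central angle $2\arccos(r/R)$; since $L_1$ and $L_2$ do not cross, these two cut-off arcs are disjoint, and therefore $|C|=4\sqrt{R^2-r^2}+4R\arcsin(r/R)=4r\,f(R/r)$ for every convex cycle of the type described.

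If $p$ and $q$ both lie on the outer circle, then $|pq|_C\le|pq|_{C'}$ where $C'$ is the outer circle, so $\Dil(C)\le\Dil(C')=\pi/2=f(1)\le f(R/r)$ by Lemma~\ref{lemfincr} (this is the argument of Lemma~\ref{lemoutercircle}). If exactly one of them, say $p$, lies on the outer circle and $q$ lies in the relative interior of a chord, say $L_2$, I would extend $L_2$ to the maximal chord $L$ of the outer circle (which is again tangent to the inner circle) and let $C'$ be the cycle of Lemma~\ref{lemthree} consisting of $L$ together with the longer outer arc joining its endpoints. Since $L_2$ is an edge of the convex cycle $C$ and $C$ is contained in the disk bounded by the outer circle, the region bounded by $C$ is contained in the region bounded by $C'$; moreover $q\in L\subseteq C'$, and $p$ lies on the longer outer arc, so $p\in C'$. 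Each of the two arcs of $C$ from $p$ to $q$ together with the chord $pq$ bounds a convex region contained in the convex region bounded by the corresponding arc of $C'$ and $pq$, so that arc of $C$ is no longer than the corresponding arc of $C'$; hence $|pq|_C\le|pq|_{C'}$ and $\Dil(C)\le|pq|_{C'}/|pq|\le\Dil(C')\le f(R/r)$ by Lemma~\ref{lemthree}. If $p$ and $q$ both lie in the relative interior of the same chord, then a straight subsegment of that chord is one of the two arcs of $C$ between them, so $|pq|_C\le|pq|$ and $\Dil(C)\le 1\le f(R/r)$.

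The remaining case --- $p$ in the relative interior of $L_1$, $q$ in the relative interior of $L_2$, with $L_1\ne L_2$ --- is where I expect the real work to be, and here I would show directly that $|pq|\ge 2r$; combined with $|C|=4r\,f(R/r)$ this gives $\Dil(C)=(|C|/2)/|pq|\le f(R/r)$. After a rotation, place the tangent points of $L_1$ and $L_2$ symmetrically about the $y$-axis, at circle-angles $-\theta$ and $\pi+\theta$; the non-crossing condition then says exactly that $|\theta|\le\pi/2-\arccos(r/R)<\pi/2$. Parametrizing $L_1$ and $L_2$ by signed distance from their respective tangent points, the condition that $p$ and $q$ form a halving pair turns into a single affine relation between the two parameters, and substituting it into the formula for $|pq|^2$ gives $|pq|^2=4\bigl(r\cos\theta+R\,\theta\sin\theta\bigr)^2+\mu^2\cos^2\theta$ for a free real parameter $\mu$; in particular $|pq|\ge 2\bigl(r\cos\theta+R\,\theta\sin\theta\bigr)$. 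It then remains to verify that $r\cos\theta+R\,\theta\sin\theta\ge r$ whenever $|\theta|\le\pi/2$ and $R\ge r$, which is equivalent for $\theta\ne 0$ to $R\,\theta\ge r\tan(\theta/2)$ and follows from $R\ge r$ together with the elementary inequality $|\theta|\ge\tan(|\theta|/2)$ on $[-\pi/2,\pi/2]$. Thus $|pq|\ge 2r$, which finishes the proof. The main obstacle is this last case: choosing coordinates in which the halving condition becomes affine, obtaining the displayed identity for $|pq|^2$, and then reducing everything to the one-variable inequality above; the other cases are straightforward reductions to Lemmas~\ref{lemoutercircle} and~\ref{lemthree}.
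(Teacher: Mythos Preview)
Your proof is correct and follows the paper's overall architecture: fix a halving pair, note that $|C|=4r\,f(R/r)$ independently of the angle between $L_1$ and $L_2$, and split into cases according to which of $p,q$ lie on the outer circle, reducing the first two cases to Lemmas~\ref{lemoutercircle} and~\ref{lemthree} exactly as the paper does. (Two cosmetic remarks: your ``extend $L_2$ to the maximal chord'' step is vacuous, since $L_2$ already has its endpoints on the outer circle; and your ``both on the same chord'' sub-case cannot occur for a halving pair, since $|L_i|=2\sqrt{R^2-r^2}<|C|/2$.)

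The only substantive difference is in the last case, where $p\in L_1$ and $q\in L_2$. Both you and the paper prove $|pq|\ge 2r$, but by different means. The paper argues geometrically: it introduces the parallel tangent chords $L_1',L_2'$ and shows, by a parity/halving argument, that either $q$ lies on the far side of $L_1'$ from $p$ or $p$ lies on the far side of $L_2'$ from $q$; in either case the two points are separated by a slab of width $2r$. You instead set up explicit coordinates, observe that the halving constraint becomes the affine relation $s+t=2R\theta$ in the signed arc-length parameters, obtain $|pq|^2=4(r\cos\theta+R\theta\sin\theta)^2+\mu^2\cos^2\theta$, and finish with the elementary inequality $r\cos\theta+R\theta\sin\theta\ge r$ on $|\theta|\le\pi/2$. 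Your route is more computational but entirely self-contained and makes the role of the halving condition completely transparent; the paper's route is coordinate-free and perhaps more easily visualised. Either way the conclusion $|pq|\ge 2r$ drops out, and with $|C|=4r\,f(R/r)$ the lemma follows.
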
 
\begin{proof}  
We may assume without loss of generality that $C$ is symmetric with 
respect to the $y$-axis and $L_1$ is to the left of $L_2$, as in the 
figure below. 

\begin{center}
   \includegraphics[scale=0.6]{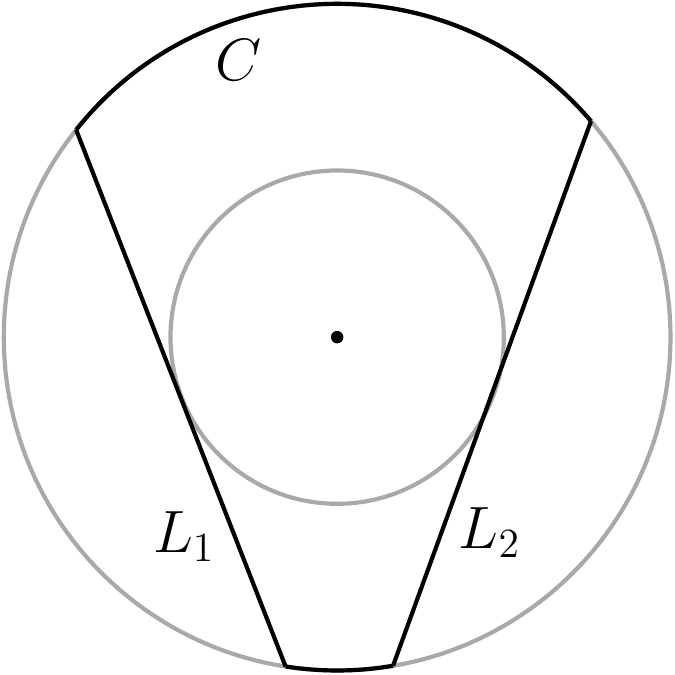}
\end{center}

If $L_1$ and $L_2$ are parallel, then the claim follows from 
Lemma~\ref{lemCstar}. Thus, we assume that $L_1$ and $L_2$ are not 
parallel. We may assume without loss of generality that the length of 
the lower arc of $C$ is less than the length of the upper arc, as 
in the figure above. We observe that 
\[ | C | = 4r \cdot f(R/r) , 
\] 
i.e., the length of $C$ is equal to the length of the cycle $C^*$ in 
Lemma~\ref{lemCstar}. Indeed, if we rotate $L_2$, while keeping it 
tangent to the inner circle, until it becomes parallel to $L_1$, then 
the length of the cycle does not change. 

Let $p$ and $q$ form a halving pair of $C$ that attains the geometric 
dilation of $C$, i.e., 
\[ \Dil(C) = \frac{|pq|_C}{|pq|} = \frac{|C|/2}{|pq|} . 
\]  
We consider three cases for the locations of $p$ and $q$ on $C$. 

\vspace{0.5em} 

\noindent 
{\bf Case 1:} Both $p$ and $q$ are on the outer circle of $\Ann_{r,R}$. 

Then we have $\Dil(C) \leq f(R/r)$ by Lemma~\ref{lemoutercircle}. 

\vspace{0.5em} 

\noindent 
{\bf Case 2:} $p$ is on the outer circle of $\Ann_{r,R}$ and $q$ is not 
on the outer circle. 

Since $p$ and $q$ form a halving pair, $p$ must be on the upper arc of 
$C$. We may assume without loss of generality that $q$ is on $L_2$.  
Let $C'$ be the convex cycle consisting of $L_2$ and the longer arc 
on the outer circle connecting the two endpoints of $L_2$; see the 
figure below.  
\begin{center}
   \includegraphics[scale=0.6]{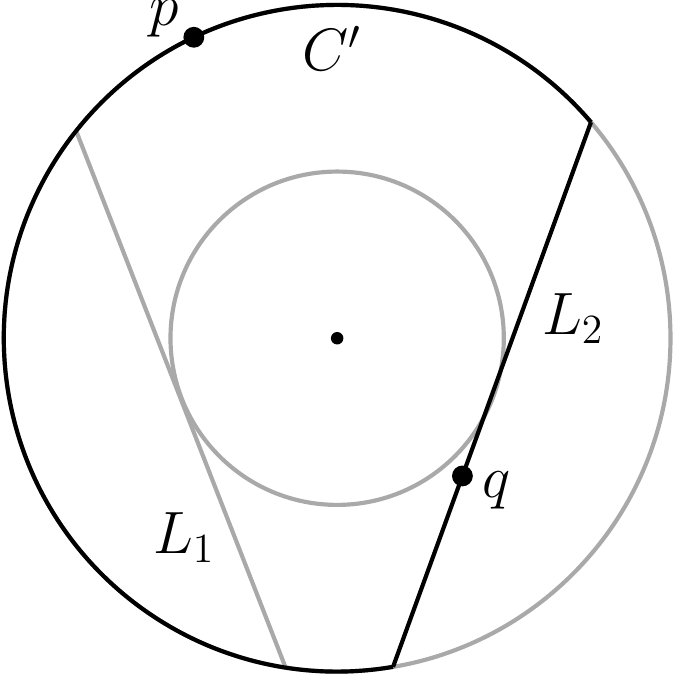}
\end{center}

We have 
\[ \Dil(C) = \frac{|pq|_C}{|pq|} \leq \frac{|pq|_{C'}}{|pq|} \leq  
   \Dil \left( C' \right) . 
\] 
By Lemma~\ref{lemthree}, we have $\Dil \left( C' \right) \leq f(R/r)$. 
It follows that $\Dil(C) \leq f(R/r)$. 

\vspace{0.5em} 

\noindent 
{\bf Case 3:} Neither $p$ nor $q$ is on the outer circle of $\Ann_{r,R}$. 

Since $p$ and $q$ form a halving pair, these two points cannot both be 
on the same line segment of $C$. We may assume without loss of 
generality that $p$ is on $L_1$ and $q$ is on $L_2$. 

Let $L'_1$ be the maximal line segment in $\Ann_{r,R}$ that is parallel 
and not equal to $L_1$ and that touches the inner circle.  
Let $L'_2$ be the maximal line segment in $\Ann_{r,R}$ that is parallel 
and not equal to $L_2$ and that touches the inner circle.  

\begin{center}
   \includegraphics[scale=0.6]{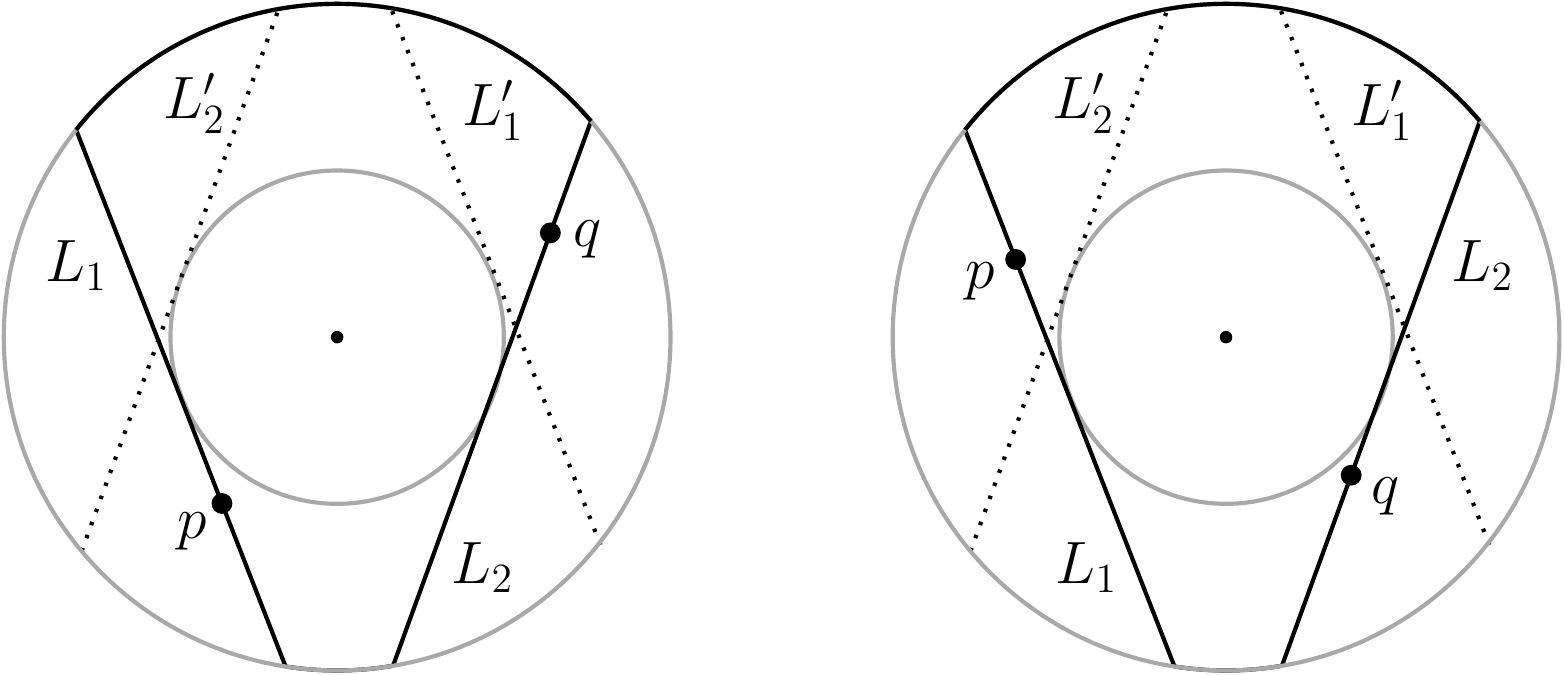}
\end{center}

We claim that $q$ is to the right of $L'_1$ or $p$ is to the left of 
$L'_2$; see the two figures above. Assuming this is true, it follows 
that $|pq| \geq 2r$ and 
\[ \Dil(C) = \frac{|C|/2}{|pq|} \leq \frac{|C|}{4r} = f(R/r) . 
\]  
To prove the claim, assume that $q$ is to the left of $L'_1$ and $p$ is 
to the right of $L'_2$. Let $a$ be the intersection between $L_1$ and 
$L'_2$, and let $b$ be the intersection between $L'_1$ and $L_2$; see 
the figure below. 

\begin{center}
   \includegraphics[scale=0.6]{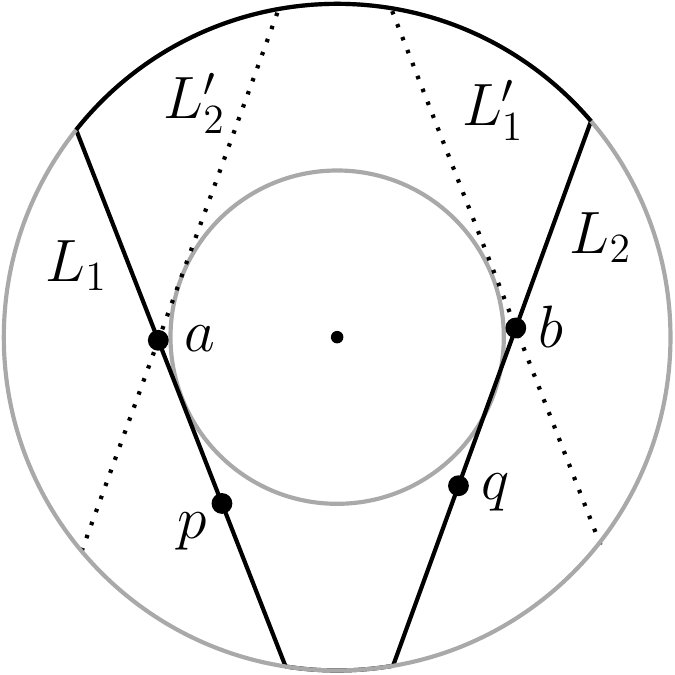}
\end{center}

Observe that both $a$ and $b$ are on the $x$-axis, and both $p$ and $q$ 
are below the $x$-axis. Therefore, the part of $C$ below the $x$-axis is 
shorter than the part of $C$ above the $x$-axis. Thus, the two paths 
along $C$ between $p$ and $q$ do not have the same lengths. This  
contradicts our assumption that $p$ and $q$ form a halving pair of 
$C$.     
\end{proof}  

We are now ready to consider an arbitrary convex cycle $C$ that contains 
the origin in its interior and that is contained in $\Ann_{r,R}$. 
A \emph{homothet} of $C$ is obtained by scaling $C$ with respect to the 
origin, followed by a translation. Observe that the dilation of a 
homothet of $C$ is equal to the dilation of $C$. 

\begin{theorem}    \label{thmcycle}  
Let $r$ and $R$ be real numbers with $R>r>0$ and let $C$ be a convex 
cycle that contains the origin in its interior and that is contained in 
the annulus $\Ann_{r,R}$. Then  
\[ \Dil(C) \leq f(R/r) , 
\] 
where $f$ is the function defined in the beginning of 
Section~\ref{secannulus}. 
\end{theorem}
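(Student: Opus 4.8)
The plan is to deduce Theorem~\ref{thmcycle} from the special cases already established in Lemmas~\ref{lemoutercircle}--\ref{lemtwo}, via an \emph{enlargement} argument built on the following monotonicity observation. Suppose $K\subseteq K'$ are convex regions and $p,q$ are two points that lie on both $\partial K$ and $\partial K'$. The chord $pq$ splits $K$ into two convex pieces $K_1,K_2$ and $K'$ into two convex pieces $K_1',K_2'$ with $K_i\subseteq K_i'$; since the perimeter of a convex region is monotone under inclusion and $\mathrm{perim}(K_i)=|pq|+(\text{length of the }i\text{-th arc of }\partial K)$, each arc of $\partial K$ is at most as long as the corresponding arc of $\partial K'$, and therefore $|pq|_{\partial K}\le |pq|_{\partial K'}$. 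As $\Dil$ is a maximum over pairs of points, this gives $\Dil(\partial K)=|pq|_{\partial K}/|pq|\le |pq|_{\partial K'}/|pq|\le\Dil(\partial K')$. Hence it suffices, given a dilation-realizing pair $p,q$ of $C=\partial K$, to enlarge $K$ to a region $K'$ whose boundary is one of the special cycles, \emph{while keeping $p$ and $q$ on $\partial K'$}.

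By Lemma~\ref{lemEB} fix a halving pair $p,q$ of $C$ with $\Dil(C)=\frac{|C|/2}{|pq|}$. Since $C$ contains the origin in its interior and lies in $\Ann_{r,R}$, the whole closed disk $D_r$ of radius $r$ centered at the origin is contained in $K$ (every ray from the origin meets $C$ at a point of norm at least $r$, and the segment to that point lies in $K$); in particular every supporting line of $K$ is at distance at least $r$ from the origin. Let $D_R$ denote the disk bounded by the outer circle of $\Ann_{r,R}$. I now split into cases according to how many of $p,q$ lie on the outer circle. If both do, Lemma~\ref{lemoutercircle} gives $\Dil(C)\le f(R/r)$ directly. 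If exactly one, say $p$, lies on the outer circle while $q$ does not, take a supporting line $\ell$ of $K$ at $q$, put $\rho=\mathrm{dist}(O,\ell)$, and note $r\le\rho<R$ (the strict upper bound because $\ell$ passes through $q$, which is interior to $D_R$). Let $K'=D_R\cap H$, where $H$ is the closed half-plane bounded by $\ell$ that contains the origin. Then $K\subseteq K'$, the origin is interior to $K'$, $q$ lies on the chord $\ell\cap D_R$, and $p$ lies on the arc of $\partial D_R$ bounding $K'$; thus $p,q\in\partial K'$. The cycle $\partial K'$ is exactly the configuration of Lemma~\ref{lemthree} for the annulus $\Ann_{\rho,R}$, so by Lemmas~\ref{lemthree} and~\ref{lemfincr} we get $\Dil(\partial K')\le f(R/\rho)\le f(R/r)$, and the monotonicity observation yields $\Dil(C)\le f(R/r)$.

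The remaining case --- neither $p$ nor $q$ on the outer circle --- is the main obstacle. Here I would take supporting lines $\ell_p$ of $K$ at $p$ and $\ell_q$ at $q$, with $\rho_p=\mathrm{dist}(O,\ell_p)$ and $\rho_q=\mathrm{dist}(O,\ell_q)$ in $[r,R)$, and set $K'=D_R\cap H_p\cap H_q$ where $H_p,H_q$ are the half-planes through the origin bounded by $\ell_p,\ell_q$. Again $K\subseteq K'$, the origin is interior to $K'$, and $p,q$ remain on $\partial K'$ (each $\ell$ supports $K'$, being the boundary of one of the defining half-planes, and passes through the corresponding point). The boundary $\partial K'$ consists of two chords, each tangent to a circle centered at the origin, joined by two arcs of $\partial D_R$, one of which may degenerate to a single point --- i.e.\ the configuration of Lemma~\ref{lemtwo} --- or, when $\ell_p$ and $\ell_q$ meet inside $D_R$, two segments meeting at a point together with the longer outer arc, i.e.\ the configuration of Lemma~\ref{lemfour}. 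The subtle point, and the place I expect the real work, is that $\ell_p$ and $\ell_q$ need not be tangent to the \emph{same} inner circle, whereas Lemmas~\ref{lemtwo} and~\ref{lemfour} are stated for a common inner circle. I would resolve this either by verifying that the proofs of those two lemmas go through verbatim when the two tangent lines touch circles of radii $\rho_p,\rho_q\ge r$ (the bound one obtains is $f\bigl(R/\min(\rho_p,\rho_q)\bigr)\le f(R/r)$ by Lemma~\ref{lemfincr}), or, if that turns out not to be clean, by a preliminary rotation of the farther chord about the smaller inner circle to bring both chords into tangency with it, mirroring the length-invariance computation in the proof of Lemma~\ref{lemtwo}. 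In either subcase the halving condition on $p$ and $q$ is used exactly as in Case~3 of the proof of Lemma~\ref{lemtwo} to force $|pq|$ to be at least the relevant diameter, after which $\Dil(C)\le\Dil(\partial K')\le f(R/r)$ finishes the proof.
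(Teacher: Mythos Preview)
Your overall strategy---enlarging $C$ to a simpler curve while keeping the dilation-realizing pair $p,q$ on the boundary---is exactly the paper's. Your treatment of the cases where one or both of $p,q$ lie on the outer circle is correct, and in fact cleaner than the paper's (which handles those via a limiting $\varepsilon$-argument at the end).

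The gap is in your main case, and it is precisely the ``subtle point'' you flag: after forming $K'=D_R\cap H_p\cap H_q$ the two chords are tangent to circles of \emph{different} radii $\rho_p,\rho_q$, so neither Lemma~\ref{lemtwo} nor Lemma~\ref{lemfour} applies to $\partial K'$. Your fix~(a) does not go through verbatim. The proof of Lemma~\ref{lemtwo} uses in an essential way that $|C|=4r\,f(R/r)$, and its Case~3 argument---that a halving pair sitting on the two chords must be separated by at least $2r$---relies on the reflective symmetry coming from both chords touching the \emph{same} inner circle; with $\rho_p\neq\rho_q$ that symmetry is lost and the contradiction no longer follows. Moreover, $p,q$ is a halving pair of $C$, not of $\partial K'$, so you cannot invoke that Case~3 argument for $\partial K'$ anyway. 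Your fix~(b) is also problematic: rotating a chord about the origin preserves its distance from the origin, so it cannot bring a chord at distance $\rho_q$ into tangency with the circle of radius $\rho_p<\rho_q$; and if instead you replace $\ell_q$ by some other line through $q$ at distance $\rho_p$, that line need not support $K$, so you lose the containment $K\subseteq K''$ on which your monotonicity step depends.

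The paper resolves this with an idea you did not anticipate. After forming the analogue of your $K'$ (called $C_1$), it passes to a \emph{homothet} $C_2$ of $C_1$, chosen so that both line segments of $C_2$ touch the inner circle of radius $r$: shrink $C_1$ toward the origin until one chord touches, then translate the center toward that chord while continuing to shrink until the second chord also touches. Because homotheties preserve ratios, $|pq|_{C_1}/|pq|=|p'q'|_{C_2}/|p'q'|$ for the corresponding points $p',q'$, and now the supporting lines at $p',q'$ are tangent to the \emph{same} circle of radius $r$. One further enlargement (to restore the outer radius $R$) yields a curve $C_3$ to which Lemma~\ref{lemtwo} applies directly, or, after an additional translation and case split, Lemma~\ref{lemfour} applies with an enlarged inner radius $r'\geq r$. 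This homothety step is the real content of the proof in the generic case; without it, or something equivalent, your argument does not close.
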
 
\begin{proof} 
Let $p$ and $q$ form a halving pair of $C$ that attains the geometric 
dilation of $C$, i.e., 
\[ \Dil(C) = \frac{|pq|_C}{|pq|} .  
\]  
We first assume that neither $p$ nor $q$ is on the outer circle of 
$\Ann_{r,R}$. 
 
Let $L_p$ and $L_q$ be supporting lines of $C$ through $p$ and $q$, 
respectively. Since $p$ and $q$ form a halving pair, $L_p \neq L_q$.  
Let $C_1$ be the convex cycle of maximum length in $\Ann_{r,R}$ that is 
between $L_p$ and $L_q$. Observe that $C_1$ contains two line segments 
such that (i) all their four endpoints are on the outer circle (as in 
the left figure below) or (ii) two of their endpoints are on the outer 
circle, whereas the other two endpoints meet in the interior of 
$\Ann_{r,R}$ (as in the right figure below). If (i) holds, we say that 
$C_1$ is of \emph{type~1}. In the other case, i.e., if (ii) holds, we 
say that $C_1$ is of \emph{type~2}. We have 
\[ \Dil(C) = \frac{|pq|_C}{|pq|} \leq \frac{|pq|_{C_1}}{|pq|} . 
\]  

\begin{center}
   \includegraphics[scale=0.6]{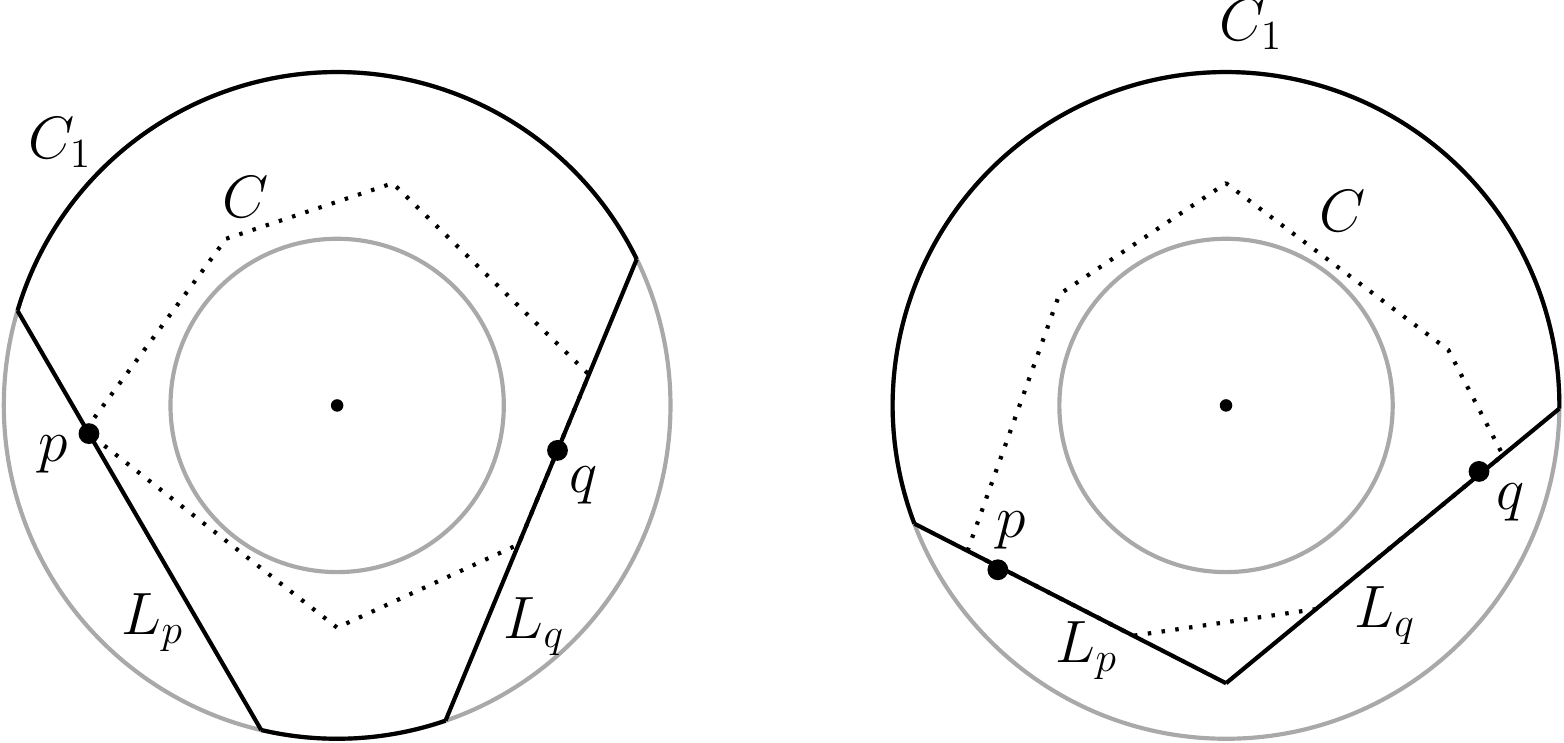}
\end{center}

We claim that there is a homothet $C_2$ of $C_1$ that is contained in 
$\Ann_{r,R}$ and that touches the inner circle in two points; see the 
two figures below. 

\begin{center}
   \includegraphics[scale=0.6]{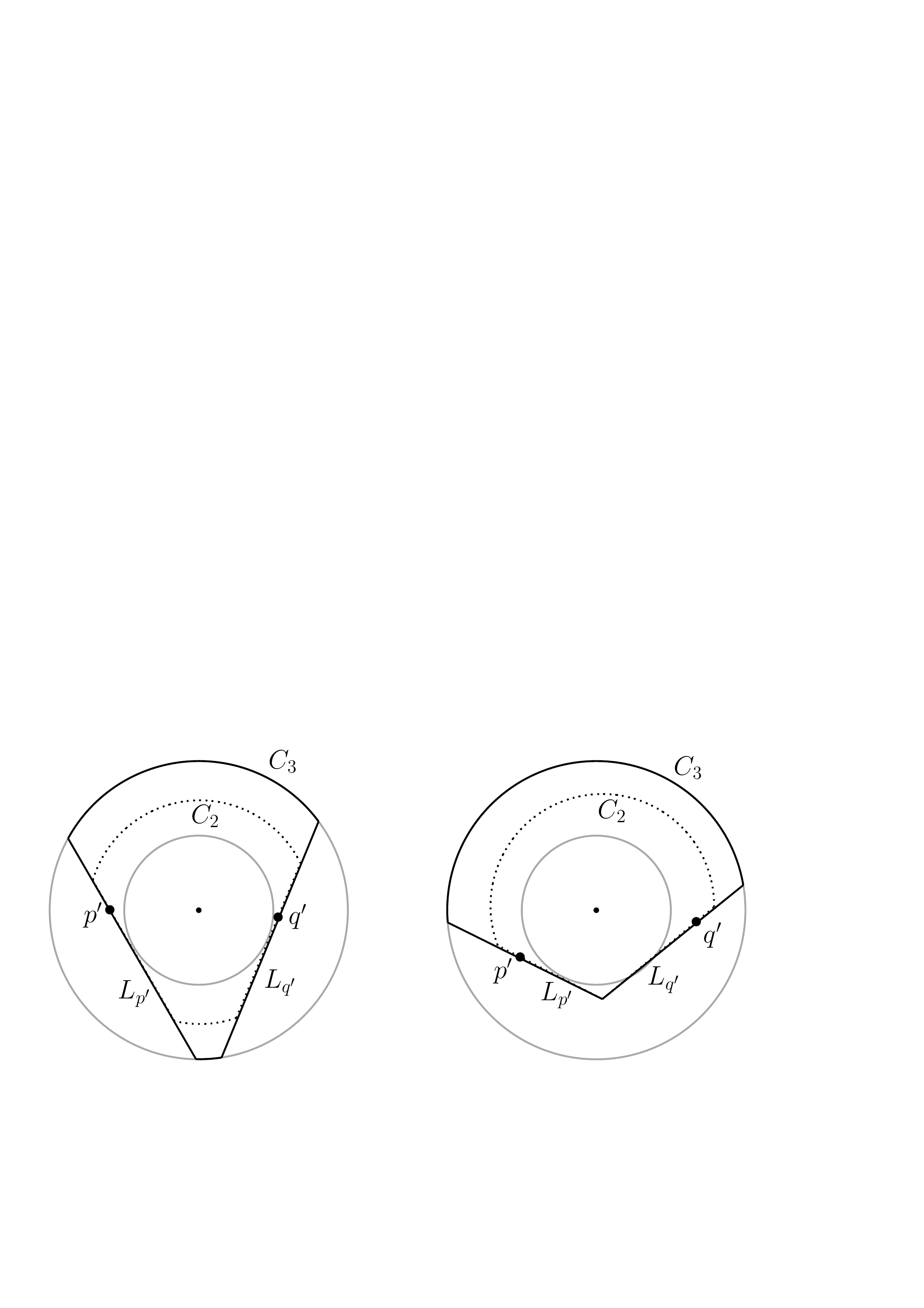}
\end{center}

To obtain such a homothet $C_2$, we do the following. First, we shrink 
$C_1$, i.e., scale it (with respect to the origin) by a factor of less 
than one, until it touches the inner circle. At this moment, one 
of the lines $L_p$ and $L_q$ in the shrunken copy of $C_1$ touches 
the inner circle. Assume, without loss of generality, that $L_q$ 
touches the inner circle, whereas $L_p$ does not. Let $c$ denote 
the ``center'' of the scaled copy of $C_1$, which is the origin. 
We translate $c$ towards $L_q$ in the direction that is orthogonal to 
$L_q$. During this translation, we shrink $C_1$ (with respect to its 
center $c$) while keeping $L_q$ on its boundary. We stop translating 
$c$ as soon as $L_p$ touches the inner circle of $\Ann_{r,R}$.
The resulting translated and shrunken copy of $C_1$ is the homothet 
$C_2$. 

Let $p'$ and $q'$ be the two points on the homothet $C_2$ that correspond 
to $p$ and $q$, respectively. Then  
\[ \Dil(C) \leq \frac{|pq|_{C_1}}{|pq|} = \frac{|p'q'|_{C_2}}{|p'q'|} . 
\]  
Let $L_{p'}$ and $L_{q'}$ be supporting lines of $C_2$ through $p'$ 
and $q'$, respectively, and let $C_3$ be the convex cycle of maximum 
length in $\Ann_{r,R}$ that is between $L_{p'}$ and $L_{q'}$; see the 
two figures above. Observe that $C_3$ is either of type~1 or of type~2. 
In fact, $C_3$ may be of type~2, even if $C_1$ is of type~1. We have 
\[ \Dil(C) \leq \frac{|p'q'|_{C_2}}{|p'q'|} \leq 
         \frac{|p'q'|_{C_3}}{|p'q'|} . 
\] 
First assume that $C_3$ is of type~1. Thus, all four endpoints of the 
two line segments of $C_3$ are on the outer circle of $\Ann_{r,R}$ (as 
in the left figure above). Then $C_3$ satisfies the conditions of 
Lemma~\ref{lemtwo} and, therefore, 
\[ \Dil(C) \leq \frac{|p'q'|_{C_3}}{|p'q'|}  \leq 
           \Dil \left( C_3 \right) \leq f(R/r) . 
\]  

Now assume that $C_3$ is of type~2. We may assume without loss of 
generality that $C_3$ is symmetric with respect to the $y$-axis, 
and the intersection point of $L_{p'}$ and $L_{q'}$ is on the 
negative $y$-axis. Translate $C_3$ in the negative $y$-direction 
until it touches the outer circle. Denote the resulting translate 
by $C_4$. Let $p''$ and $q''$ be the two points on $C_4$ that correspond 
to $p'$ and $q'$, respectively. Then  
\[ \Dil(C) \leq \frac{|p'q'|_{C_3}}{|p'q'|} = 
          \frac{|p''q''|_{C_4}}{|p''q''|} . 
\] 
We consider two cases. 

\vspace{0.5em} 

\noindent 
{\bf Case 1:} The lowest point of $C_4$ is on the outer circle of 
$\Ann_{r,R}$; see the left figure below. 

\begin{center}
   \includegraphics[scale=0.6]{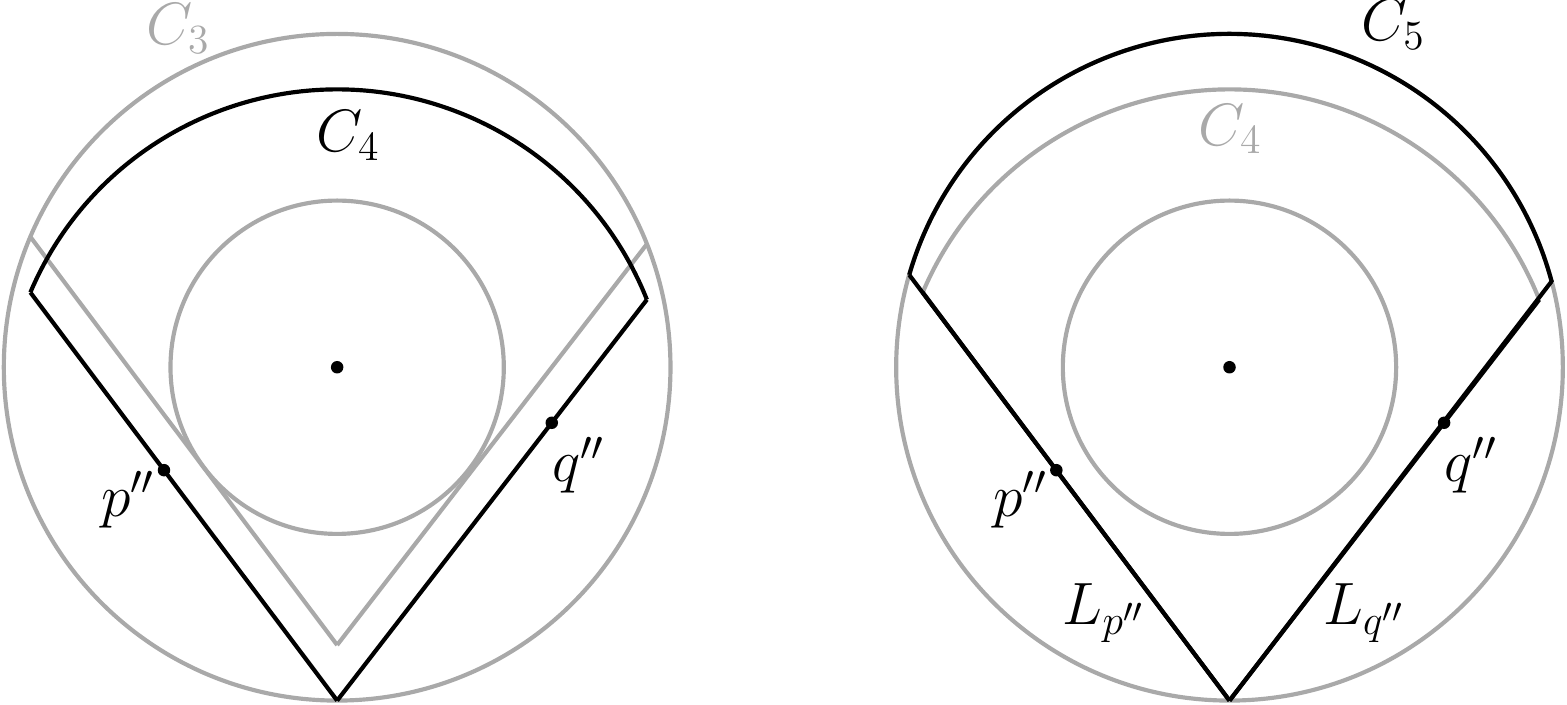}
\end{center}

Let $L_{p''}$ and $L_{q''}$ be supporting lines of $C_4$ through $p''$ 
and $q''$, respectively, and let $C_5$ be the convex cycle of maximum 
length in $\Ann_{r,R}$ that is between $L_{p''}$ and $L_{q''}$; see 
the right figure above. Observe that 
\[ \Dil(C) \leq \frac{|p''q''|_{C_4}}{|p''q''|} \leq  
     \frac{|p''q''|_{C_5}}{|p''q''|} \leq \Dil \left( C_5 \right) . 
\] 
Enlarge the inner circle of $\Ann_{r,R}$ such that it touches the 
two line segments of $C_5$. Denoting the radius of this enlarged 
circle by $r'$, it follows from Lemmas~\ref{lemtwo} and~\ref{lemfincr} 
that 
\[ \Dil(C) \leq \Dil \left( C_5 \right) \leq f(R/r') \leq f(R/r) . 
\] 

\vspace{0.5em} 

\noindent 
{\bf Case 2:} The leftmost and rightmost points of $C_4$ are on the 
outer circle of $\Ann_{r,R}$; see the left figure below. 

\begin{center}
   \includegraphics[scale=0.6]{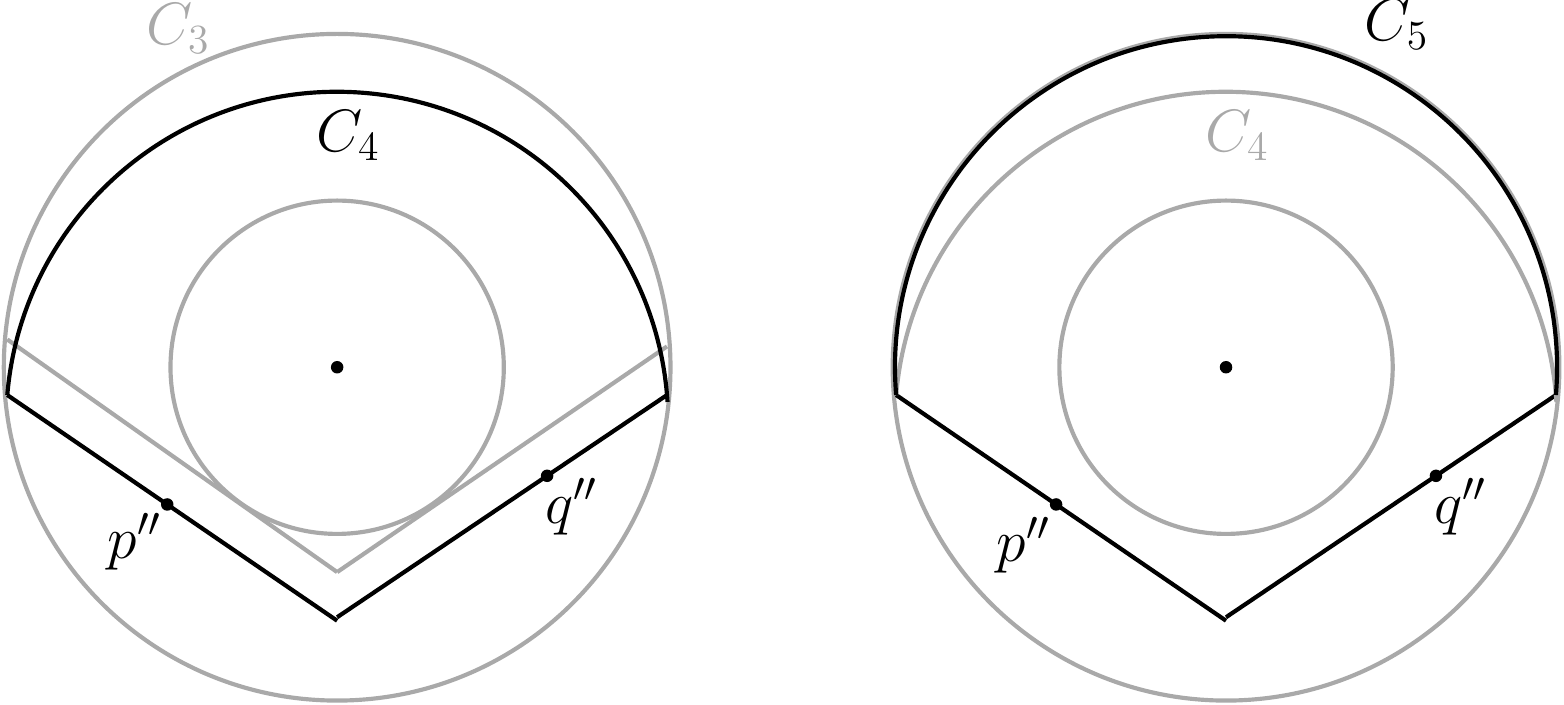}
\end{center}

Let $C_5$ be the convex cycle consisting of the two line segments 
of $C_4$ and the upper arc on the outer circle connecting them; see 
the right figure above. Then 
\[ \Dil(C) \leq \frac{|p''q''|_{C_4}}{|p''q''|} \leq  
     \frac{|p''q''|_{C_5}}{|p''q''|} \leq \Dil \left( C_5 \right) . 
\] 
Enlarge the inner circle of $\Ann_{r,R}$ such that it touches the 
two line segments of $C_5$. Let $r'$ be the radius of this enlarged 
circle. Since $C_5$ satisfies the conditions of Lemma~\ref{lemfour}
for $\Ann_{r',R}$, we have  
\[ \Dil \left( C_5 \right) \leq f(R/r') \leq f(R/r) . 
\] 
Thus, we have shown that $\Dil(C) \leq f(R/r)$. 

\vspace{0.5em} 

Until now we have assumed that neither $p$ nor $q$ is on the outer 
circle of $\Ann_{r,R}$. Assume now that $p$ or $q$ is on this outer 
circle. Let $\varepsilon>0$ be an arbitrary real number and consider 
the annulus $\Ann_{r,R+\varepsilon}$. Since neither $p$ nor $q$ is on 
the outer circle of this enlarged annulus, the analysis given above 
implies that 
\[ \Dil(C) \leq f((R+\varepsilon)/r) . 
\] 
Thus, since this holds for any $\varepsilon>0$, we have 
\[ \Dil(C) \leq \inf_{\varepsilon>0} f((R+\varepsilon)/r) . 
\] 
Since the function $f$ is continuous, it follows from 
Lemma~\ref{lemfincr} that 
\[ \Dil(C) \leq \inf_{\varepsilon>0} f((R+\varepsilon)/r) = f(R/r) . 
\] 
This concludes the proof. 
\end{proof}

\section{Angle-Constrained Convex Polyhedra in a Spherical Shell} 
\label{secACSH} 

Let $r$ and $R$ be real numbers with $R>r>0$. Define $\Shell_{r,R}$ 
to be the \emph{spherical shell} consisting of all points in $\IR^3$ 
that are on or between the two spheres of radii $r$ and $R$ that are 
centered at the origin. In other words, 
\[ \Shell_{r,R} = \{ (x,y,z) \in \IR^3 : 
            r \leq x^2 + y^2 + z^2 \leq R^2 \} . 
\] 

In this section, we consider convex simplicial polyhedra that contain 
the origin in their interiors and whose boundaries are contained in 
$\Shell_{r,R}$. From Section~\ref{secalmost}, the skeletons of such 
polyhedra can have unbounded stretch factors. 

Let $\theta$ be a real number with $0 < \theta < \pi/3$. We say that a 
convex polyhedron $P$ is $\theta$-\emph{angle-constrained}, if the 
angles in all faces of $P$ are at least $\theta$. 

Let $P$ be a convex simplicial polyhedron that contains the origin in 
its interior, whose boundary is contained in $\Shell_{r,R}$, and that is 
$\theta$-angle-constrained. In this section, we prove that the stretch 
factor of the skeleton of $P$ is bounded from above by a function of 
$R/r$ and $\theta$. Our proof will use an improvement of a result by 
Karavelas and Guibas~\cite{kg-skgsa-01} about chains of triangles in 
$\IR^2$; see Lemma~\ref{lemchain}. We start by reviewing such chains.

\subsection{Chains of Triangles}  \label{secCoT} 
Before we define chains of triangles, we prove a geometric lemma that 
will be used later in this section. 

\begin{lemma}  \label{lemtriangle}  
Let $a$, $b$, and $c$ be three pairwise distinct points in the plane, 
and let $\alpha = \angle(bac)$. Then 
\[ |ab| + |ac| \leq \frac{|bc|}{\sin (\alpha/2)} .
\] 
\end{lemma}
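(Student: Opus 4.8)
The plan is to prove the inequality $|ab| + |ac| \leq |bc| / \sin(\alpha/2)$ by working in the triangle $abc$ and applying elementary trigonometry, specifically the Law of Sines together with a sum-to-product identity. First I would handle the degenerate-looking cases and then the generic one uniformly. Write $\beta = \angle(abc)$ and $\gamma = \angle(acb)$, so that $\alpha + \beta + \gamma = \pi$. By the Law of Sines applied to triangle $abc$, we have $|ab| = |bc| \cdot \sin\gamma / \sin\alpha$ and $|ac| = |bc| \cdot \sin\beta / \sin\alpha$. Hence
\[
|ab| + |ac| = |bc| \cdot \frac{\sin\beta + \sin\gamma}{\sin\alpha} .
\]
So it suffices to show $(\sin\beta + \sin\gamma)/\sin\alpha \leq 1/\sin(\alpha/2)$, i.e. $\sin(\alpha/2)(\sin\beta + \sin\gamma) \leq \sin\alpha$.

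Next I would simplify the left-hand side with the sum-to-product formula $\sin\beta + \sin\gamma = 2 \sin\!\big(\frac{\beta+\gamma}{2}\big) \cos\!\big(\frac{\beta-\gamma}{2}\big)$. Since $\beta + \gamma = \pi - \alpha$, we have $\sin\!\big(\frac{\beta+\gamma}{2}\big) = \sin\!\big(\frac{\pi-\alpha}{2}\big) = \cos(\alpha/2)$. Therefore
\[
\sin(\alpha/2)(\sin\beta + \sin\gamma) = 2 \sin(\alpha/2)\cos(\alpha/2)\cos\!\Big(\tfrac{\beta-\gamma}{2}\Big) = \sin\alpha \cdot \cos\!\Big(\tfrac{\beta-\gamma}{2}\Big) .
\]
Since $\cos\!\big(\frac{\beta-\gamma}{2}\big) \leq 1$ and $\sin\alpha > 0$ (as $0 < \alpha < \pi$, because $a$, $b$, $c$ are pairwise distinct, so $\alpha$ is a genuine triangle angle strictly between $0$ and $\pi$), we conclude $\sin(\alpha/2)(\sin\beta + \sin\gamma) \leq \sin\alpha$, which is exactly what we needed. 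Combining the displays gives $|ab| + |ac| \leq |bc|/\sin(\alpha/2)$.

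There is essentially no hard obstacle here; the only point requiring a little care is making sure $\sin\alpha \neq 0$ and $\sin(\alpha/2) \neq 0$ so that the divisions are legitimate — both follow from the three points being pairwise distinct, which forces $0 < \alpha < \pi$. (If one wants to allow $a$, $b$, $c$ collinear with $a$ between $b$ and $c$, then $\alpha = \pi$, $\sin(\alpha/2) = 1$, and the inequality reads $|ab| + |ac| \leq |bc|$, which holds with equality; but the statement assumes pairwise distinctness and we may as well treat $\alpha \in (0,\pi)$.) I would also note, as a remark, that equality holds precisely when $\beta = \gamma$, i.e. the triangle is isosceles with $|ab| = |ac|$ — this is the tightness that makes the lemma useful for the chain-of-triangles argument to follow.
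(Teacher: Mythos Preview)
Your proof is correct and takes a genuinely different route from the paper's. You use the Law of Sines together with the sum-to-product identity to reduce the inequality to $\cos\!\big(\tfrac{\beta-\gamma}{2}\big)\leq 1$, which is clean and immediately exhibits the equality case $|ab|=|ac|$. The paper instead argues geometrically: it drops perpendiculars from $b$ and $c$ to the interior angle bisector at $a$, of lengths $\ell_b=|ab|\sin(\alpha/2)$ and $\ell_c=|ac|\sin(\alpha/2)$; since $b$ and $c$ lie on opposite sides of the bisector, $\ell_b+\ell_c$ is the component of $bc$ orthogonal to the bisector, hence $\ell_b+\ell_c\leq|bc|$, and dividing by $\sin(\alpha/2)$ finishes. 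The paper's argument is shorter and picture-driven; yours is purely algebraic and makes the tightness transparent. One small quibble: pairwise distinctness alone does not force $0<\alpha<\pi$ (the points could be collinear), but you essentially handle $\alpha=\pi$ in your parenthetical, and $\alpha=0$ renders the bound vacuous, so this does not affect correctness.
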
 
\begin{proof} 
Consider the interior angle bisector of $a$; see the figure below. 

\begin{center}
\includegraphics[scale=0.7]{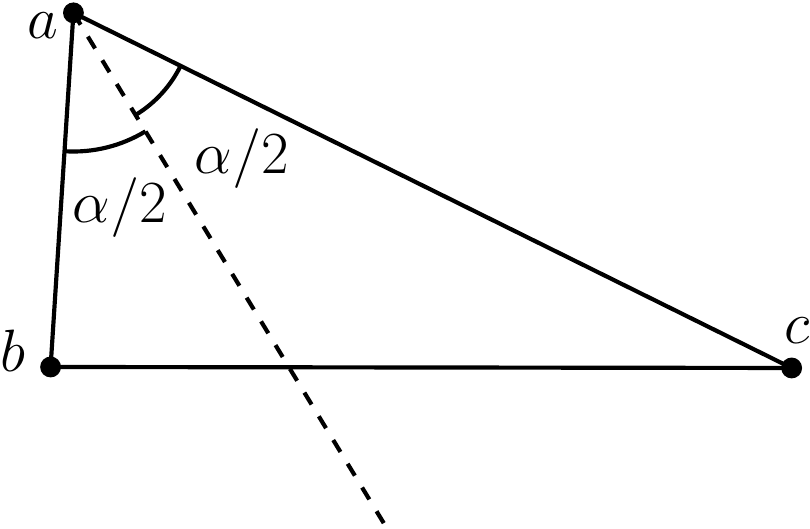}
\end{center}

Let $\ell_b$ be the distance between $b$ and this bisector, and let 
$\ell_c$ be the distance between $c$ and this bisector. Then 
\[ |ab| + |ac| = \frac{\ell_b}{\sin(\alpha/2)} + 
                 \frac{\ell_c}{\sin(\alpha/2)} 
  \leq \frac{|bc|}{\sin(\alpha/2)} . 
\]
\end{proof} 

Let $p$ and $q$ be two distinct points in $\IR^2$, let $k \geq 2$ be 
an integer, and consider a sequence 
$\mathcal{T} = (T_1,T_2,\ldots,T_k)$ of triangles in $\IR^2$. The 
sequence $\mathcal{T}$ is called a 
\emph{chain of triangles with respect to $p$ and $q$}, if 
\begin{enumerate} 
\item $p$ is a vertex of $T_1$, but not of $T_2$,  
\item $q$ is a vertex of $T_k$, but not of $T_{k-1}$,  
\item for each $i$ with $1 \leq i < k$, the interiors of the triangles 
      $T_i$ and $T_{i+1}$ are disjoint and these triangles share an edge,
      and 
\item for each $i$ with $1 \leq i \leq k$, the line segment $pq$ 
      intersects the interior of $T_i$. 
\end{enumerate} 
See Figure~\ref{figchain} for examples. 

\begin{figure}
\begin{center}
\includegraphics[scale=0.7]{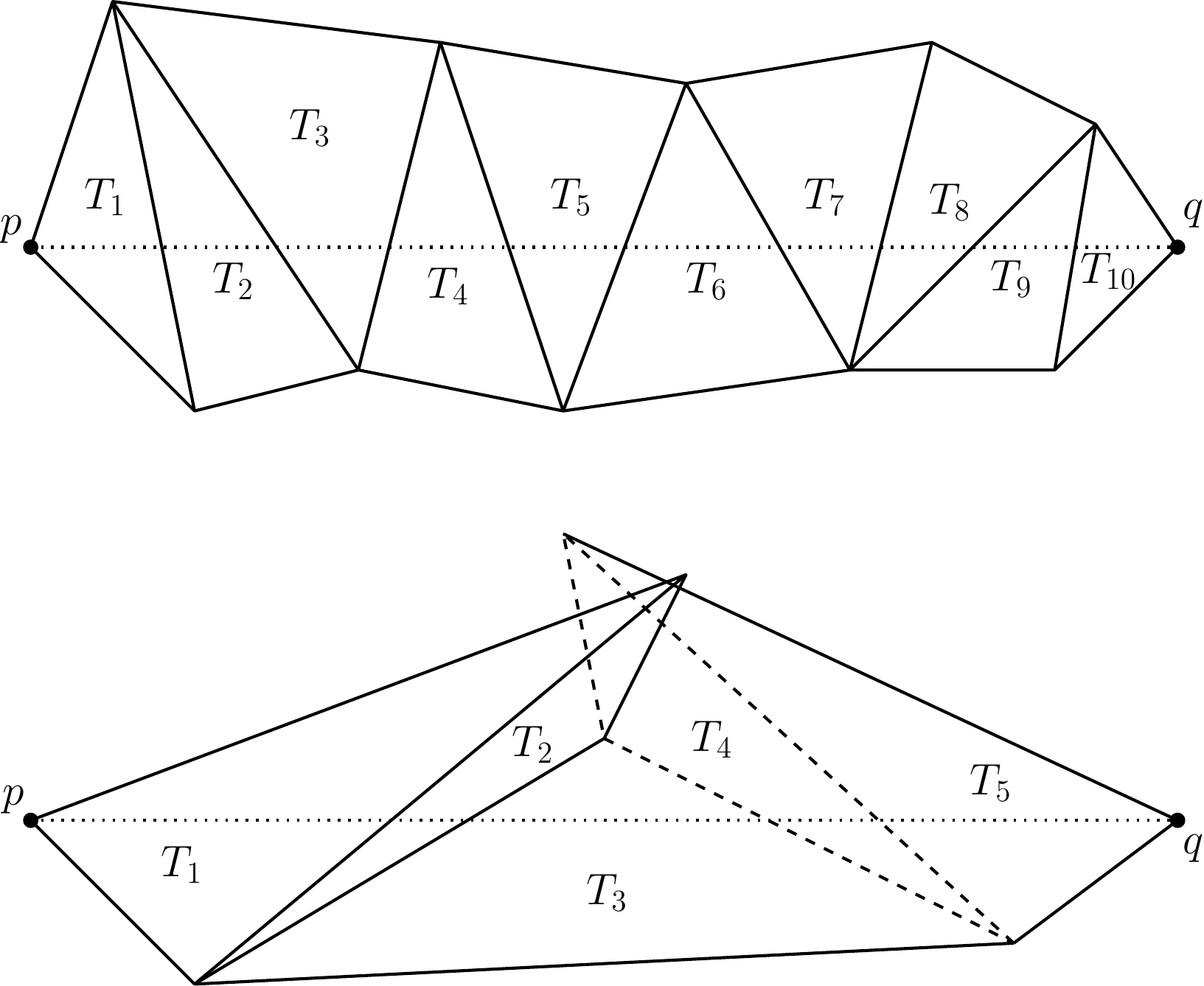}
\end{center}
\caption{Two examples of chains of triangles with respect to the points 
         $p$ and $q$. For clarity, the triangle $T_4$ in the second 
         example is dashed.} 
\label{figchain}
\end{figure}

Let $G(\mathcal{T})$ be the graph whose vertex and edge sets consist of 
all vertices and edges of the $k$ triangles in $\mathcal{T}$, 
respectively. The length of each edge in this graph is equal to the 
Euclidean distance between its vertices. The length of a shortest path 
in $G(\mathcal{T})$ is denoted by $|pq|_{G(\mathcal{T})}$. 

\begin{lemma}
\label{lemchain}  
Let $\theta$ be a real number with $0 < \theta < \pi/3$, let $p$ and $q$ 
be two distinct points in the plane, and let $\mathcal{T}$ be a 
chain of triangles with respect to $p$ and $q$. Assume that all angles 
in any of the triangles in $\mathcal{T}$ are at least $\theta$. Then 
\[ |pq|_{G(\mathcal{T})} \leq 
      \frac{1+ 1 / \sin(\theta/2)}{2} \cdot |pq| .
\]
\end{lemma}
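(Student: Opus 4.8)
The plan is to prove the bound by induction on $k$, the number of triangles in the chain, using Lemma~\ref{lemtriangle} as the basic building block. The key observation is that the segment $pq$ crosses the chain monotonically: since $pq$ meets the interior of every $T_i$ and consecutive triangles share an edge, the shared edges $e_1,\dots,e_{k-1}$ are all crossed by $pq$ in order. Write $e_i = s_i t_i$ for the common edge of $T_i$ and $T_{i+1}$, and let $x_i$ be the point where $pq$ crosses $e_i$. The idea is to bound $|pq|_{G(\mathcal{T})}$ by exhibiting a concrete path in $G(\mathcal{T})$ from $p$ to $q$ and controlling its length.

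First I would handle the base case $k=1$ (or treat $k=2$ directly): if $p$ and $q$ are both vertices of a single triangle $T$ with all angles at least $\theta$, then either $pq$ is an edge of $T$, in which case $|pq|_{G}=|pq|$, or $p$ and $q$ are joined through the third vertex $v$, and $|pv|+|vq| \le |pq|/\sin(\angle(pvq)/2) \le |pq|/\sin(\theta/2)$ by Lemma~\ref{lemtriangle}; in all cases $|pq|_G \le \bigl(1 + 1/\sin(\theta/2)\bigr)|pq| \cdot \tfrac12 \cdot 2$, which is comfortably within the claimed bound once one checks the constant. For the inductive step, consider the last shared edge $e_{k-1} = s\,t$ between $T_{k-1}$ and $T_k$. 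The triangle $T_k$ has $q$ as a vertex and $st$ as the opposite edge; the segment $pq$ enters $T_k$ through $st$ at the point $x_{k-1}$. I would route the path from $p$ to the chain $(T_1,\dots,T_{k-1})$ using the induction hypothesis applied to a suitable target on $e_{k-1}$ (either $s$ or $t$, whichever is ``on the correct side''), and then close up inside $T_k$ from that vertex to $q$ using the single-triangle estimate of Lemma~\ref{lemtriangle}. The bookkeeping amounts to showing that the two contributions combine with the right coefficients: the ``straight-line'' part contributes a factor close to $1$ and the ``detour'' parts contribute the $1/\sin(\theta/2)$ term, averaged to give the factor $(1+1/\sin(\theta/2))/2$.

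A cleaner way to organize the same idea, which I would actually prefer, is to choose for each crossing point $x_i$ on edge $e_i = s_i t_i$ the endpoint of $e_i$ that lies on the same side of $pq$ as the ``short'' side, and build the path $p = v_0, v_1, \dots, v_{k-1}, v_k = q$ where each $v_i$ is a vertex of the chain near $x_i$; then each consecutive pair $v_{i-1}, v_i$ lies in a common triangle, and one applies Lemma~\ref{lemtriangle} (or the trivial edge bound) to each leg. Summing $\sum |v_{i-1} v_i| \le \sum |x_{i-1} x_i| / \sin(\theta/2)$ naively gives a bound of $|pq|/\sin(\theta/2)$, which is too weak; the improvement to $(1+1/\sin(\theta/2))/2$ comes from observing that one can alternate or average between going ``above'' and ``below'' the segment $pq$, or equivalently that the two monotone chains of vertices lying on the two sides of $pq$ have total length whose average is what we want. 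Concretely, one shows $|pq|_{G(\mathcal{T})}$ is at most the average of the length of the ``upper'' vertex path and the ``lower'' vertex path, and bounds their sum.

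The main obstacle I anticipate is exactly this averaging step — getting the constant $(1+1/\sin(\theta/2))/2$ rather than the crude $1/\sin(\theta/2)$. The naive triangle-inequality argument on a single monotone vertex path only yields the weaker constant; squeezing out the factor of roughly two requires using both that the crossing points $x_i$ are collinear (they all lie on $pq$, so $\sum|x_{i-1}x_i| = |pq|$ exactly, with no slack) and a careful charging argument that pairs each ``above'' detour against a ``below'' detour, or else an argument that picks, at each step, the cheaper of the two ways to advance and shows the total cost telescopes correctly. I would expect the proof to proceed by defining the two candidate paths $\pi_{\mathrm{up}}$ and $\pi_{\mathrm{down}}$ explicitly, bounding $|\pi_{\mathrm{up}}| + |\pi_{\mathrm{down}}|$ by applying Lemma~\ref{lemtriangle} at the triangles where the two paths diverge and using the edge bound elsewhere, and concluding $|pq|_{G(\mathcal{T})} \le \min(|\pi_{\mathrm{up}}|,|\pi_{\mathrm{down}}|) \le \tfrac12(|\pi_{\mathrm{up}}|+|\pi_{\mathrm{down}}|) \le \tfrac12\bigl(1 + 1/\sin(\theta/2)\bigr)|pq|$, where the $\tfrac12 \cdot 1$ accounts for the shared collinear ``progress'' along $pq$ and the $\tfrac12 \cdot 1/\sin(\theta/2)$ accounts for the detours, each counted once across the two paths.
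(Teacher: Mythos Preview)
Your high-level instincts are sound: you correctly anticipate that the naive path-through-vertices argument yields only the bound $|pq|/\sin(\theta/2)$, and that the factor of roughly two must come from some averaging between ``above'' and ``below'' contributions. But the concrete mechanism you propose --- building the two monotone paths $\pi_{\mathrm{up}}$ and $\pi_{\mathrm{down}}$ and bounding their sum by $(1+1/\sin(\theta/2))\,|pq|$ --- is never actually carried out, and it is not clear it can be. In each interior triangle $T_i$ one of your two paths stays put while the other traverses the edge opposite the shared vertex; that opposite edge $|l_{i-1}l_i|$ (say) is not controlled by $|x_{i-1}x_i|$ alone, and the cross-terms $|u_i x_i|$, $|l_i x_i|$ that appear when you unfold the triangle inequalities do not telescope in the way your last paragraph suggests. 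The induction sketched at the start is abandoned and would face the same bookkeeping difficulty at the inductive step.

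The paper's proof is organized quite differently. It builds a \emph{single} zigzag path $\Pi=(p_1,\ldots,p_\ell)$ (the Karavelas--Guibas path) in which every edge crosses the segment $pq$, with crossing points $c_1,\ldots,c_{\ell-1}$. The naive estimate via Lemma~\ref{lemtriangle} at each apex $p_{i+1}$ gives $|pq|/\sin(\theta/2)$, as you note. The improvement comes from partitioning the subsegments $c_ic_{i+1}$ of $pq$ into three groups: those whose interior meets an edge of some $T_j$ (group~1), and those entirely contained in a single triangle with $p_{i+1}$ above (group~2) or below (group~3) the line $pq$. For group~1 the apex angle satisfies $\alpha_i\ge 2\theta$, so the local factor is $1/\sin\theta$ rather than $1/\sin(\theta/2)$. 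Assuming without loss of generality $X_3\le X_2$, one shortcuts each group~2 detour $(p_i,p_{i+1},p_{i+2})$ by the chord $p_ip_{i+2}$, which costs at most $|p_ic_i|+|c_ic_{i+1}|+|c_{i+1}p_{i+2}|$; this replaces the factor $1/\sin(\theta/2)$ by $1$ on the group~2 portion. The inequality $X_3\le X_2$ then yields
\[
X_2+\frac{X_3}{\sin(\theta/2)}\ \le\ \frac{1}{2}\Bigl(1+\frac{1}{\sin(\theta/2)}\Bigr)(X_2+X_3),
\]
and a short calculation shows that the residual group~1 coefficient $1/\sin\theta-\tfrac12-\tfrac{1}{2\sin(\theta/2)}$ is negative on $(0,\pi/3)$, so the group~1 term can be dropped. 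The two ingredients you are missing are the zigzag path (so that every edge crosses $pq$ and the apex angles are angles of triangles in $\mathcal{T}$) and the group~1 observation $\alpha_i\ge 2\theta$, together with the shortcut that converts the larger of groups~2,~3 to a factor~$1$ contribution.
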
 
\begin{proof}
We assume, without loss of generality, that the line segment $pq$ is 
on the $x$-axis and $p$ is to the left of $q$. We start by constructing 
a preliminary path in $G(\mathcal{T})$ from $p$ to $q$ (this is the 
same path as in Karavelas and Guibas~\cite{kg-skgsa-01}): 
\begin{enumerate} 
\item Let $pr$ be one of the two edges of the triangle $T_1$ with 
      endpoint $p$. We initialize the path to be $(p,r)$. 
\item Consider the current path and let $r$ be its last point. 
      Assume that $r \neq q$. 
      \begin{enumerate} 
      \item If $r$ is below the $x$-axis, then consider all edges in  
            $G(\mathcal{T})$ that have $r$ as an endpoint and whose other 
            endpoint is on or above the $x$-axis. Let $rr'$ be the 
            ``rightmost'' of these edges, i.e., the edge among these 
            whose angle with the positive $x$-axis is minimum. Then we 
            extend the path by the edge $rr'$, i.e., we add the point 
            $r'$ at the end of the current path.
      \item If $r$ is above the $x$-axis, then consider all edges in 
            $G(\mathcal{T})$ that have $r$ as an endpoint and whose other 
            endpoint is on or below the $x$-axis. Let $rr'$ be the 
            ``rightmost'' of these edges, i.e., the edge among these 
            whose angle with the positive $x$-axis is maximum. Then we 
            extend the path by the edge $rr'$, i.e., we add the point 
            $r'$ at the end of the current path.
      \end{enumerate} 
\end{enumerate} 

\begin{figure}
\begin{center}
\includegraphics[scale=0.7]{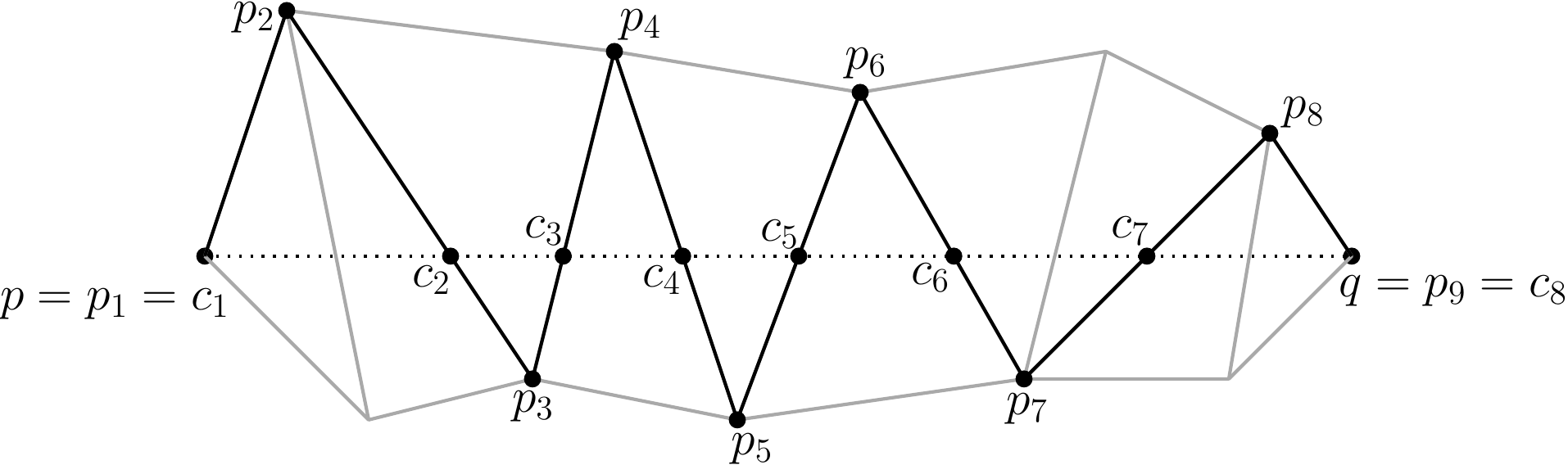}
\end{center}
\caption{The path $\Pi = (p=p_1,p_2,\ldots,p_9=q)$ in the first chain of 
         triangles in Figure~\ref{figchain}. The segments $c_1 c_2$, 
         $c_6 c_7$, and $c_7 c_8$ belong to group~1, the segments 
         $c_3 c_4$ and $c_5 c_6$ belong to group~2, and the segments 
         $c_2 c_3$ and $c_4 c_5$ belong to group~3. The path $\Pi'$ 
         is equal to $(p=p_1,p_2,p_3,p_5,p_7,p_8,p_9=q)$.}
\label{figchainpath}
\end{figure}

Number the triangles in $\mathcal{T}$ as $T_1,T_2,\ldots,T_k$, in the 
order in which they are intersected by the line segment from $p$ to 
$q$. Then the point $r'$ is a vertex of a triangle in $\mathcal{T}$ 
that has a larger index than the index of any triangle that contains 
the vertex $r$. Therefore, if we continue extending the path, it will 
reach the point $q$. Denote the resulting path by 
$\Pi = (p=p_1,p_2,\ldots,p_{\ell}=q)$; see Figure~\ref{figchainpath}. 

As a warming-up, we prove an upper bound on the length of the path $\Pi$.
For each $i$ with $1 \leq i < \ell$, let $c_i$ be the intersection 
between the line segments $pq$ and $p_i p_{i+1}$. Then 
$|pq|_{G(\mathcal{T})}$ is at most the length of the path $\Pi$, i.e., 
\[ |pq|_{G(\mathcal{T})} \leq 
     \sum_{i=1}^{\ell-1} |p_i p_{i+1}| =  
     \sum_{i=1}^{\ell-1} \left( |c_i p_{i+1}| + |p_{i+1} c_{i+1}| 
                         \right) .
\]  
Let $\alpha_i = \angle(c_i p_{i+1} c_{i+1})$. Since 
$\alpha_i \geq \theta$, it follows from Lemma~\ref{lemtriangle} that 
\[ |c_i p_{i+1}| + |p_{i+1} c_{i+1}| \leq 
         \frac{|c_i c_{i+1}|}{\sin (\alpha_i/2)} \leq  
         \frac{|c_i c_{i+1}|}{\sin (\theta/2)} .
\]  
Therefore, 
\[ |pq|_{G(\mathcal{T})} \leq 
     \sum_{i=1}^{\ell-1} \frac{|c_i c_{i+1}|}{\sin (\theta/2)} =  
     \frac{|pq|}{\sin (\theta/2)} . 
\]   

To improve the upper bound on $|pq|_{G(\mathcal{T})}$, we divide the 
line segments $c_i c_{i+1}$, $1 \leq i < \ell$, into three groups: A 
segment $c_i c_{i+1}$ belongs to \emph{group~1} if its 
relative interior intersects an edge of some triangle of the chain 
$\mathcal{T}$. If the relative interior of $c_i c_{i+1}$ is entirely 
contained in one of the triangles of $\mathcal{T}$ and the point 
$p_{i+1}$ is on or above the $x$-axis, then $c_i c_{i+1}$ belongs to 
\emph{group~2}. Otherwise, $c_i c_{i+1}$ belongs to \emph{group~3}.  
Refer to Figure~\ref{figchainpath} for an illustration.  
For $j=1,2,3$, let $X_j$ denote the total length of all line segments 
$c_i c_{i+1}$ in group~$j$. We may assume without loss of generality 
that $X_3 \leq X_2$. 

Consider again the path $\Pi = (p=p_1,p_2,\ldots,p_{\ell}=q)$. 
For each $i$ such that $c_i c_{i+1}$ belongs to group~2, we replace 
the subpath $(p_i,p_{i+1},p_{i+2})$ in $\Pi$ by the short-cut 
$p_i p_{i+2}$; refer to Figure~\ref{figchainpath}. Let $\Pi'$ denote 
the resulting path from $p$ to $q$. 

For each line segment $c_i c_{i+1}$ in group~1, we have 
$\alpha_i \geq 2 \theta$. If $c_i c_{i+1}$ is in group~2, then 
\[ |p_i p_{i+2} | \leq |p_i c_i| + |c_i c_{i+1}| + |c_{i+1} p_{i+2}| . 
\] 
It follows that 
\[ |pq|_{G(\mathcal{T})} \leq | \Pi' | \leq  
     \frac{X_1}{\sin\theta} + X_2 + \frac{X_3}{\sin(\theta/2)} .
\] 
Recall that $0 \leq X_3 \leq X_2$. This inequality is equivalent to 
\begin{eqnarray*} 
  X_2 + \frac{X_3}{\sin(\theta/2)} & \leq & 
        \frac{1}{2} 
        \left( 1 + \frac{1}{\sin(\theta/2)} \right) 
        \left( X_2+X_3 \right) \\ 
        & = & \frac{1}{2} 
              \left( 1 + \frac{1}{\sin(\theta/2)} \right) 
              \left( |pq| - X_1 \right) 
\end{eqnarray*} 
implying that 
\begin{equation}  \label{pqG} 
 |pq|_{G(\mathcal{T})} \leq 
    \left( \frac{1}{\sin\theta} - \frac{1}{2} - 
           \frac{1}{2\sin(\theta/2)} 
    \right) X_1 + 
        \frac{1}{2} \left( 1 + \frac{1}{\sin(\theta/2)} \right) |pq| . 
\end{equation} 
The function $g(\theta) = 1/\sin\theta - 1/2 - 1/(2\sin(\theta/2))$ 
is negative for $0 < \theta < \pi/3$. To prove this, using 
$\sin\theta = 2 \sin(\theta/2) \cos(\theta/2)$ and a straightforward 
calculation, we observe that $g(\theta) < 0$ if and only if 
\[ \frac{\sin\theta}{2} + \cos(\theta/2) > 1 . 
\] 
The left-hand side in the above inequality has a positive derivative 
for $0 < \theta < \pi/3$ (this can be verified using 
$\cos\theta = 1 - 2 \sin^2(\theta/2)$); thus, 
\[ \frac{\sin\theta}{2} + \cos(\theta/2) >  
   \frac{\sin 0}{2} + \cos(0/2) = 1 . 
\] 

We conclude that, since $g(\theta)<0$, the first term on the right-hand 
side in (\ref{pqG}) is non-positive, implying that 
\[ |pq|_{G(\mathcal{T})} \leq 
     \frac{1+1/\sin(\theta/2)}{2} \cdot |pq| . 
\] 
This completes the proof.  
\end{proof}

\subsection{Angle-Constrained Convex Polyhedra}   
Let $\theta$ be a real number with $0 < \theta < \pi/3$ and let $P$ be 
a convex simplicial polyhedron that is $\theta$-angle-constrained. In 
this section, we bound the ratio of the shortest-path distance 
$|pq|_{\skel(P)}$ between $p$ and $q$ in the skeleton of $P$ and the 
shortest-path distance $|pq|_{\partial P}$ between $p$ and $q$ along 
the surface of $P$.  

Let $p$ and $q$ be two distinct vertices of $P$ and consider the 
shortest path $\Pi_{pq}$ along the surface of $P$ from $p$ to $q$. 
Except for $p$ and $q$, this path does not contain any vertex of $P$; 
see Sharir and Schorr~\cite{ss-spps-86}. Let $T_1,T_2,\ldots,T_k$ be 
the sequence of faces of $P$ that this path passes through. Let 
$\mathcal{T'} = (T'_1,T'_2,\ldots,T'_k)$ be the sequence of triangles 
obtained from an edge-unfolding of the triangles $T_1,T_2,\ldots,T_k$. 
Let $p'$ and $q'$ be the vertices of $T'_1$ and $T'_k$ corresponding to 
$p$ and $q$, respectively. Sharir and Schorr~\cite{ss-spps-86} 
(see also Agarwal \emph{et al.}~\cite{aaos-supa-97}) have shown that 
\begin{itemize}
\item $\mathcal{T'}$ is a chain of triangles with respect to $p'$ and 
      $q'$, as defined in Section~\ref{secCoT}, and 
\item the path $\Pi_{pq}$ along $\partial P$ unfolds to the line segment 
      $p'q'$, i.e., $|pq|_{\partial P} = |\Pi_{pq}| = |p'q'|$. 
\end{itemize} 
Consider the graph $G(\mathcal{T'})$ that is defined by the chain 
$\mathcal{T'}$; see Section~\ref{secCoT}. Observe that $|pq|_{\skel(P)}$ 
is at most the shortest-path distance between $p$ and $q$ in the graph 
consisting of all vertices and edges of the triangles 
$T_1,T_2,\ldots,T_k$. The latter shortest-path distance is equal to 
$|p'q'|_{G(\mathcal{T'})}$. Thus, using Lemma~\ref{lemchain}, we obtain 
\begin{eqnarray*} 
 |pq|_{\skel(P)} & \leq & |p'q'|_{G(\mathcal{T'})} \\ 
   & \leq & 
    \frac{1+1/\sin(\theta/2)}{2} \cdot |p'q'| \\  
     & = & 
      \frac{1+1/\sin(\theta/2)}{2} \cdot |pq|_{\partial P} .  
\end{eqnarray*} 
We have proved the following result: 

\begin{lemma}  \label{lempartial} 
Let $\theta$ be a real number with $0 < \theta < \pi/3$ and let $P$ be 
a $\theta$-angle-constrained convex simplicial polyhedron. 
For any two distinct vertices $p$ and $q$ of $P$, we have 
\[ |pq|_{\skel(P)} \leq 
           \frac{1+1/\sin(\theta/2)}{2} \cdot|pq|_{\partial P} .  
\]
\end{lemma}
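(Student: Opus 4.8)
The plan is to reduce the statement to Lemma~\ref{lemchain} by unfolding the faces crossed by a surface geodesic. First I would fix two distinct vertices $p$ and $q$ and let $\Pi_{pq}$ be a shortest path from $p$ to $q$ along the surface $\partial P$; by the structure theorem of Sharir and Schorr~\cite{ss-spps-86}, such a geodesic meets no vertex of $P$ other than $p$ and $q$, and it crosses a well-defined sequence $T_1,T_2,\ldots,T_k$ of faces. Edge-unfolding these faces into a common plane yields congruent triangles $T'_1,\ldots,T'_k$ in which $\Pi_{pq}$ straightens out to the segment $p'q'$ joining the images $p'$ and $q'$ of $p$ and $q$; in particular $|pq|_{\partial P} = |\Pi_{pq}| = |p'q'|$.

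The next step is to observe that $\mathcal{T'} = (T'_1,\ldots,T'_k)$ is a chain of triangles with respect to $p'$ and $q'$ in the sense of Section~\ref{secCoT}. The four defining conditions are precisely the combinatorial properties of the unfolded strip established in~\cite{ss-spps-86} (see also~\cite{aaos-supa-97}): $p'$ is a vertex of $T'_1$ but not of $T'_2$ and symmetrically for $q'$ and $T'_k$; consecutive triangles share an edge and have disjoint interiors by construction of the edge-unfolding; and the segment $p'q'$, being the unfolded geodesic, passes through the interior of each $T'_i$, and in order. Moreover, since each $T'_i$ is congruent to the face $T_i$ and $P$ is $\theta$-angle-constrained, every angle in every $T'_i$ is at least $\theta$, so the hypotheses of Lemma~\ref{lemchain} are met.

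Then I would compare the three relevant distances. On one hand, $|pq|_{\skel(P)}$ is at most the shortest-path distance between $p$ and $q$ in the subgraph of $\skel(P)$ consisting only of the vertices and edges of $T_1,\ldots,T_k$. On the other hand, this subgraph is isomorphic as a weighted graph to $G(\mathcal{T'})$: the edge-unfolding identifies the vertex and edge sets and preserves all edge lengths, because congruent triangles have equal side lengths and shared edges are matched consistently. Hence that restricted shortest-path distance equals $|p'q'|_{G(\mathcal{T'})}$. Applying Lemma~\ref{lemchain} to $\mathcal{T'}$ gives $|p'q'|_{G(\mathcal{T'})} \le \frac{1 + 1/\sin(\theta/2)}{2}\,|p'q'|$, and substituting $|p'q'| = |pq|_{\partial P}$ yields the claimed inequality.

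The main obstacle I anticipate is not a computation but the careful verification that the unfolded face sequence really satisfies the chain-of-triangles axioms — in particular that the unfolded geodesic crosses the \emph{interior} of each triangle and that the geodesic cannot re-enter a face, so the indices are strictly increasing. These are standard properties of shortest paths on convex polyhedra, and I would invoke them by citing Sharir and Schorr rather than reproving them; everything else is bookkeeping about edge-length preservation under unfolding.
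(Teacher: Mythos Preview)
Your proposal is correct and follows essentially the same argument as the paper: take a surface geodesic between $p$ and $q$, invoke Sharir--Schorr to edge-unfold the crossed faces into a planar chain of triangles with the geodesic straightening to the segment $p'q'$, observe that $|pq|_{\skel(P)}$ is bounded by the shortest-path distance in the unfolded graph $G(\mathcal{T'})$, and apply Lemma~\ref{lemchain}. The paper's proof is slightly terser in that it simply cites~\cite{ss-spps-86,aaos-supa-97} for the chain-of-triangles properties rather than spelling out the four axioms, but the logic is identical.
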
 

\subsection{Angle-Constrained Convex Polyhedra in a Spherical Shell}   
We are now ready to prove the main result of Section~\ref{secACSH}: 

\begin{theorem} 
Let $r$, $R$, and $\theta$ be real numbers with $R>r>0$ and 
$0 < \theta < \pi/3$, and let $P$ be a $\theta$-angle-constrained 
convex simplicial polyhedron that contains the origin and whose boundary 
is contained in the spherical shell $\Shell_{r,R}$. Then the skeleton of 
$P$ is a $t$-spanner of the vertex set of $P$, where  
\[ t = \frac{1+1/\sin(\theta/2)}{2} 
       \left( \sqrt{(R/r)^2-1} + (R/r) \arcsin(r/R) \right) . 
\] 
\end{theorem}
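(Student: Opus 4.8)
The plan is to combine the two main tools already established: Lemma~\ref{lempartial}, which compares the skeleton distance to the surface distance for any $\theta$-angle-constrained convex simplicial polyhedron, and Theorem~\ref{thmcycle}, which bounds the geometric dilation of a convex cycle trapped in an annulus. Fix two distinct vertices $p$ and $q$ of $P$. Since $P$ is $\theta$-angle-constrained and convex simplicial, Lemma~\ref{lempartial} gives
\[ |pq|_{\skel(P)} \leq \frac{1+1/\sin(\theta/2)}{2}\cdot |pq|_{\partial P} . \]
So it suffices to prove $|pq|_{\partial P} \leq f(R/r)\cdot |pq|$, where $f(x) = \sqrt{x^2-1}+x\arcsin(1/x)$, since then the stated value of $t$ is exactly $\frac{1+1/\sin(\theta/2)}{2}\,f(R/r)$.

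To bound $|pq|_{\partial P}$ I would pass to the plane $H_{pq}$ through $p$, $q$, and the origin, and consider the convex polygon $Q_{pq} = P \cap H_{pq}$, recording three elementary facts. First, $Q_{pq}$ contains the origin in its (relative) interior, because $P$ does and $H_{pq}$ passes through it. Second, identifying $H_{pq}$ with $\IR^2$ so that the origin maps to the origin, the boundary $\partial Q_{pq}$ is a convex cycle contained in $\Ann_{r,R}$: indeed $\partial Q_{pq} \subseteq \partial P \subseteq \Shell_{r,R}$, and every point of $\Shell_{r,R} \cap H_{pq}$ has distance between $r$ and $R$ from the origin. Third, both $p$ and $q$ lie on $\partial Q_{pq}$ rather than in its relative interior: they lie in $P \cap H_{pq} = Q_{pq}$, and since $P$ is convex with the origin in its interior, the ray from the origin through $p$ (resp.\ $q$) leaves $P$ at the point $p$ (resp.\ $q$), so a two-dimensional neighbourhood of $p$ (resp.\ $q$) in $H_{pq}$ is not contained in $P$.

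Now, each edge of $Q_{pq}$ spans a face of $P$, so any walk along $\partial Q_{pq}$ from $p$ to $q$ is a walk along the surface $\partial P$ from $p$ to $q$; taking the shorter of the two arcs of $\partial Q_{pq}$ between $p$ and $q$ gives $|pq|_{\partial P} \leq |pq|_{C}$ where $C = \partial Q_{pq}$ and $|pq|_C$ is the distance used in the definition of geometric dilation. Applying Theorem~\ref{thmcycle} to the convex cycle $C$ yields
\[ |pq|_{C} \leq \Dil(C)\cdot |pq| \leq f(R/r)\cdot |pq| . \]
Chaining the three displayed inequalities gives $|pq|_{\skel(P)} \leq t\,|pq|$ with $t$ as claimed, which, since $p$ and $q$ were arbitrary, proves the theorem.

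The conceptual content is entirely in the cited results, so I expect no real obstacle. The only place needing a little care is the short list of geometric observations above — in particular verifying that $p$ and $q$ actually sit on $\partial Q_{pq}$ and that $\partial Q_{pq}$ lies in the annulus and surrounds the origin — together with the remark that a path along $\partial Q_{pq}$ is a legitimate path on $\partial P$; all of these follow from convexity of $P$ and the origin lying in its interior. The genuinely hard steps (the tight dilation bound $f(R/r)$ and the chain-of-triangles estimate) are already dispatched in Theorem~\ref{thmcycle} and Lemma~\ref{lempartial}.
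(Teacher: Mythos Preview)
Your proposal is correct and follows essentially the same route as the paper: apply Lemma~\ref{lempartial}, slice $P$ with the plane $H_{pq}$ through $p$, $q$, and the origin, bound $|pq|_{\partial P}$ by $|pq|_{\partial Q_{pq}}$, and then invoke the annulus dilation bound. The only cosmetic difference is that the paper cites Theorem~\ref{thmpolygon} (the polygon version) whereas you cite Theorem~\ref{thmcycle} directly; since $p$ and $q$ are extreme points of $P$ and hence vertices of $Q_{pq}$, either citation works, and your added verifications (origin in the interior of $Q_{pq}$, $\partial Q_{pq}\subseteq\Ann_{r,R}$, $p,q\in\partial Q_{pq}$) are exactly the checks the paper leaves implicit.
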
 
\begin{proof}  
Let $p$ and $q$ be two distinct vertices of $P$. 
By Lemma~\ref{lempartial}, we have 
\[ |pq|_{\skel(P)} \leq 
           \frac{1+1/\sin(\theta/2)}{2} \cdot |pq|_{\partial P} .  
\]
Let $H_{pq}$ be the plane through $p$, $q$, and the origin, and let 
$Q_{pq}$ be the intersection of $P$ and $H_{pq}$. Then 
\[ |pq|_{\partial P} \leq |pq|_{\partial Q_{pq}} . 
\]  
Since $Q_{pq}$ is a convex polygon satisfying the conditions of 
Theorem~\ref{thmpolygon}, we have 
\[ |pq|_{\partial Q_{pq}} \leq 
        \left( \sqrt{(R/r)^2-1} + (R/r) \arcsin(r/R) \right) \cdot |pq| . 
\] 
\end{proof} 

\section{Concluding Remarks} 
We have considered the problem of bounding the stretch factor of the 
skeleton of a convex simplicial polyhedron $P$ in $\IR^3$. If the 
vertices of $P$ are on a sphere, then this stretch factor is at 
most $0.999 \cdot \pi$, which is $\pi/2$ times the currently best known 
upper bound on the stretch factor of the Delaunay triangulation in 
$\IR^2$. We obtained this result from Xia's upper bound on the stretch 
factor of chains of disks in~\cite{x-sfdtl-13}. Observe that Xia's 
result implies an upper bound on the stretch factor of the Delaunay 
triangulation. The converse, however, may not be true, because the 
chains of disks that arise in the analysis of the Delaunay triangulation 
are much more restricted than general chains of disks; see for example 
Figure~2 in~\cite{x-sfdtl-13}. Thus, an improved upper bound on the 
stretch factor of the Delaunay triangulation may not imply an improved 
upper bound on the stretch factor of the skeleton of $P$. Nevertheless, 
we make the following conjecture: Let $t^*$ be a real number such that 
the stretch factor of any Delaunay triangulation in $\IR^2$ is at most 
$t^*$. Then the stretch factor of the skeleton of any convex polyhedron 
in $\IR^3$, all of whose vertices are on a sphere, is at most 
$t^* \cdot \pi/2$. 
 
We have shown that the skeleton of a convex simplicial polyhedron $P$ 
whose vertices are ``almost'' on a sphere may have an unbounded 
stretch factor. For the case when $P$ contains the origin, its 
boundary is contained in the spherical shell $\Shell_{r,R}$, and the 
angles in all faces are at least $\theta$, we have shown that the stretch 
factor of $P$'s skeleton is bounded from above by a function that 
depends only on $R/r$ and $\theta$. We leave as an open problem to find 
other classes of convex polyhedra whose skeletons have bounded stretch 
factor.

\section*{Acknowledgments} 
Part of this work was done at the 
\emph{Third Annual Workshop on Geometry and Graphs}, held at the 
Bellairs Research Institute in Barbados, March 8--13, 2015.  
We thank the other workshop participants for their helpful comments. 

We thank the anonymous referees for their useful comments. 
We especially thank one of the referees for simplifying the proofs of 
Lemmas~\ref{lemlocallyD} and~\ref{lemtriangle}, and for suggesting 
the use of short-cuts in the proof of Lemma~\ref{lemchain}.

\end{document}